\numberwithin{equation}{section}
\newtheorem{definition}{Definition}
\newtheorem{lemma}{Lemma}
\newtheorem{theorem}{Theorem}
\newtheorem{remark}{Remark}
\newtheorem{proposition}{Proposition}
\begin{document}

\title{Accelerating Randomized Algorithms for Low-Rank Matrix Approximation
}

\author[1]{Dandan Jiang\thanks{\texttt{jiangdd@mail.xjtu.edu.cn}}}
\author[1]{Bo Fu\thanks{Corresponding author: \texttt{311342@stu.xjtu.edu.cn}}}
\author[2]{Weiwei Xu\thanks{Corresponding author: \texttt{wwxu@nuist.edu.cn}}}

\affil[1]{School of Mathematics and Statistics, Xi'an Jiaotong University, Shaanxi, 710049, China}
\affil[2]{School of Mathematics and Statistics, Nanjing University of Information Science and Technology, Nanjing 210044, China. The Peng
Cheng Laboratory, Shenzhen 518055, China, and with the Pazhou Laboratory (Huangpu), Guangzhou 510555, China.}
\maketitle

\begin{abstract}
Randomized algorithms are overwhelming methods for low-rank approximation that can alleviate the computational expenditure with great reliability compared to deterministic algorithms. A crucial thought is generating a standard Gaussian matrix $\mathbf{G}$ and subsequently obtaining the orthonormal basis of the range of $\mathbf{AG}$ for a given matrix $\mathbf{A}$. Recently, the \texttt{farPCA} algorithm offers a framework for randomized algorithms, but the dense Gaussian matrix remains computationally expensive. Motivated by this, we introduce the standardized Bernoulli, sparse sign, and sparse Gaussian matrices to replace the standard Gaussian matrix in \texttt{farPCA} for accelerating computation. These three matrices possess a low computational expenditure in matrix-matrix multiplication and converge in distribution to a standard Gaussian matrix when multiplied by an orthogonal matrix under a mild condition. Therefore, the three corresponding proposed algorithms can serve as a superior alternative to fast adaptive randomized PCA (\texttt{farPCA}). Finally, we leverage random matrix theory (RMT) to derive a tighter error bound for \texttt{farPCA} without shifted techniques. Additionally, we extend this improved error bound to the error analysis of our three fast algorithms, ensuring that the proposed methods deliver more accurate approximations for large-scale matrices. Numerical experiments validate that the three algorithms achieve asymptotically the same performance as \texttt{farPCA} but with lower costs, offering a more efficient approach to low-rank matrix approximation.
\end{abstract}

\section{Introduction}\label{sec1}

Low-rank matrix approximation, which aims to represent a matrix using another low-rank matrix, is fundamental in matrix computations and wields a key influence in machine learning, data analysis, and computer science. Deterministic techniques involve rank-revealing QR (RRQR) decomposition \cite{chan1987rank,gu1996efficient} and partial singular value decomposition (SVD) \cite{golub2013matrix}. Due to the interaction of rank and error, for a given matrix $\mathbf{A}$, two problems are typically considered in low-rank approximation: the fixed-precision approximation problem, seeking a matrix $\mathbf{B}$ with minimal rank such that $\interleave\mathbf{A}-\mathbf{B}\interleave\leq \varepsilon$ for a particular precision tolerance $\varepsilon$  and an appropriate norm $\interleave\cdot\interleave$; the fixed-rank approximation problem, aiming to seek a matrix $\mathbf{B}$ whose error $\interleave\mathbf{A}-\mathbf{B}\interleave$ is as small as possible and whose rank $k$ is fixed.

Restricted by computing power and storage capacity, as well as challenged by their difficulty in supporting parallel processing, deterministic techniques are not applicable within modern advanced computing architectures. Therefore, the randomized algorithms developed for low-rank approximation later have higher efficiency and are appropriate for use in parallel computing \cite{feng2023fast,gu2015subspace,halko2011finding,hallman2022block,kaloorazi2018compressed,kaloorazi2018subspace,liberty2007randomized,mahoney2009cur,martinsson2016randomized,rokhlin2010randomized,yu2018efficient}. Among these randomized algorithms, the randomized SVD algorithm \cite{halko2011finding} and its modified algorithms \cite{feng2023fast,gu2015subspace,hallman2022block,kaloorazi2018compressed,kaloorazi2018subspace,rokhlin2010randomized,martinsson2016randomized,yu2018efficient} are frequently employed. The fundamental idea behind these techniques is to identify an orthonormal basis for the given matrix, which involves a random test matrix for embedding, typically chosen as a standard Gaussian matrix. Within these Gaussian-based algorithms, a randomized framework of QB approximation for fixed-precision problem, termed \texttt{randQB\_EI}, was proposed in \cite{yu2018efficient}, offering an efficient error indicator to enable the adaptive termination once the approximation error achieves the desired threshold. In \cite{hallman2022block}, an alternative algorithm termed as \texttt{randUBV} was introduced, employing block Lanczos bidiagonalization to offer a solution with superior temporal efficiency relative to \texttt{randQB\_EI}. However, \texttt{randUBV} lacks the convergence bounds when the block size is smaller than the number of repeated singular values, and it also falls short in adaptability for achieving greater accuracy due to the absence of an instrument akin to the power iteration in \texttt{randQB\_EI}. 
To improve \texttt{randQB\_EI} and further combine it with the shifted power iteration technique, the \texttt{farPCA} algorithm was presented in \cite{feng2023fast}. Compared to \texttt{randUBV}, \texttt{farPCA} demands a similar computational cost while retaining the flexibility to achieve superior approximations with additional power iterations.

The standard Gaussian matrix is widely used for its special benefits, such as rotational invariance, concentration inequalities, norm bounds of its pseudo-inverse, etc. (see \cite{gu2015subspace}), which provides a theoretical foundation for the error analysis of Gaussian-based methods. Recently, a fast randomized algorithm using Bernoulli matrices as random test matrices has been proposed, which is less costly due to the sparse properties of Bernoulli matrices \cite{xu2023fast}. 
Similar random test matrices, such as subsampled random Fourier transform \cite{woolfe2008fast} and the sparse sign matrix \cite{martinsson2020randomized,xie2023sketchne}, are also available to speed up matrix multiplications. 
However, these matrices lack various advantageous properties of Gaussian matrices, making error analysis more challenging and limiting performance when overly sparse. 
To enhance performance and ensure robust theoretical analysis, we propose some modified test matrices to accelerate computations.

In this paper, we concentrate exclusively on the fixed-precision approximation problem and compare the performances of algorithms resulting from different random test matrices. Notably, when the input target matrix is sparse, leveraging these sparse random test matrices for acceleration yields minimal benefit. Therefore, this paper primarily focuses on accelerating approximation for dense matrices.

Our main contribution is the development of acceleration algorithms along with their corresponding theoretical analysis. First, the asymptotic behavior of the standardized Bernoulli matrix is analyzed when multiplied by an orthogonal matrix assuming a mild condition (condition \eqref{eq3.1}), where this condition can be interpreted as compensating for the trade-off between computational efficiency and stability. Second, analogous properties are exhibited in the sparse sign matrix, the sparse Gaussian matrix, and further in a series of random matrices whose elements are $i.i.d.$ with mean $0$ and variance $1$. Consequently, a set of corresponding acceleration algorithms is introduced, each utilizing different random test matrices: \texttt{farPCA} based on the standardized Bernoulli matrix (\texttt{farPCASB}), the sparse sign matrix (\texttt{farPCASS}), and the sparse Gaussian matrix (\texttt{farPCASG}). Under condition \eqref{eq3.1}, these algorithms achieve asymptotically the same performance as \texttt{farPCA} but with lower computational costs in theory. Numerical experiments corroborate the theoretical insights. Third, for the error analysis of the three algorithms, we first derive asymptotic norm bounds for the pseudo-inverse of the standard Gaussian matrix predicated on the random matrix theory (RMT) and elucidate an asymptotic error analysis for \texttt{farPCA} by employing these norm bounds, then the error analysis in \cite{halko2011finding} and asymptotic error analysis for \texttt{farPCA} can be reamed to the three acceleration algorithms. The asymptotic error analysis yields more precise error bounds, which enables us to manage the approximation error when dealing with large matrices effectively. Notably, similar technique can also be employed to improve other Gaussian-based random algorithms in \cite{gu2015subspace,halko2011finding,hallman2022block,kaloorazi2018compressed,kaloorazi2018subspace,martinsson2016randomized,rokhlin2010randomized,yu2018efficient}.

The rest of this paper is structured as follows. In section \ref{sec2}, we expound on some randomized algorithms and introduce some random test matrices for acceleration. Section \ref{sec3} demonstrates the theoretical analysis for a set of random test matrices and their corresponding algorithms for the fixed-precision problem. The error analysis is exhibited in section \ref{sec4}. We devise some numerical experiments in section \ref{sec5}, and deliver our conclusions along with avenues for future exploration in section \ref{sec6}.

\subsection{Notation}\label{sec1.1}
For the convenience of discussion, we assume throughout this paper that all matrices are real and let $\mathbb{R}^{m\times n}$ be the family of $m\times n$ real matrices. Consider a matrix $\mathbf{A}\in \mathbb{R}^{m\times n}$ ($m\geq n$), where its singular values $\{\sigma_{i}\}_{i=1}^{n}$ are arranged in non-increasing order with $\sigma_1\geq\cdots\geq\sigma_{n}\geq 0$. We use $\|\cdot\|$ and $\|\cdot\|_\mathrm{F}$ to depict the spectral and Frobenius norms of a specific matrix, respectively. We also use $\|\cdot\|$ to depict the Euclidean norm of a given vector. In statistical notation, we use $\mathbb{P}$ for probability, $\mathbb{E}$ for expectation, $\mbox{Var}$ for variance and $\mbox{Cov}$ for covariance, respectively.

\section{Technical preliminaries}\label{sec2}
This section provides an overview of some randomized algorithms that will be important throughout this paper. For an $m\times n$ matrix $\mathbf{A}$, one fundamental idea in randomized technique involves using random projections to approximate the principal subspace of $\mathbf{A}$ that contains its essential information. Specifically, the goal is to decompose $\mathbf{A}$ as $\mathbf{A}\approx\mathbf{Q}\mathbf{C},$ where $\mathbf{Q}\in\mathbb{R}^{m\times l}$ is a column orthonormal matrix, $\mathbf{C}=\mathbf{Q}^\top\mathbf{A}\in\mathbb{R}^{l\times n}$ is an upper triangular matrix, and $l$ is the desired rank or numerical rank of $\mathbf{A}$ to achieve the desired accuracy.

\subsection{Overview of the \texttt{farPCA} algorithm}\label{sec2.1}

A prototype randomized framework for constructing the approximation $\mathbf{A}\approx\mathbf{Q}\mathbf{Q}^\top \mathbf{A}=\mathbf{Q}\mathbf{C}$, called \texttt{randQB}, was conveyed in \cite[Algorithm 4.3]{halko2011finding}. In \texttt{randQB}, to seek the orthonormal basis for the range of $\mathbf{A}$ with rank $k$, randomized algorithms frequently capitalize on an $n\times (k+h)$ random test matrix $\mathbf{\Phi}$ and find the orthonormal basis for the range of $(\mathbf{A}\mathbf{A}^\top)^P\mathbf{A}\mathbf{\Phi}$. Typically, $\mathbf{\Phi}$ is chosen as a standard Gaussian matrix $\mathbf{G}$. Here the oversampling parameter $h$ is conventionally exploited to enhance performance while spanning the needed subspace. The power parameter $P$ is leveraged to augment performance, as it results in a sharper drop in the singular values of $(\mathbf{A}\mathbf{A}^\top)^P\mathbf{A}$ compared to those of $\mathbf{A}$. After producing the low-rank decomposition $\mathbf{A}\approx \mathbf{Q}\mathbf{C}$, the next is to construct a standard decomposition. For example, the direct SVD approach \cite[Algorithm 5.1]{halko2011finding} to form an approximate SVD $\mathbf{A}\approx\mathbf{U}\mathbf{\Sigma}\mathbf{V}^\top$ can be described as follows.

\begin{itemize}
    \item Step $1$: Compute an SVD of $\mathbf{C}$: $\mathbf{C}=\mathbf{\tilde{U}}\mathbf{\Sigma}\mathbf{V}^\top$.
    \item Step $2$: $\mathbf{U}=\mathbf{Q}\mathbf{\tilde{U}}$.
\end{itemize}

For the fixed-precision approximation problem, the approximation error is $\interleave(\mathbf{I}-\mathbf{Q}\mathbf{Q}^\top)\mathbf{A}\interleave$, where $\interleave\cdot\interleave$ denotes spectral norm or Frobenius norm. Then the objective of the fixed-precision approximation problem is to terminate the algorithm once the approximation error reaches a desired accuracy $\varepsilon$. In \cite{yu2018efficient}, the \texttt{randQB\_EI} algorithm was introduced, capitalizing on the blocked Gram-Schmidt procedure and an error indicator predicated on the assertion that $\|\mathbf{A}-\mathbf{Q}\mathbf{C}\|_\mathrm{F}^2=\|\mathbf{A}\|_\mathrm{F}^2-\|\mathbf{C}\|_\mathrm{F}^2$. This serves as a variation to \texttt{randQB}. The authors further substantiate that this indicator works well in double-precision floating arithmetic when the desired accuracy $\varepsilon>2.1\times 10^{-7}\|\mathbf{A}\|_\mathrm{F}$. The \texttt{randQB\_EI} algorithm with power iteration is detailed in \cite[Algorithm 2.1]{hallman2022block}.

In \cite{feng2023fast}, the \texttt{farPCA} algorithm was proposed by exploiting the shifted power iteration for superior accuracy and replacing QR decomposition in \texttt{randQB\_EI} with some matrix skills to augment parallel efficiency, as delineated in Algorithm \ref{alg1}. The \texttt{farPCA} is essentially equivalent to the integration of \texttt{randQB\_EI} and the direct SVD when the power parameter $P=0,1,2$, except that the singular values produced by Algorithm \ref{alg1} are sorted in ascending order while those from the direct SVD are in descending order. For $P>2$, the shifted parameter $\alpha$ is updated to enhance accuracy. Step 5 involves orthonormalizing $\mathbf{A}\mathbf{G}_j$ with respect to $\mathbf{Q}$, where $\mathbf{Q}$ is the already attained orthonormal basis for the range of $\mathbf{A}$. The $\texttt{eigSVD}$ algorithm from \cite{feng2023fast} is applied to compute the economic SVD efficiently.

\begin{algorithm}[htbp]
	\caption{\texttt{farPCA} with shifted power iteration \cite[Algorithm 5]{feng2023fast}}\label{alg1}
 \begin{algorithmic}[1]
	\REQUIRE{$\mathbf{A}\in \mathbb{R}^{m\times n}$, tolerance $\varepsilon$, power parameter $P$, block size $b$.}

\ENSURE{The SVD decomposition $\mathbf{A}\approx\mathbf{U}\mathbf{\Sigma}\mathbf{V}^\top$ satisfying $\|\mathbf{A}-\mathbf{U}\mathbf{\Sigma}\mathbf{V}^\top\|_\mathrm{F}\leq \varepsilon$.}

\STATE{$\mathbf{Y}=[~]$, $\mathbf{W}=[~]$, $\mathbf{Z}=[~]$. $E=\|\mathbf{A}\|_\mathrm{F}^2.$}

\FOR{$j=1,2,3,\cdots$}

\STATE{Generate an $n\times b$ standard Gaussian matrix $\mathbf{G}_j$, $\alpha=0$.}

\FOR{$k=1,2,\cdots,P$}

\STATE{$\mathbf{W}_j=\mathbf{A}^\top(\mathbf{A}\mathbf{G}_j)-\mathbf{W}\left(\mathbf{Z}^{-1}(\mathbf{W}^\top\mathbf{G}_j)\right)-\alpha\mathbf{G}_j$.}

\STATE{$[\mathbf{G}_j,\hat{\mathbf{\Sigma}},\sim]=\texttt{eigSVD}(\mathbf{W}_j)$}

\STATE{{\bf if} ($k>1$ and $\alpha<\hat{\mathbf{\Sigma}}(b,b)$) {\bf then } $\alpha={(\hat{\mathbf{\Sigma}}(b,b)+\alpha)}/{2}$.}

\ENDFOR

\STATE{$\mathbf{Y}_j=\mathbf{A}\mathbf{G}_j,\mathbf{W}_j=\mathbf{A}^\top\mathbf{Y}_j$, $\mathbf{Y}=[\mathbf{Y},\mathbf{Y}_{j}],\mathbf{W}=[\mathbf{W},\mathbf{W}_{j}]$.}


\STATE{$\mathbf{Z}=\mathbf{Y}^\top\mathbf{Y},\mathbf{T}=\mathbf{W}^\top\mathbf{W}$. $E=E-\mbox{tr}(\mathbf{T}\mathbf{Z}^{-1})$.}

\STATE{{\bf if} $E < \varepsilon^2$ {\bf then stop}.}

\ENDFOR

\STATE{Perform an eigenvalue decomposition: $[\hat{\mathbf{V}},\hat{\mathbf{S}}]=\text{eig}(\mathbf{Z})$. Compute $\mathbf{D}=\hat{\mathbf{V}}\hat{\mathbf{S}}^{-1/2}$.}

\STATE{Perform an eigenvalue decomposition: $[\tilde{\mathbf{V}},\tilde{\mathbf{S}}]=\text{eig}(\mathbf{D}^\top\mathbf{T}\mathbf{D})$. Compute $\mathbf{\Sigma}=\tilde{\mathbf{S}}^{1/2}$.}

\STATE{$\mathbf{U}=\mathbf{Y}\mathbf{D}\tilde{\mathbf{V}}$, $\mathbf{V}=\mathbf{W}\mathbf{D}\tilde{\mathbf{V}}\mathbf{\Sigma}^{-1}$.}

\end{algorithmic}
\end{algorithm}

Suppose that the multiplication of two dense matrices of sizes $m\times n$ and $n\times l$ incurs a computational cost of $C_{mm}mnl$ floating-point operations (flops), where $C_{mm}$ is a constant number, following the conventions used in \cite{martinsson2016randomized,yu2018efficient}. Assuming that the output $\mathbf{U}$ in Algorithm \ref{alg1} has $l$ columns, with $l$ being significantly smaller than $\min(m,n)$, the predominant cost of Algorithm \ref{alg1} is
\begin{equation}
    \begin{aligned}
        T_{Alg2.1}\approx 2C_{mm}mnl+\frac{3}{2}C_{mm}(m+n)l^2+P\left(C_{mm}\left(2mnl+nl^2+2nlb\right)\right).\notag
    \end{aligned}
\end{equation}

\subsection{Improvement of \texttt{farPCA}}

Instead of relying on the standard Gaussian matrix, some other random test matrices are utilized to accelerate computation, such as the Bernoulli matrix \cite{xu2023fast}, and the sparse sign matrix \cite{benjamin2017compressed,li2006very,martinsson2020randomized,tropp2019streaming,xie2023sketchne}. We describe the Bernoulli matrix and the sparse sign matrix in the following two definitions.
\begin{definition}[Bernoulli matrix \cite{huang2020rank,xu2023fast}]\label{def2.1}
    A matrix $\mathbf{B}\in\mathbb{R}^{n\times l}$ is called a Bernoulli matrix with parameter $p\in(0,1)$ if its entries $b_{ij}$ are independent and identically distributed ($i.i.d.$) drawn from Bernoulli distribution with parameter $p$, i.e. $\mathbb{P}(b_{ij}=1)=p$, $\mathbb{P}(b_{ij}=0)=1-p$.
\end{definition}

\begin{definition}[Sparse sign matrix \cite{benjamin2017compressed,li2006very}]\label{def2.2}
    A matrix $\mathbf{\Xi}\in\mathbb{R}^{n\times l}$ is called a sparse sign matrix with parameter $p\in(0,1)$ if its entries $\xi_{ij}$ are $i.i.d.$ drawn from the distribution: $\mathbb{P}(\xi_{ij}=1/\sqrt{p})=p/2$, $\mathbb{P}(\xi_{ij}=0)=1-p$, $\mathbb{P}(\xi_{ij}=-1/\sqrt{p})=p/2$.
\end{definition}

Another way to define the sparse sign matrix appears in \cite{xie2023sketchne,martinsson2020randomized,tropp2019streaming}, in which the crucial thought involves fixing the number of nonzero elements in each column. Overall, the two matrices serve as an alternative to the standard Gaussian matrix due to their sparsity. To enhance performance and provide guarantees for theoretical analysis, we now define the standardized Bernoulli matrix below. 

\begin{definition}[Standardized Bernoulli matrix]\label{def2.3}
Let $\mathbf{B}=(b_{ij})\in \mathbb{R}^{n\times l}$ be a Bernoulli matrix with parameter $p\in(0,1)$, and $\mathbf{\Omega}=(\omega_{ij})\in \mathbb{R}^{n\times l}$, where {\small$\omega_{ij}=(b_{ij}-p)/\sqrt{p(1-p)}$}. Then $\mathbf{\Omega}$ is a standardized Bernoulli matrix with parameter $p$.
\end{definition}

Notably, the standardized Bernoulli matrix includes the Rademacher (sign) matrix \cite{clarkson2009numerical} as a special case when $p=0.5$, which typically behaves like a standard Gaussian matrix \cite{tropp2017practical}. Additionally, the standardized Bernoulli matrix is a linear combination of the Bernoulli matrix and a rank-$1$ matrix, indicating that we can leverage the sparsity of the Bernoulli matrix to accelerate computation akin to the sparse sign matrix \cite{benjamin2017compressed,li2006very,martinsson2020randomized,tropp2019streaming,xie2023sketchne}. Furthermore, another matrix for better stability relative to the sparse sign matrix, termed sparse Gaussian matrix, is introduced below. Although mentioned in \cite{zhang2020faster}, it has not been applied to low-rank approximation. The sparse Gaussian matrix can capture more information than the sparse sign matrix, as its non-zero elements are drawn from the normal distribution, rather than being restricted to $\pm1/\sqrt{p}$.

\begin{definition}[Sparse Gaussian matrix \cite{zhang2020faster}]\label{def2.4}
    A matrix $\mathbf{\Psi}$ is called the sparse Gaussian matrix with parameter $p\in(0,1)$ if $\mathbf{\Psi}=\mathbf{B}\circ\mathbf{G}/\sqrt{p}$, where $\mathbf{B}$ is a Bernoulli matrix with parameter $p$, $\mathbf{G}$ is a standard Gaussian matrix, and $\circ$ denotes the hadamard product of matrices.
\end{definition}

\section{Low-rank matrix approximation based on sparse matrices}\label{sec3}

\subsection{Theoretical analysis for the standardized Bernoulli matrix and the \texttt{farPCASB} algorithm}\label{sec3.1}

In this section, our proposal is to create a standardized Bernoulli matrix with 
a small 
parameter $p$, leveraging the sparsity to accelerate computation while retaining the statistical characteristics of a standard Gaussian matrix. We analyze the convergence of the standardized Bernoulli matrix and subsequently develop the \texttt{farPCASB} algorithm. To prove the convergence of the standardized Bernoulli matrix, we will present the Lyapunov central limit theorem below (see \cite{billingsley2017probability,durrett2019probability}).

\begin{lemma}[Lyapunov central limit theorem]\label{lemma3.1}
    Let $\{X_1,\cdots,X_n,\cdots\}$ be a sequence of independent random variables. Assume the expected values $\mathbb{E}(X_i)=\mu_i$ and variances $\mbox{Var}(X_i)=\sigma_i^2$, where $\sigma_i^2$'s are finite for $i=1,2,\cdots,n,\cdots.$ Also let $s_n^2:=\sum_{k=1}^{n}\sigma_k^2$, if there exists $\delta>0$ such that the Lyapunov’s condition
    $$\lim_{n\to\infty}\frac{1}{s_n^{2+\delta}}\sum_{k=1}^{n}\mathbb{E}\left(\vert X_i-\mu_i\vert^{2+\delta}\right)=0$$
    is satisfied, then 
    $\sum_{k=1}^{n}(X_i-\mu_i)/{s_n}\mathop{\to}\limits^{d}\mathbb{N}(0,1)\text{ as }n\to\infty$,
    where $\mathop{\to}\limits^{d}$ indicates converge in distribution and $\mathbb{N}(0,1)$ signifies the standard normal distribution.
\end{lemma}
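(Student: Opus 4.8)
The statement is the classical Lyapunov central limit theorem, so in the paper itself the cleanest course is to invoke the cited references \cite{billingsley2017probability,durrett2019probability}; for completeness I sketch the standard self-contained route, which proceeds by reducing to the Lindeberg--Feller CLT and then proving the latter via characteristic functions. The first step is to show that Lyapunov's condition implies Lindeberg's condition, namely that for every $\varepsilon>0$
$$\lim_{n\to\infty}\frac{1}{s_n^2}\sum_{k=1}^{n}\mathbb{E}\!\left[(X_k-\mu_k)^2\,\mathbf{1}_{\{|X_k-\mu_k|>\varepsilon s_n\}}\right]=0.$$
Indeed, on the event $\{|X_k-\mu_k|>\varepsilon s_n\}$ one has $(X_k-\mu_k)^2\le |X_k-\mu_k|^{2+\delta}/(\varepsilon s_n)^{\delta}$, so the displayed quantity is bounded above by $\varepsilon^{-\delta}s_n^{-(2+\delta)}\sum_{k=1}^{n}\mathbb{E}|X_k-\mu_k|^{2+\delta}$, which tends to $0$ by hypothesis. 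As a byproduct this also yields the uniform infinitesimality $\max_{k\le n}\sigma_k^2/s_n^2\to 0$.

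The second step is to prove the Lindeberg--Feller CLT. Write $Y_k:=X_k-\mu_k$ and $S_n:=s_n^{-1}\sum_{k=1}^{n}Y_k$, so that the characteristic function factorizes as $\varphi_{S_n}(t)=\prod_{k=1}^{n}\varphi_{Y_k}(t/s_n)$. Using $\mathbb{E}Y_k=0$, $\mathbb{E}Y_k^2=\sigma_k^2$ and the elementary estimate $|e^{ix}-1-ix+x^2/2|\le\min(|x|^2,|x|^3/6)$, split each expectation over $\{|Y_k|\le\varepsilon s_n\}$ and its complement to obtain, after summing over $k$,
$$\sum_{k=1}^{n}\left|\varphi_{Y_k}(t/s_n)-1+\frac{t^2\sigma_k^2}{2s_n^2}\right|\le \frac{\varepsilon|t|^3}{6}+t^2\cdot\frac{1}{s_n^2}\sum_{k=1}^{n}\mathbb{E}\!\left[Y_k^2\,\mathbf{1}_{\{|Y_k|>\varepsilon s_n\}}\right],$$
whose right-hand side goes to $0$ upon letting $n\to\infty$ and then $\varepsilon\to 0$, by the Lindeberg condition established above. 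Combining this with the product-comparison inequality $|\prod_k a_k-\prod_k b_k|\le\sum_k|a_k-b_k|$ for numbers in the closed unit disk, together with the standard bound $|\prod_{k=1}^{n}(1-t^2\sigma_k^2/(2s_n^2))-e^{-t^2/2}|\to 0$ (which uses $\sum_k\sigma_k^2/s_n^2=1$ and the infinitesimality $\max_k\sigma_k^2/s_n^2\to 0$), we conclude $\varphi_{S_n}(t)\to e^{-t^2/2}$ for every real $t$. Since $e^{-t^2/2}$ is the characteristic function of $\mathbb{N}(0,1)$ and is continuous at $0$, Lévy's continuity theorem gives $\sum_{k=1}^{n}(X_k-\mu_k)/s_n\mathop{\to}\limits^{d}\mathbb{N}(0,1)$.

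The main obstacle is the bookkeeping in the second step: one needs a remainder bound for each characteristic factor $\varphi_{Y_k}(t/s_n)$ that is simultaneously summable to $o(1)$ and uniform over the relevant range of $t$, and it is precisely the Lindeberg truncation at level $\varepsilon s_n$ that makes both requirements compatible. Everything else — the implication from Lyapunov to Lindeberg, the telescoping product estimate, and the appeal to Lévy's theorem — is routine, which is why for the purposes of this paper it suffices to cite the textbook references.
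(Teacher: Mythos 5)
Your proposal is correct: the paper itself states Lemma \ref{lemma3.1} without proof, citing \cite{billingsley2017probability,durrett2019probability}, and your sketch (Lyapunov $\Rightarrow$ Lindeberg via the truncation bound $(X_k-\mu_k)^2\le |X_k-\mu_k|^{2+\delta}/(\varepsilon s_n)^{\delta}$, then the Lindeberg--Feller CLT through characteristic functions and L\'evy's continuity theorem) is exactly the standard textbook argument those references give. So you are taking essentially the same route as the paper, just spelled out rather than cited.
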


Given a matrix $\mathbf{A}\in\mathbb{R}^{m\times n}$ ($m\geq n$) with SVD: $\mathbf{A}=\mathbf{U}_A\mathbf{\Sigma}_A\mathbf{V}_A^\top$, where $\mathbf{U}_A\in\mathbb{R}^{m\times n},\mathbf{V}_A\in\mathbb{R}^{n\times n}$ are orthogonal matrices and $\mathbf{\Sigma}_A=\mbox{diag}(\sigma_1,\cdots,\sigma_{n})$ with $\sigma_1\geq\cdots\geq\sigma_{n}\geq 0$. Denote $\mathbf{\Omega}$ as the standardized Bernoulli matrix. The succeeding Theorem illustrates that as $n\to\infty$, $\mathbf{V}_A^\top\mathbf{\Omega}$ converges to a standard Gaussian matrix in distribution, assuming a mild condition is met. This guarantees that we can substitute $\mathbf{\Omega}$ for the standard Gaussian matrix in \texttt{farPCA}, which can leverage the merits of both the sparsity and the properties of the standard Gaussian matrix.

\begin{theorem}\label{thm3.2}
    Consider $\mathbf{\Omega}=(\omega_{ij})\in \mathbb{R}^{n\times l}$ as a standardized Bernoulli matrix with parameter $p\in(0,1)$. Let 
    $\mathbf{U}=(u_{ij})\in \mathbb{R}^{n\times n}$ be an orthogonal matrix satisfying the condition
    \begin{align}\label{eq3.1}
        \max\limits_{i,j=1,\cdots,n}\vert u_{ij}\vert\leq\varepsilon_n, \text{ where } \varepsilon_n\to0 \text{ as } n\to\infty,
    \end{align}
    then $\mathbf{U}\mathbf{\Omega}$ converges in distribution to a standard Gaussian matrix as $n\to\infty$. Furthermore, the Kolmogorov distance between the distribution of each element $(\mathbf{U}\mathbf{\Omega})_{ij}$ in $\mathbf{U}\mathbf{\Omega}$ and the standard normal distribution $\phi(x)$ is given by:
    \begin{align}\label{eq3.2}
        \sup\limits_{x}\vert \mathbb{P}((\mathbf{U}\mathbf{\Omega})_{ij}\leq x)-\phi(x)\vert=O\left(\sum_{k=1}^{n}|u_{ik}|^3\right).
    \end{align}
\end{theorem}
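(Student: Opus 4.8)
The plan is to prove this entry-by-entry: fix indices $i, j$ and show that the single random variable $(\mathbf{U}\mathbf{\Omega})_{ij} = \sum_{k=1}^n u_{ik}\omega_{kj}$ converges in distribution to $\mathbb{N}(0,1)$, then assemble the entries into a joint statement. Each $\omega_{kj} = (b_{kj}-p)/\sqrt{p(1-p)}$ has mean $0$ and variance $1$, and the $\omega_{kj}$ are independent across $k$ (for fixed $j$). So $(\mathbf{U}\mathbf{\Omega})_{ij}$ is a weighted sum of independent, mean-zero, unit-variance random variables with weights $u_{ik}$. Because $\mathbf{U}$ is orthogonal, $\sum_{k=1}^n u_{ik}^2 = 1$, so the variance of $(\mathbf{U}\mathbf{\Omega})_{ij}$ is exactly $1$ — this is why no normalization is needed and why $s_n^2 = 1$ in the notation of Lemma~\ref{lemma3.1}.

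First I would set $X_k := u_{ik}\omega_{kj}$, so $\mu_k = 0$, $\sigma_k^2 = u_{ik}^2$, and $s_n^2 = \sum_k u_{ik}^2 = 1$. Then I would verify Lyapunov's condition with $\delta = 1$: I need $\sum_{k=1}^n \mathbb{E}|X_k|^3 \to 0$. Since $\mathbb{E}|\omega_{kj}|^3$ is a finite constant $c_p$ depending only on $p$ (indeed $\mathbb{E}|\omega_{kj}|^3 = \frac{(1-p)^3 + p^3}{(p(1-p))^{3/2}}\cdot p(1-p)$-type bounded quantity), we get $\sum_k \mathbb{E}|X_k|^3 = c_p \sum_k |u_{ik}|^3 \le c_p \big(\max_k |u_{ik}|\big) \sum_k u_{ik}^2 = c_p \max_k |u_{ik}| \le c_p\,\varepsilon_n \to 0$ by condition~\eqref{eq3.1}. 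Hence Lemma~\ref{lemma3.1} applies and $(\mathbf{U}\mathbf{\Omega})_{ij} \xrightarrow{d} \mathbb{N}(0,1)$. For the joint/matrix convergence claim, I would note that any fixed finite collection of entries $(\mathbf{U}\mathbf{\Omega})_{i_1 j_1}, \dots$ forms a random vector; entries in different columns are independent outright, and entries in the same column are jointly a linear image under (rows of) the orthogonal $\mathbf{U}$ of the i.i.d. vector $(\omega_{1j},\dots,\omega_{nj})$, so a multivariate Lyapunov/Lindeberg CLT (or the Cramér–Wold device combined with the one-dimensional argument just given, checking the Lyapunov condition for arbitrary linear combinations) gives joint convergence to the appropriate Gaussian, whose covariance is the identity because $\mathbf{U}\mathbf{U}^\top = \mathbf{I}$. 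This establishes convergence in distribution to a standard Gaussian matrix.

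For the quantitative Kolmogorov-distance bound~\eqref{eq3.2}, I would invoke the Berry–Esseen theorem for sums of independent (non-identically distributed) random variables: for $S = \sum_{k} X_k$ with $\mathbb{E}S = 0$, $\mathrm{Var}(S) = 1$, one has $\sup_x |\mathbb{P}(S \le x) - \phi(x)| \le C \sum_k \mathbb{E}|X_k|^3$ for an absolute constant $C$. Plugging in $X_k = u_{ik}\omega_{kj}$ gives $\sum_k \mathbb{E}|X_k|^3 = c_p \sum_k |u_{ik}|^3$, which is exactly the claimed $O\big(\sum_k |u_{ik}|^3\big)$ rate (with the implied constant absorbing $c_p$ and the Berry–Esseen constant). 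I would remark that $\sum_k |u_{ik}|^3 \le \varepsilon_n \to 0$, so the rate is consistent with the convergence statement.

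**The main obstacle** is not any single step — each is a textbook CLT application — but rather being careful about \emph{what} "converges in distribution to a standard Gaussian matrix" means and proving the right version of it: entrywise convergence does not by itself yield joint convergence, so the cleanest route is the Cramér–Wold reduction (show every fixed linear combination $\sum_{i,j} t_{ij}(\mathbf{U}\mathbf{\Omega})_{ij}$ is asymptotically $\mathbb{N}(0, \sum t_{ij}^2)$ via Lyapunov, using that such a combination is again a sum of independent terms across $k$ whose weights still satisfy a vanishing-max-to-sum-of-squares ratio because $\mathbf{U}$ is orthogonal). A secondary point of care is confirming the third absolute moment $\mathbb{E}|\omega_{kj}|^3$ is genuinely bounded uniformly in $k$ (it is, for any fixed $p$), so that the constant $c_p$ is legitimate; if one wanted $p = p_n \to 0$ one would have to track the growth of $c_{p_n}$ against $\varepsilon_n$, but for fixed $p$ this is immediate.
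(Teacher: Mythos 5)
Your proposal is correct and follows essentially the same route as the paper: set $X_k = u_{ik}\omega_{kj}$, use orthogonality to get $s_n^2=\sum_k u_{ik}^2=1$, verify Lyapunov's condition via $\max_k|u_{ik}|\le\varepsilon_n\to0$ (the paper checks it for arbitrary $\delta>0$ by bounding $|X_k|$, you take $\delta=1$; both work), and obtain \eqref{eq3.2} from the Berry--Esseen theorem with $\mathbb{E}|X_k|^3=|u_{ik}|^3\,\mathbb{E}|\omega_{kj}|^3$ bounded for fixed $p$.

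The one place you diverge is the passage from entrywise to joint (matrix) convergence. The paper computes covariances: entries of $\mathbf{U}\mathbf{\Omega}$ in different columns involve independent columns of $\mathbf{\Omega}$, and entries in the same column are uncorrelated because $\mathbf{U}^{(i)}\mathbf{I}\,{\mathbf{U}^{(k)}}^{\top}=0$ for $i\neq k$; it then asserts asymptotic independence ``in the case of a normal limiting distribution.'' You instead flag, correctly, that marginal convergence plus uncorrelatedness does not by itself give joint Gaussianity, and you propose the Cram\'er--Wold device: any fixed linear combination $\sum_{i,j}t_{ij}(\mathbf{U}\mathbf{\Omega})_{ij}$ is again a sum of independent terms across $k$ (and across columns $j$), whose weights inherit a vanishing max-to-variance ratio from condition \eqref{eq3.1} and whose limiting variance is $\sum t_{ij}^2$ by $\mathbf{U}\mathbf{U}^\top=\mathbf{I}$. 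This is a more rigorous treatment of the joint-convergence step than the paper's, at the cost of one extra Lyapunov verification for linear combinations; otherwise the two arguments coincide.
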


\begin{proof}[Proof of Theorem \ref{thm3.2}]
    Consider $\mathbf{B}=(b_{ij})\in\mathbb{R}^{n\times l}$ as a Bernoulli matrix with parameter $p\in(0,1)$. 
Then $\omega_{ij}$'s are $i.i.d.$  with $\mathbb{E}(\omega_{ij})=\mathbb{E}((b_{ij}-p)/\sqrt{p(1-p)})=0,\mbox{Var}(\omega_{ij})=\mbox{Var}(b_{ij})/(p(1-p))=1$. Let $\mathbf{\Omega}_1,\cdots,\mathbf{\Omega}_l$ represent the columns of $\mathbf{\Omega}$, and $\mathbf{U}^{(1)},\cdots,\mathbf{U}^{(n)}$ be the rows of $\mathbf{U}$, respectively. Thus the $(i,j)$th element of $\mathbf{U}\mathbf{\Omega}$ is $\mathbf{U}^{(i)}\mathbf{\Omega}_j=\sum_{k=1}^{n}u_{ik}\omega_{kj}.$
For each $k$, we have $\mathbb{E}(u_{ik}\omega_{kj})=u_{ik}\mathbb{E}(\omega_{kj})=0,\mbox{Var}(u_{ik}\omega_{kj})=u_{ik}^{2}\mbox{Var}(\omega_{kj})=u_{ik}^{2}.$
Let $X_k=u_{ik}\omega_{kj}$, then $\mathbb{E}(X_k)=0,\mbox{Var}(X_k)=u_{ik}^{2},s_n^2=\sum_{k=1}^{n}u_{ik}^{2}=1.$
Therefore, for any $\delta>0$,
    \begin{align}
        &\lim_{n\to\infty}\frac{1}{s_n^{2+\delta}}\sum_{k=1}^{n}\mathbb{E}\left(\vert X_k-\mu_k\vert^{2+\delta}\right)=\lim_{n\to\infty}\sum_{k=1}^{n}\mathbb{E}\left(\vert X_k\vert^{2+\delta}\right)\notag\\
        \leq&\lim_{n\to\infty}\sum_{k=1}^{n}\mathbb{E}\left(\vert X_k\vert^{2}\right)\left(\max_h\{|X_h|\}\right)^{\delta}     \leq\lim_{n\to\infty}\left(\varepsilon_n\max\left\{\sqrt{\frac{1-p}{p}},\sqrt{\frac{p}{1-p}}\right\}\right)^\delta=0,\notag
    \end{align}
because $\vert X_h\vert\in\{\sqrt{(1-p)/p}u_{ih},\sqrt{p/(1-p)}u_{ih}\}$ according to the definition of $\omega_{hj}$. Consequently, $\max\limits_h\{X_h\}\leq\max\limits_h\{|u_{ih}|\}\max\{\sqrt{(1-p)/p},\sqrt{p/(1-p)}\}$. As the maximum element value of $\mathbf{U}$ converges to $0$ as $n\to\infty$, the Lyapunov condition holds uniformly for all $i$ and $j$. By Lemma \ref{lemma3.1}, it follows that  $\mathbf{U}^{(i)}\mathbf{\Omega}_j\mathop{\to}\limits^{d}\mathbb{N}(0,1)\text{ as }n\to\infty$ for $i=1,\cdots,n;~j=1,\cdots,l$.

Now we demonstrate that the elements of $\mathbf{U}\mathbf{\Omega}$ are asymptotically independent as $n\to\infty$. 
Since  $\mathbb{E}(\mathbf{\Omega}_i)=\mathbf{0}_n$, $\mbox{Var}(\mathbf{\Omega}_i)=\mathbf{I}$, and $\mbox{Cov}(\mathbf{\Omega}_i,\mathbf{\Omega}_j)=\mathbb{E}(\mathbf{\Omega}_i\mathbf{\Omega}_j^\top)=\mathbf{0}_{n\times n}$ for $i,j=1,\cdots,l$ and $i\neq j$. Here $\mathbf{0}_n$ signifies an $n$-dimensional vector with all elements being $0$, while $\mathbf{0}_{n\times n}$ and $\mathbf{I}$ depict the $n\times n$ all-zero matrix and the $n\times n$ identity matrix, respectively. Considering the covariance between $(i,j)$th and $(k,l)$th element of $\mathbf{U}\mathbf{\Omega}$, where $(i,j)\neq(k,l)$, we have 
    \begin{align}
        \mbox{Cov}(\mathbf{U}^{(i)}\mathbf{\Omega}_j,\mathbf{U}^{(k)}\mathbf{\Omega}_l)&=\mathbb{E}(\mathbf{U}^{(i)}\mathbf{\Omega}_j \mathbf{\Omega}_l^\top \mathbf{U}_k^\top)=\mathbf{U}^{(i)}\mathbb{E}(\mathbf{\Omega}_j\mathbf{\Omega}_l^\top) {\mathbf{U}^{(k)}}^\top\notag\\
        &=\mathbf{U}^{(i)} \mathbf{0}_{n\times n} {\mathbf{U}^{(k)}}^\top=0,\text{ for }j\neq l,\text{ and }\notag\\
        \mbox{Cov}(\mathbf{U}^{(i)}\mathbf{\Omega}_j,\mathbf{U}^{(k)}\mathbf{\Omega}_l)&=\mathbf{U}^{(i)}\mathbb{E}(\mathbf{\Omega}_j\mathbf{\Omega}_l^\top) {\mathbf{U}^{(k)}}^\top\notag\\
        &=\mathbf{U}^{(i)} \mathbf{I} {\mathbf{U}^{(k)}}^\top=0\text{, for }j=l\text{ and }i\neq k.\notag
    \end{align}
The second expression above is valid due to the orthogonal property of matrix $\mathbf{U}$. Eventually, it can be indicated that all elements of $\mathbf{U}\mathbf{\Omega}$ are uncorrelated, which reveals they are also asymptotically independent in the case of a normal limiting distribution. 

Furthermore, according to the Berry-Esseen Theorem \cite[Theorem 12.4]{bhattacharya2010normal}, we have 
$$\sup\limits_{x}\vert \mathbb{P}((\mathbf{U}\mathbf{\Omega})_{ij}\leq x)-\phi(x)\vert=O\left(\frac{\sum_{k=1}^{n}\mathbb{E}|X_k|^3}{(s_n^2)^{3/2}}\right).$$ This implies that \eqref{eq3.2} holds as $\mathbb{E}(|X_k|^3)=|u_{ik}|^{3}\mathbb{E}(|\omega_{kj}|^3)$ and $\mathbb{E}(|\omega_{kj}|^3)$ is bounded.
\end{proof}

\begin{remark}\label{remark3.3}
      To implement Theorem \ref{thm3.2} to obtain the asymptotic distribution of $\mathbf{V}_A^\top\mathbf{\Omega}$, note that condition \eqref{eq3.1} is not strict and typically gets fulfilled for a general target matrix $\mathbf{A}\in\mathbb{R}^{m\times n}$. A similar condition appears in \cite{candes2012exact}, emphasizing that the singular vectors of the target matrix must be sufficiently distributed across all components to enable successful matrix completion from a sampled subset of entries. Some numerical experiments will be devised in subsection \ref{sec5.5} where the right singular matrix $\mathbf{V}_A$ of $\mathbf{A}$ is sparse, and the performance of \texttt{farPCASB} and some algorithms building on other random test matrices in this case will be recorded. Furthermore, \eqref{eq3.2} states the convergence rate of the distribution of elements in $\mathbf{U}\mathbf{\Omega}$, determined by the structure of $\mathbf{U}$. When each entry of $\mathbf{U}$ scales as $O(n^{-1/2})$, the rate of convergence is $n^{-1/2}$.
\end{remark}

By Theorem \ref{thm3.2}, $\mathbf{V}_A^\top\mathbf{\Omega}$ converges in distribution to a standard Gaussian matrix as $n\to\infty$ under condition \eqref{eq3.1}. This surmises that $\mathbf{A}\mathbf{\Omega}=\mathbf{U}_A\mathbf{\Sigma}_A\mathbf{V}_A^\top\mathbf{\Omega}$ asymptotically take the form of $\mathbf{A}\mathbf{G}=\mathbf{U}_A\mathbf{\Sigma}_A\mathbf{V}_A^\top\mathbf{G}$ for a standard Gaussian matrix $\mathbf{G}$, since the standard Gaussian matrix is rotationally invariant. It enables accelerated computation while asymptotically achieving the same results as the standard Gaussian matrix as 
$n\to\infty$. The \texttt{farPCASB} algorithm is exhibited in Algorithm \ref{alg2}.

\begin{algorithm}[htbp]
	\caption{\texttt{farPCA} based on the Standardized Bernoulli matrix (\texttt{farPCASB})}\label{alg2}
 \begin{algorithmic}[1]
	\REQUIRE{$\mathbf{A}\in \mathbb{R}^{m\times n}$, tolerance $\varepsilon$, power parameter $P$, block size $b$, parameter $p$ of Bernoulli matrix.}

\ENSURE{The SVD decomposition $\mathbf{A}\approx\mathbf{U}\mathbf{\Sigma}\mathbf{V}^\top$ satisfying $\|\mathbf{A}-\mathbf{U}\mathbf{\Sigma}\mathbf{V}^\top\|_\mathrm{F}\leq \varepsilon$.}

\STATE{$\mathbf{Y}=[~]$, $\mathbf{W}=[~]$, $\mathbf{Z}=[~]$. $E=\|\mathbf{A}\|_\mathrm{F}^2.$}

\STATE{$\mathbf{w}=p\mathbf{1}_n, \mathbf{z}=\mathbf{A}\mathbf{w}$, where $\mathbf{1}_n$ is an $n$-dimensional all-ones vector.}

\FOR{$j=1,2,3,\cdots$}

\STATE{Generate an $n\times b$ Bernoulli matrix $\mathbf{B}_j$ with parameter $p$. $\alpha=0$.}

\STATE{$\hat{\mathbf{B}}_j=\text{sparse}(\mathbf{B}_j)$.}

\IF{$P=0$}

\STATE{$\mathbf{Y}_j=\mathbf{A}\hat{\mathbf{B}}_j-\mathbf{z},\mathbf{W}_j=\mathbf{A}^\top\mathbf{Y}_j$.}

\ELSIF{$P\geq 1$}

\STATE{$\mathbf{W}_j=\mathbf{A}^\top(\mathbf{A}\hat{\mathbf{B}}_j-\mathbf{z})-\mathbf{W}\left(\mathbf{Z}^{-1}(\mathbf{W}^\top\hat{\mathbf{B}}_j-\mathbf{W}^\top\mathbf{w})\right)$.}

\STATE{$[\hat{\mathbf{B}}_j,\sim,\sim]=\text{eigSVD}(\mathbf{W}_j)$}

\FOR{$k=2,\cdots,P$}

\STATE{$\mathbf{W}_j=\mathbf{A}^\top(\mathbf{A}\hat{\mathbf{B}}_j)-\mathbf{W}\left(\mathbf{Z}^{-1}(\mathbf{W}^\top\hat{\mathbf{B}}_j)\right)-\alpha\hat{\mathbf{B}}_j$.}

\STATE{$[\hat{\mathbf{B}}_j,\hat{\mathbf{\Sigma}},\sim]=\text{eigSVD}(\mathbf{W}_j)$.}

\STATE{{\bf if} ($\alpha<\hat{\mathbf{\Sigma}}(b,b)$) {\bf then } $\alpha={(\hat{\mathbf{\Sigma}}(b,b)+\alpha)}/{2}$.}

\ENDFOR

\STATE{$\mathbf{Y}_j=\mathbf{A}\hat{\mathbf{B}}_j,\mathbf{W}_j=\mathbf{A}^\top\mathbf{Y}_j$.}

\ENDIF

\STATE{$\mathbf{Y}=[\mathbf{Y},\mathbf{Y}_{j}],\mathbf{W}=[\mathbf{W},\mathbf{W}_{j}]$, $\mathbf{Z}=\mathbf{Y}^\top\mathbf{Y},\mathbf{T}=\mathbf{W}^\top\mathbf{W}$. $E=E-\mbox{tr}(\mathbf{T}\mathbf{Z}^{-1})$.}


\STATE{{\bf if} $E < \varepsilon^2$ {\bf then stop}.}

\ENDFOR

\STATE{Compute an eigenvalue decomposition: $[\hat{\mathbf{V}},\hat{\mathbf{S}}]=\text{eig}(\mathbf{Z})$, and $\mathbf{D}=\hat{\mathbf{V}}\hat{\mathbf{S}}^{-1/2}$.}

\STATE{Compute an eigenvalue decomposition: $[\tilde{\mathbf{V}},\tilde{\mathbf{S}}]=\text{eig}(\mathbf{D}^\top\mathbf{T}\mathbf{D})$, and $\mathbf{\Sigma}=\tilde{\mathbf{S}}^{1/2}$.}

\STATE{$\mathbf{U}=\mathbf{Y}\mathbf{D}\tilde{\mathbf{V}}$, $\mathbf{V}=\mathbf{W}\mathbf{D}\tilde{\mathbf{V}}\mathbf{\Sigma}^{-1}$.}
\end{algorithmic}
\end{algorithm}

It is observed that the constant $1/\sqrt{p(1-p)}$ in the $n\times b$ standardized Bernoulli matrix $\mathbf{\Omega}_j$ can be eliminated, yielding the equation $\sqrt{p(1-p)}\mathbf{A}\mathbf{\Omega}_j=\mathbf{A}\mathbf{B}_j-p\mathbf{A}\mathbf{1}_n$, where the symbol ``$-$" represents matrix-vector subtraction in MATLAB. Accordingly, for different $j$, we employ $\mathbf{A}\mathbf{B}_j-\mathbf{z}$ and $\mathbf{W}^\top\mathbf{B}_j-\mathbf{W}^\top\mathbf{w}$ to replace $\mathbf{A}\mathbf{\Omega}_j$ and $\mathbf{W}^\top\mathbf{\Omega}_j$, respectively, where $\mathbf{z}=\mathbf{A}\mathbf{w}$ and $\mathbf{w}=p\mathbf{1}_n$.

We now analyze the computational expenditure of \texttt{farPCASB} in comparison to \texttt{farPCA}. For \texttt{farPCASB}, computational savings primarily arise from the acceleration techniques, while minor additional costs stem from calculations of $\mathbf{w},\mathbf{z},\mathbf{W}^\top\mathbf{w}$, along with a few extra additions. These additional costs are negligible despite MATLAB’s inefficiency with additional subtraction operations in Step 7 or Step 9. Thus, we directly concentrate on $\mathbf{A}\mathbf{B}_j$, where the cost term $C_{mm}mnl$ in $T_{Alg2.1}$ (for $\mathbf{A}\mathbf{G}_j$) can be replaced by $\sum_{j=1}^{l/b}\mbox{nnz}(\mathbf{B}_j)m=\mbox{nnz}(\mathbf{B})m$, with $\mbox{nnz}(\cdot)$ depicting the number of nonzero elements of a given matrix and $\mathbf{B}$ being an $n\times l$ Bernoulli matrix. To be more precise, by Hoeffding’s inequality for bounded random variables \cite[Proposition 2.5]{wainwright2019high}, we have $\mathbb{P}(|\mbox{nnz}(\mathbf{B})-nlp|\geq t)\leq\exp\left(-2t^2/(nl)\right)$ for all $t\geq0$, which speculates that computing $\mathbf{A}\mathbf{B}$ is more efficient than $\mathbf{A}\mathbf{G}$ when $p$ is small enough. Therefore, 
\[
  T_{Alg3.1}\!\approx \!C_{mm}mnl\!+\!mnlp\!+\!O\!\!\left(\!m\sqrt{nl}\!\right)\!+\!\frac{3}{2}C_{mm}(m\!+\!n)l^2\!+\!P\!\left(C_{mm}\!\left(2mnl\!+\!nl^2\!+\!2nlb\!\right)\!\right)
\]

\begin{remark}\label{remark3.4}
    When $l$ and the parameter $p$ is small enough, it can be seen that $T_{Alg3.1}/T_{Alg2.1}\approx(1+2P)C_{mm}mnl/\left((2+2P)C_{mm}mnl\right)= (1+2P)/(2+2P)$. Therefore, \texttt{farPCASB} can serve as a superior alternative to \texttt{farPCA}, 
   offering almost identical performance while achieving a computational cost reduction of $(100/(2+2P))\%$.
\end{remark}

\subsection{Low-rank approximation by other sparse matrices}
As described in \cite{martinsson2020randomized,xie2023sketchne}, the sparse sign matrix also behaves like a standard Gaussian matrix. In fact, it can be understood that the sparse sign matrix exhibits a property analogous to that described in Theorem \ref{thm3.2} for the standardized Bernoulli matrix. We summarize it in Proposition \ref{proposition3.5}.

\begin{proposition}\label{proposition3.5}
    Consider $\mathbf{\Xi}\in \mathbb{R}^{n\times l}$ as a sparse sign matrix with parameter $p\in(0,1)$. Under condition \eqref{eq3.1} on the $n\times n$ orthogonal matrix 
 $\mathbf{U}$, $\mathbf{U}\mathbf{\Xi}$ converges in distribution to a standard Gaussian matrix as $n\to\infty$.
\end{proposition}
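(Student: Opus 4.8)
The plan is to mirror the proof of Theorem \ref{thm3.2} almost verbatim, since the sparse sign matrix $\mathbf{\Xi}=(\xi_{ij})$ has $i.i.d.$ entries with $\mathbb{E}(\xi_{ij})=0$ and $\mbox{Var}(\xi_{ij})=\mathbb{E}(\xi_{ij}^2)=\tfrac{1}{p}\cdot\tfrac{p}{2}+\tfrac{1}{p}\cdot\tfrac{p}{2}=1$, exactly the normalization used in the Bernoulli case. First I would fix $i$ and $j$, set $X_k=u_{ik}\xi_{kj}$, and record $\mathbb{E}(X_k)=0$, $\mbox{Var}(X_k)=u_{ik}^2$, so that $s_n^2=\sum_{k=1}^n u_{ik}^2=1$ by orthogonality of $\mathbf{U}$. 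Then I would verify Lyapunov's condition exactly as before: since $|\xi_{kj}|\in\{0,1/\sqrt{p}\}$, we have $|X_k|\le |u_{ik}|/\sqrt{p}$, hence $\sum_{k=1}^n\mathbb{E}(|X_k|^{2+\delta})\le (\max_h |u_{ih}|/\sqrt p)^\delta\sum_{k=1}^n\mathbb{E}(|X_k|^2)=(\varepsilon_n/\sqrt p)^\delta\to 0$ under condition \eqref{eq3.1}. By Lemma \ref{lemma3.1} this gives $\mathbf{U}^{(i)}\mathbf{\Xi}_j\mathop{\to}\limits^{d}\mathbb{N}(0,1)$ for every $i,j$.

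Next I would establish asymptotic independence of the entries of $\mathbf{U}\mathbf{\Xi}$ by the same uncorrelatedness argument: the columns $\mathbf{\Xi}_j$ satisfy $\mathbb{E}(\mathbf{\Xi}_j)=\mathbf{0}_n$, $\mbox{Var}(\mathbf{\Xi}_j)=\mathbf{I}$, and $\mathbb{E}(\mathbf{\Xi}_j\mathbf{\Xi}_l^\top)=\mathbf{0}_{n\times n}$ for $j\ne l$, so $\mbox{Cov}(\mathbf{U}^{(i)}\mathbf{\Xi}_j,\mathbf{U}^{(k)}\mathbf{\Xi}_l)=\mathbf{U}^{(i)}\mathbb{E}(\mathbf{\Xi}_j\mathbf{\Xi}_l^\top){\mathbf{U}^{(k)}}^\top$ vanishes when $j\ne l$ (zero matrix) and when $j=l,\,i\ne k$ (orthogonality of $\mathbf{U}$). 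Jointly with the marginal normal limits, uncorrelatedness upgrades to asymptotic independence, so $\mathbf{U}\mathbf{\Xi}$ converges in distribution to a standard Gaussian matrix. One could optionally also state a Berry--Esseen rate $O(\sum_{k=1}^n|u_{ik}|^3)$ as in \eqref{eq3.2}, since $\mathbb{E}(|\xi_{kj}|^3)=p^{-3/2}$ is bounded, but the proposition as stated only claims distributional convergence.

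There is essentially no serious obstacle here; the only point requiring a word of care is that, unlike the Gaussian matrix which is exactly rotationally invariant, here uncorrelatedness of the entries of $\mathbf{U}\mathbf{\Xi}$ only yields \emph{asymptotic} independence because the limiting law is jointly normal — this is exactly the same subtlety already invoked in the proof of Theorem \ref{thm3.2}, so it can be cited rather than re-argued. The cleanest write-up is therefore a short paragraph that checks the two distributional facts ($\mathbb{E}\xi_{ij}=0$, $\mbox{Var}\,\xi_{ij}=1$) and the boundedness $|\xi_{ij}|\le 1/\sqrt p$, and then says ``the argument of Theorem \ref{thm3.2} applies verbatim,'' pointing to the Lyapunov bound and the covariance computation. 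I would keep the proof to under half a page and explicitly flag that the same scheme covers any $i.i.d.$ bounded, mean-$0$, variance-$1$ ensemble, which sets up the later discussion of the sparse Gaussian matrix (where boundedness fails and one instead checks Lyapunov directly using finite moments of the Gaussian).
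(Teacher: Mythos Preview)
Your proposal is correct and follows precisely the route the paper intends: Remark~\ref{remark3.7} explicitly states that the proof of Proposition~\ref{proposition3.5} is omitted because ``the reasoning follows the identical steps outlined in the proof of Theorem~\ref{thm3.2},'' which is exactly what you reproduce (mean~$0$, variance~$1$, Lyapunov via the bound $|\xi_{kj}|\le 1/\sqrt{p}$, then the covariance/orthogonality computation for asymptotic independence). Your suggestion to collapse the write-up to a pointer back to Theorem~\ref{thm3.2} after recording these three facts matches the paper's own treatment.
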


Likewise, the sparse Gaussian matrix exhibits a property analogous to the sparse sign matrix and the standardized Bernoulli matrix. It is described in Proposition \ref{proposition3.6}.

\begin{proposition}\label{proposition3.6}
    Consider $\mathbf{\Psi}\in \mathbb{R}^{n\times l}$ as a sparse Gaussian matrix with parameter $p$, where $p\in(0,1)$. Under condition \eqref{eq3.1} on the $n\times n$ orthogonal matrix 
 $\mathbf{U}$, $\mathbf{U}\mathbf{\Psi}$ converges in distribution to a standard Gaussian matrix as $n\to\infty$.
\end{proposition}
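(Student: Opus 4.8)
The plan is to run the same three-step argument as in the proof of Theorem~\ref{thm3.2}, since a sparse Gaussian matrix $\mathbf{\Psi}=\mathbf{B}\circ\mathbf{G}/\sqrt{p}$ again has $i.i.d.$ entries of mean $0$ and variance $1$. Writing $\psi_{kj}=b_{kj}g_{kj}/\sqrt{p}$ with $b_{kj}\sim\text{Bernoulli}(p)$ and $g_{kj}\sim\mathbb{N}(0,1)$ independent, one checks $\mathbb{E}(\psi_{kj})=\mathbb{E}(b_{kj})\mathbb{E}(g_{kj})/\sqrt{p}=0$ and $\mbox{Var}(\psi_{kj})=\mathbb{E}(b_{kj}^2)\mathbb{E}(g_{kj}^2)/p=1$; within each column the entries are independent and distinct columns are mutually independent, so $\mathbb{E}(\mathbf{\Psi}_j)=\mathbf{0}_n$, $\mbox{Var}(\mathbf{\Psi}_j)=\mathbf{I}$, and $\mbox{Cov}(\mathbf{\Psi}_j,\mathbf{\Psi}_l)=\mathbf{0}_{n\times n}$ for $j\neq l$.

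First I would fix indices $i,j$, set $X_k=u_{ik}\psi_{kj}$ so that $(\mathbf{U}\mathbf{\Psi})_{ij}=\sum_{k=1}^{n}X_k$, and record $\mathbb{E}(X_k)=0$, $\mbox{Var}(X_k)=u_{ik}^2$, $s_n^2=\sum_{k=1}^n u_{ik}^2=1$ by orthogonality of $\mathbf{U}$. The only genuine difference from Theorem~\ref{thm3.2}, and the step needing the most care, is verifying Lyapunov's condition: the Gaussian factor is unbounded, so the crude estimate $\mathbb{E}|X_k|^{2+\delta}\le\mathbb{E}|X_k|^2(\max_h|X_h|)^\delta$ used there is unavailable. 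Instead I would fix any $\delta>0$ (say $\delta=1$) and use finiteness of the absolute Gaussian moments: with $C_\delta:=\mathbb{E}|g_{kj}|^{2+\delta}<\infty$ and $\mathbb{E}|b_{kj}|^{2+\delta}=p$, one gets $\mathbb{E}|\psi_{kj}|^{2+\delta}=p\,C_\delta/p^{(2+\delta)/2}=C_\delta\,p^{-\delta/2}$, a finite constant depending only on $p,\delta$. Hence
\[
  \frac{1}{s_n^{2+\delta}}\sum_{k=1}^{n}\mathbb{E}|X_k|^{2+\delta}
  = C_\delta\,p^{-\delta/2}\sum_{k=1}^{n}|u_{ik}|^{2+\delta}
  \le C_\delta\,p^{-\delta/2}\,\varepsilon_n^{\delta}\sum_{k=1}^{n}u_{ik}^2
  = C_\delta\,p^{-\delta/2}\,\varepsilon_n^{\delta}\longrightarrow 0
\]
as $n\to\infty$ by condition~\eqref{eq3.1}, so Lemma~\ref{lemma3.1} yields $(\mathbf{U}\mathbf{\Psi})_{ij}\mathop{\to}\limits^{d}\mathbb{N}(0,1)$ for all $i,j$.

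To conclude, I would reproduce the covariance computation from the proof of Theorem~\ref{thm3.2}: for $(i,j)\neq(k,l)$, $\mbox{Cov}((\mathbf{U}\mathbf{\Psi})_{ij},(\mathbf{U}\mathbf{\Psi})_{kl})=\mathbf{U}^{(i)}\mathbb{E}(\mathbf{\Psi}_j\mathbf{\Psi}_l^\top){\mathbf{U}^{(k)}}^\top$, which equals $\mathbf{U}^{(i)}\mathbf{0}_{n\times n}{\mathbf{U}^{(k)}}^\top=0$ when $j\neq l$ and $\mathbf{U}^{(i)}\mathbf{I}\,{\mathbf{U}^{(k)}}^\top=0$ when $j=l,\,i\neq k$ by orthonormality of the rows of $\mathbf{U}$. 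Uncorrelatedness together with the joint normal limit gives asymptotic independence of all entries of $\mathbf{U}\mathbf{\Psi}$, which establishes the claim; if desired, the Berry--Esseen theorem as in \eqref{eq3.2} furnishes the rate $O\!\big(p^{-1/2}\sum_{k=1}^n|u_{ik}|^3\big)$. I would also remark that this scheme proves Proposition~\ref{proposition3.5} as well and, more generally, applies to any test matrix with $i.i.d.$ mean-$0$, variance-$1$ entries having a finite $(2+\delta)$-moment; the factor $p^{-\delta/2}$ is harmless for fixed $p$ but signals that $p$ should not be driven too small, in line with the efficiency/stability trade-off behind condition~\eqref{eq3.1}.
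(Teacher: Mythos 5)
Your proposal is correct, and it follows the route the paper intends: the paper gives no separate proof of Proposition~\ref{proposition3.6}, stating in Remark~\ref{remark3.7} only that the reasoning repeats the steps of Theorem~\ref{thm3.2} (Lyapunov CLT for each entry, then the covariance computation for asymptotic independence), which is exactly your skeleton. The one place where you genuinely deviate is also the one place where the paper's ``identical steps'' claim cannot be taken literally: in Theorem~\ref{thm3.2} the Lyapunov condition is verified via $\mathbb{E}|X_k|^{2+\delta}\le \mathbb{E}|X_k|^{2}(\max_h|X_h|)^{\delta}$, which exploits the boundedness of standardized Bernoulli entries and fails for the unbounded Gaussian factor in $\mathbf{\Psi}$. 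Your substitute --- computing $\mathbb{E}|\psi_{kj}|^{2+\delta}=C_\delta\,p^{-\delta/2}<\infty$ directly and then bounding $\sum_k|u_{ik}|^{2+\delta}\le\varepsilon_n^{\delta}\sum_k u_{ik}^2=\varepsilon_n^{\delta}$ under condition~\eqref{eq3.1} --- is the standard and correct fix, and it additionally makes the $p$-dependence explicit (including in the Berry--Esseen rate $O(p^{-1/2}\sum_k|u_{ik}|^3)$), which the paper leaves implicit. The remaining steps (zero mean, unit variance, uncorrelatedness of distinct entries via orthogonality of $\mathbf{U}$, and the passage from uncorrelatedness plus a normal limit to asymptotic independence) match the paper's argument, including its mild looseness on joint convergence, so your proof is on par with, and at the Lyapunov step slightly more careful than, the original.
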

\begin{remark}\label{remark3.7}
    The conclusion is not confined to these specific matrix types. In fact, it is applicable to any matrix in which each element is $i.i.d.$ with mean $0$ and variance $1$. The convergence rates of the distribution function of each element are determined by the structure of $\mathbf{U}$. We omit the detailed proof of Propositions \ref{proposition3.5} and \ref{proposition3.6} as the reasoning follows the identical steps outlined in the proof of Theorem \ref{thm3.2}.
\end{remark}

Notably, both the sparse sign matrix $\mathbf{\Xi}$ and the sparse Gaussian matrix $\mathbf{\Psi}$ satisfy Hoeffding's inequality for bounded random variables: $$\mathbb{P}(|\mbox{nnz}(\mathbf{\Xi})-nlp|\geq t)\leq\exp\left(-\frac{2t^2}{nl}\right),\mathbb{P}(|\mbox{nnz}(\mathbf{\Psi})-nlp|\geq t)\leq\exp\left(-\frac{2t^2}{nl}\right)$$ for all $t\geq0$. Accordingly, substituting the standard Gaussian matrix in \texttt{farPCA} with either $\mathbf{\Xi}$ (\texttt{farPCASS}) or $\mathbf{\Psi}$ (\texttt{farPCASG}) maintains computational costs roughly equivalent to $T_{Alg3.1}$. Furthermore, \texttt{farPCASS} and \texttt{farPCASG} may surpass \texttt{farPCASB} in speed, as the latter involves additional subtraction operations in Steps 7 or 9.

\section{Error analysis}\label{sec4}
In this section, we perform an error analysis on our proposed algorithms: \texttt{farPCASB}, \texttt{farPCASS}, and \texttt{farPCASG}. Error analysis is typically performed on the fixed-rank problem. Starting with an asymptotic error analysis of \texttt{randQB}, we derive tighter error bounds than those in \cite{halko2011finding}. Since \texttt{randQB} is mathematically equivalent to \texttt{farPCA} without shifted technique, this analysis extends to \texttt{farPCA} as well. Building on our results in Theorem \ref{thm3.2} and Propositions \ref{proposition3.5} and \ref{proposition3.6}, we extend the error analysis to all three algorithms, excluding shifted power iteration.

Now we give some notation necessarily in this section. Recall that for a given matrix $\mathbf{A}\in\mathbb{R}^{m\times n}$ ($m\geq n$) with its SVD: $\mathbf{A}=\mathbf{U}_A\mathbf{\Sigma}_A\mathbf{V}_A^\top$. Our current goal is to analyze the approximation's error constructed by projecting $\mathbf{A}$ to the orthonormal basis $\mathbf{Q}$, which is the subspace spanned by the range of $\mathbf{A}\mathbf{\Phi}$. Here $\mathbf{\Phi}\in\mathbb{R}^{n\times l}$ is a chosen random matrix, with $l=k+h$, $k$ being a desired rank and $h$ being an oversampling parameter. The SVD of $\mathbf{A}$ is now rewritten as
\begin{align}\label{eq4.1}
\mathbf{A}=\mathbf{U}_A\begin{bmatrix}\mathbf{\Sigma}_1 & \\& \mathbf{\Sigma}_2\end{bmatrix}
\begin{bmatrix}\mathbf{V}_1^\top\\
    \mathbf{V}_2^\top\end{bmatrix},
\end{align}
where $\mathbf{\Sigma}_1\in\mathbb{R}^{k\times k},\mathbf{\Sigma}_2\in\mathbb{R}^{(n-k)\times (n-k)}$ are diagonal square matrices and $\mathbf{V}_1^\top\in\mathbb{R}^{k\times n},\mathbf{V}_2^\top\in\mathbb{R}^{(n-k)\times n}$. Define 
\begin{align}\label{eq4.2}
\mathbf{\Phi}_1=\mathbf{V}_1^\top\mathbf{\Phi},\mathbf{\Phi}_2=\mathbf{V}_2^\top\mathbf{\Phi},\end{align}
then the lemma below states the error bound for any chosen $\mathbf{\Phi}$ \cite{halko2011finding}.

\begin{lemma}[{\cite[Theorem 9.1]{halko2011finding}}]\label{lemma4.1}
    Construct $\mathbf{Y}=\mathbf{A}\mathbf{\Phi}$ for any chosen $\mathbf{\Phi}$, and continue using the notation from equation \eqref{eq4.1} and \eqref{eq4.2}. Consequently, suppose that $\mathbf{\Phi}_1$ has full row rank and $\mathbf{Q}$ is an orthonormal basis for the range of $\mathbf{Y}$, then
    \begin{align}\label{eq4.3}
            \interleave(\mathbf{I}-\mathbf{Q}\mathbf{Q}^\top)\mathbf{A}\interleave^2\leq\interleave\mathbf{\Sigma}_2\interleave^2+\interleave\mathbf{\Sigma}_2\mathbf{\Phi}_2\mathbf{\Phi}_1^\dagger\interleave^2,
    \end{align}
    where $\dagger$ depicts the pseudo-inverse and $\interleave\cdot\interleave$ signifies the spectral or Frobenius norm.
\end{lemma}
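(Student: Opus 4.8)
The plan is to reproduce the deterministic argument behind this bound, via three reductions that each exploit the unitary invariance of $\interleave\cdot\interleave$. \emph{First, pass to SVD coordinates.} Since $\mathbf{Y}=\mathbf{A}\mathbf{\Phi}=\mathbf{U}_A\mathbf{\Sigma}_A\mathbf{V}_A^\top\mathbf{\Phi}$, the range of $\mathbf{Y}$ — and hence the columns of $\mathbf{Q}$ and of $(\mathbf{I}-\mathbf{Q}\mathbf{Q}^\top)\mathbf{A}$ — all lie in the column space of $\mathbf{U}_A$. Writing $\mathbf{Q}=\mathbf{U}_A\hat{\mathbf{Q}}$ with $\hat{\mathbf{Q}}=\mathbf{U}_A^\top\mathbf{Q}$ (still column-orthonormal), one computes $(\mathbf{I}-\mathbf{Q}\mathbf{Q}^\top)\mathbf{A}=\mathbf{U}_A(\mathbf{I}-\hat{\mathbf{Q}}\hat{\mathbf{Q}}^\top)\mathbf{\Sigma}_A\mathbf{V}_A^\top$, so that $\interleave(\mathbf{I}-\mathbf{Q}\mathbf{Q}^\top)\mathbf{A}\interleave=\interleave(\mathbf{I}-\hat{\mathbf{Q}}\hat{\mathbf{Q}}^\top)\mathbf{\Sigma}_A\interleave$, where $\hat{\mathbf{Q}}$ is an orthonormal basis for the range of $\mathbf{\Sigma}_A\mathbf{V}_A^\top\mathbf{\Phi}=\begin{bmatrix}\mathbf{\Sigma}_1\mathbf{\Phi}_1\\\mathbf{\Sigma}_2\mathbf{\Phi}_2\end{bmatrix}$. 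This removes $\mathbf{U}_A$ and $\mathbf{V}_A$ from the problem.

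\emph{Second, enlarge the subspace onto which we project.} Assume $\sigma_k>0$ (otherwise $\mathbf{\Sigma}_2=\mathbf{0}$, and one checks directly that both sides of \eqref{eq4.3} are zero), and set $\mathbf{F}=\mathbf{\Sigma}_2\mathbf{\Phi}_2\mathbf{\Phi}_1^\dagger\mathbf{\Sigma}_1^{-1}$. Since $\mathbf{\Phi}_1$ has full row rank, $\mathbf{\Phi}_1\mathbf{\Phi}_1^\dagger=\mathbf{I}$, and a one-line check shows $\begin{bmatrix}\mathbf{I}\\\mathbf{F}\end{bmatrix}\mathbf{x}=\begin{bmatrix}\mathbf{\Sigma}_1\mathbf{\Phi}_1\\\mathbf{\Sigma}_2\mathbf{\Phi}_2\end{bmatrix}\bigl(\mathbf{\Phi}_1^\dagger\mathbf{\Sigma}_1^{-1}\mathbf{x}\bigr)$ for every $\mathbf{x}$, so the range of $\mathbf{Z}:=\begin{bmatrix}\mathbf{I}\\\mathbf{F}\end{bmatrix}$ is contained in the range of $\hat{\mathbf{Q}}$. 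Because the projection residual only decreases when the subspace grows — for $\interleave\cdot\interleave_\mathrm{F}$ this is $\interleave(\mathbf{I}-\mathbf{P})\mathbf{M}\interleave_\mathrm{F}^2=\interleave\mathbf{M}\interleave_\mathrm{F}^2-\interleave\mathbf{P}\mathbf{M}\interleave_\mathrm{F}^2$, and for $\interleave\cdot\interleave$ it follows from a Loewner-order comparison of orthogonal projectors — we get $\interleave(\mathbf{I}-\hat{\mathbf{Q}}\hat{\mathbf{Q}}^\top)\mathbf{\Sigma}_A\interleave\le\interleave(\mathbf{I}-\mathbf{P}_\mathbf{Z})\mathbf{\Sigma}_A\interleave$ with $\mathbf{P}_\mathbf{Z}=\mathbf{Z}(\mathbf{Z}^\top\mathbf{Z})^{-1}\mathbf{Z}^\top$.

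\emph{Third, compute and bound.} Using $\mathbf{Z}^\top\mathbf{Z}=\mathbf{I}+\mathbf{F}^\top\mathbf{F}$ together with the push-through identity $\mathbf{F}^\top(\mathbf{I}+\mathbf{F}\mathbf{F}^\top)^{-1}=(\mathbf{I}+\mathbf{F}^\top\mathbf{F})^{-1}\mathbf{F}^\top$ and the Woodbury identity $\mathbf{I}-\mathbf{F}(\mathbf{I}+\mathbf{F}^\top\mathbf{F})^{-1}\mathbf{F}^\top=(\mathbf{I}+\mathbf{F}\mathbf{F}^\top)^{-1}$, and keeping the block form $\mathbf{\Sigma}_A=\mathrm{diag}(\mathbf{\Sigma}_1,\mathbf{\Sigma}_2)$ intact, one finds
\[
(\mathbf{I}-\mathbf{P}_\mathbf{Z})\mathbf{\Sigma}_A=\begin{bmatrix}-\mathbf{F}^\top\\\mathbf{I}\end{bmatrix}(\mathbf{I}+\mathbf{F}\mathbf{F}^\top)^{-1}\begin{bmatrix}-\mathbf{F}\mathbf{\Sigma}_1 & \mathbf{\Sigma}_2\end{bmatrix}.
\]
The left factor $\begin{bmatrix}-\mathbf{F}^\top\\\mathbf{I}\end{bmatrix}$ has the same Gram matrix $\mathbf{I}+\mathbf{F}\mathbf{F}^\top$ as $(\mathbf{I}+\mathbf{F}\mathbf{F}^\top)^{1/2}$, hence equals an isometry times $(\mathbf{I}+\mathbf{F}\mathbf{F}^\top)^{1/2}$, so the displayed norm reduces to $\interleave(\mathbf{I}+\mathbf{F}\mathbf{F}^\top)^{-1/2}\begin{bmatrix}-\mathbf{F}\mathbf{\Sigma}_1 & \mathbf{\Sigma}_2\end{bmatrix}\interleave\le\interleave\begin{bmatrix}-\mathbf{F}\mathbf{\Sigma}_1 & \mathbf{\Sigma}_2\end{bmatrix}\interleave$. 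Since the two blocks occupy disjoint columns, $\interleave\begin{bmatrix}-\mathbf{F}\mathbf{\Sigma}_1 & \mathbf{\Sigma}_2\end{bmatrix}\interleave^2\le\interleave\mathbf{F}\mathbf{\Sigma}_1\interleave^2+\interleave\mathbf{\Sigma}_2\interleave^2$ (an equality for the Frobenius norm; for the spectral norm, bound the positive semidefinite matrix $\mathbf{F}\mathbf{\Sigma}_1\mathbf{\Sigma}_1^\top\mathbf{F}^\top+\mathbf{\Sigma}_2\mathbf{\Sigma}_2^\top$ by the sum of the norms of its two summands), and $\mathbf{F}\mathbf{\Sigma}_1=\mathbf{\Sigma}_2\mathbf{\Phi}_2\mathbf{\Phi}_1^\dagger$ gives \eqref{eq4.3}.

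The calculation in the last step is routine; the step that really carries the proof is the second one, where monotonicity of the projection residual under subspace enlargement lets us swap the unknown $\hat{\mathbf{Q}}$ for the explicit ``flattened'' subspace spanned by $\begin{bmatrix}\mathbf{I}\\\mathbf{F}\end{bmatrix}$ — this is exactly what produces the term $\mathbf{\Sigma}_2\mathbf{\Phi}_2\mathbf{\Phi}_1^\dagger$. Preserving the two diagonal blocks of $\mathbf{\Sigma}_A$ until the very end is equally important: collapsing them with the triangle inequality would only deliver the weaker estimate $(\interleave\mathbf{\Sigma}_2\interleave+\interleave\mathbf{\Sigma}_2\mathbf{\Phi}_2\mathbf{\Phi}_1^\dagger\interleave)^2$ rather than the sum of squares in \eqref{eq4.3}.
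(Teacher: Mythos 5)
The paper itself offers no proof of this lemma---it is imported verbatim from Halko--Martinsson--Tropp \cite[Theorem 9.1]{halko2011finding}---and your argument is a correct reconstruction of exactly that standard deterministic proof: pass to SVD coordinates, use monotonicity of the projection residual to replace the sample subspace by the ``flattened'' range of $\begin{bmatrix}\mathbf{I}\\\mathbf{F}\end{bmatrix}$ with $\mathbf{F}=\mathbf{\Sigma}_2\mathbf{\Phi}_2\mathbf{\Phi}_1^\dagger\mathbf{\Sigma}_1^{-1}$, then bound the explicit projector. All steps check out, including the degenerate case $\sigma_k=0$, the use of $\mathbf{\Phi}_1\mathbf{\Phi}_1^\dagger=\mathbf{I}$ from the full-row-rank hypothesis, and the spectral-norm estimate via $\|\mathbf{B}\mathbf{B}^\top+\mathbf{C}\mathbf{C}^\top\|\leq\|\mathbf{B}\|^2+\|\mathbf{C}\|^2$, so nothing further is needed.
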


Below are several lemmas utilized to derive the approximation error's probabilistic and expected bounds. The following two lemmas present some statistical instruments required for our later analysis \cite[Proposition 10.1, Proposition 10.3]{halko2011finding}.

\begin{lemma}[{\cite[Proposition 10.1]{halko2011finding}}]\label{lemma4.2}
For any fixed real matrices $\mathbf{S},\mathbf{T}$, let $\mathbf{G}\in\mathbb{R}^{n\times l}$ be a standard Gaussian matrix, then
    \begin{align}
        &\left(\mathbb{E}\|\mathbf{SGT}\|_\mathrm{F}^2\right)^{\frac{1}{2}}=\|\mathbf{S}\|_\mathrm{F}\|\mathbf{T}\|_\mathrm{F} \text{ and}\label{eq4.4}\\
        &\mathbb{E}\|\mathbf{SGT}\|\leq\|\mathbf{S}\|\|\mathbf{T}\|_\mathrm{F}+\|\mathbf{S}\|_\mathrm{F}\|\mathbf{T}\|.\label{eq4.5}
    \end{align}
\end{lemma}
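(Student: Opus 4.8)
The plan is to treat the two bounds separately, exploiting the rotational invariance of the standard Gaussian matrix together with elementary moment computations. For the Frobenius identity \eqref{eq4.4}, I would first reduce to the case where $\mathbf{S}$ and $\mathbf{T}$ are diagonal: writing SVDs $\mathbf{S} = \mathbf{U}_S\mathbf{\Sigma}_S\mathbf{V}_S^\top$ and $\mathbf{T} = \mathbf{U}_T\mathbf{\Sigma}_T\mathbf{V}_T^\top$, the quantity $\|\mathbf{S}\mathbf{G}\mathbf{T}\|_\mathrm{F} = \|\mathbf{\Sigma}_S (\mathbf{V}_S^\top \mathbf{G}\mathbf{U}_T)\mathbf{\Sigma}_T\|_\mathrm{F}$, and since $\mathbf{V}_S^\top\mathbf{G}\mathbf{U}_T$ has the same distribution as $\mathbf{G}$ by orthogonal invariance, we may assume $\mathbf{S} = \mathbf{\Sigma}_S$, $\mathbf{T} = \mathbf{\Sigma}_T$ are diagonal (padded with zeros as needed). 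Then $\|\mathbf{\Sigma}_S\mathbf{G}\mathbf{\Sigma}_T\|_\mathrm{F}^2 = \sum_{i,j}(\sigma_S)_i^2 (\sigma_T)_j^2 g_{ij}^2$, and taking expectations with $\mathbb{E}(g_{ij}^2)=1$ gives $\sum_{i,j}(\sigma_S)_i^2(\sigma_T)_j^2 = \|\mathbf{S}\|_\mathrm{F}^2\|\mathbf{T}\|_\mathrm{F}^2$, which is exactly \eqref{eq4.4}.

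For the spectral bound \eqref{eq4.5}, the approach I would take is a Gaussian concentration / Slepian-type comparison argument. By the same orthogonal-invariance reduction I can again assume $\mathbf{S}$ and $\mathbf{T}$ are diagonal. One clean route: view $\mathbf{S}\mathbf{G}\mathbf{T}$ as a Gaussian process indexed by pairs of unit vectors, $(\mathbf{u},\mathbf{v}) \mapsto \mathbf{u}^\top \mathbf{S}\mathbf{G}\mathbf{T}\mathbf{v}$, whose supremum over the unit spheres equals $\|\mathbf{S}\mathbf{G}\mathbf{T}\|$. Compute the variance of the increments of this process and compare it, via the Gordon/Slepian inequality for Gaussian processes, to the simpler process $(\mathbf{u},\mathbf{v})\mapsto \|\mathbf{T}\|\,\mathbf{g}^\top\mathbf{S}\mathbf{u} + \|\mathbf{S}\|\,\mathbf{h}^\top\mathbf{T}\mathbf{v}$ with $\mathbf{g},\mathbf{h}$ independent Gaussian vectors; the increment-variance comparison needed for Gordon's inequality is a direct (if slightly fiddly) algebraic check using $\|\mathbf{T}\mathbf{v}\|\le\|\mathbf{T}\|$ and $\|\mathbf{S}^\top\mathbf{u}\|\le\|\mathbf{S}\|$. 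Taking expectations of suprema then yields $\mathbb{E}\|\mathbf{S}\mathbf{G}\mathbf{T}\| \le \|\mathbf{T}\|\,\mathbb{E}\|\mathbf{S}\|_\mathrm{F}\text{-type term} + \|\mathbf{S}\|\,\mathbb{E}\|\mathbf{T}\|_\mathrm{F}\text{-type term}$; more precisely $\mathbb{E}\sup_{\mathbf{u}} \mathbf{g}^\top\mathbf{S}\mathbf{u} = \mathbb{E}\|\mathbf{S}^\top\mathbf{g}\| \le (\mathbb{E}\|\mathbf{S}^\top\mathbf{g}\|^2)^{1/2} = \|\mathbf{S}\|_\mathrm{F}$, and similarly for $\mathbf{T}$, giving exactly the right-hand side of \eqref{eq4.5}.

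The main obstacle I anticipate is the Gaussian comparison step for \eqref{eq4.5}: one has to set up the two Gaussian processes carefully, verify the hypothesis of Gordon's inequality (the correct inequality between the increment variances, with equality on the "diagonal" $\mathbf{u}=\mathbf{u}'$ or $\mathbf{v}=\mathbf{v}'$), and handle the fact that the index set is a product of two spheres rather than a single sphere — this is the place where a naive argument can go wrong. An alternative that avoids Slepian entirely is to bound $\|\mathbf{S}\mathbf{G}\mathbf{T}\|$ by combining $\|\mathbf{S}\mathbf{G}\mathbf{T}\| \le \|\mathbf{S}\|\,\|\mathbf{G}\mathbf{T}\|$ with a sharp bound on $\mathbb{E}\|\mathbf{G}\mathbf{T}\|$ and a symmetric bound, but getting the clean sum-of-two-terms form essentially forces the comparison-inequality viewpoint, so I would commit to the Gaussian-process route and spend the care there. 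Everything else — the orthogonal-invariance reductions and the second-moment computations — is routine and I would state it compactly.
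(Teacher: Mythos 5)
This lemma is not proved in the paper at all: it is imported verbatim as \cite[Proposition 10.1]{halko2011finding}, so there is no in-paper argument to compare against, and the right benchmark is the proof in that reference. Your proposal is a correct reconstruction of that standard argument: \eqref{eq4.4} is the elementary second-moment computation (your SVD/rotational-invariance reduction is sound, though it is not even needed, since $\mathbb{E}\|\mathbf{S}\mathbf{G}\mathbf{T}\|_\mathrm{F}^2=\mathrm{tr}\bigl(\mathbf{S}^\top\mathbf{S}\,\mathbb{E}[\mathbf{G}\mathbf{T}\mathbf{T}^\top\mathbf{G}^\top]\bigr)=\|\mathbf{S}\|_\mathrm{F}^2\|\mathbf{T}\|_\mathrm{F}^2$ follows directly from $\mathbb{E}[\mathbf{G}\mathbf{M}\mathbf{G}^\top]=\mathrm{tr}(\mathbf{M})\,\mathbf{I}$), and \eqref{eq4.5} is exactly Chevet's inequality, obtained by the Slepian--Fernique/Gordon comparison you outline applied to the index sets $\{\mathbf{S}^\top\mathbf{u}:\|\mathbf{u}\|=1\}$ and $\{\mathbf{T}\mathbf{v}:\|\mathbf{v}\|=1\}$, with $\mathbb{E}\sup_{\|\mathbf{u}\|=1}\mathbf{g}^\top\mathbf{S}^\top\mathbf{u}=\mathbb{E}\|\mathbf{S}\mathbf{g}\|\le\|\mathbf{S}\|_\mathrm{F}$ closing the bound. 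The comparison step you flag as the main risk does go through: only one-sided domination of squared increments is required for comparing expectations of suprema, and the inequality $\|\mathbf{a}\mathbf{b}^\top-\mathbf{a}'\mathbf{b}'^\top\|_\mathrm{F}^2\le \|\mathbf{T}\|^2\|\mathbf{a}-\mathbf{a}'\|^2+\|\mathbf{S}\|^2\|\mathbf{b}-\mathbf{b}'\|^2$ holds for the relevant sets; note that \cite{halko2011finding} itself quotes this spectral bound from the Gaussian-process literature rather than reproving it, so your sketch supplies detail that even the cited source delegates.
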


\begin{lemma}[{\cite[Proposition 10.3]{halko2011finding}}]\label{lemma4.3}
Let $f$ be a function on matrices that is Lipschitz with constant $L$, i.e., $|f(\mathbf{X})-f(\mathbf{Y})|\leq L\|\mathbf{X}-\mathbf{Y}\|_\mathrm{F}$ for any matrices $\mathbf{X},\mathbf{Y}$, then for a standard Gaussian matrix $\mathbf{G}$ and all $t>0$, 
\begin{equation}
    \begin{aligned}
        \mathbb{P}\left(f(\mathbf{G})\geq \mathbb{E}f(\mathbf{G})+Lt\right)\leq e^{-t^2/2}.\notag
    \end{aligned}
\end{equation}
\end{lemma}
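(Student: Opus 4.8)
The plan is to prove this Gaussian concentration inequality via the Gaussian logarithmic Sobolev inequality combined with Herbst's argument, which is what delivers the sharp exponent $t^2/2$.

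First I would reduce to a scalar function on Euclidean space: vectorizing the matrix argument identifies the space of matrices equipped with the Frobenius norm with some $\mathbb{R}^N$ equipped with the Euclidean norm, under which $\mathbf{G}$ corresponds to a vector $g\sim\mathcal{N}(0,\mathbf{I}_N)$ and $f$ becomes an $L$-Lipschitz function $f:\mathbb{R}^N\to\mathbb{R}$. Since an $L$-Lipschitz function grows at most linearly, $f(g)$ has a finite moment generating function. By mollification I may assume $f$ is smooth with $\sup_x\|\nabla f(x)\|\le L$ (the mollified functions keep the same Lipschitz constant and converge uniformly on compact sets, and both $\mathbb{E}f$ and the tail bound pass to the limit), and by subtracting its mean I may assume $\mathbb{E}f(g)=0$.

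Next I would invoke the log-Sobolev inequality for the standard Gaussian measure $\gamma_N$: for every smooth $h$,
$$\mathbb{E}\big[h(g)^2\log h(g)^2\big]-\mathbb{E}\big[h(g)^2\big]\log\mathbb{E}\big[h(g)^2\big]\le 2\,\mathbb{E}\big[\|\nabla h(g)\|^2\big],$$
which follows by tensorization from the one-dimensional case. Then comes Herbst's argument: setting $\psi(\lambda)=\mathbb{E}e^{\lambda f(g)}$ and applying the inequality to $h=e^{\lambda f/2}$ (so that $h^2=e^{\lambda f}$, $\mathbb{E}[h^2\log h^2]=\lambda\psi'(\lambda)$, and $\|\nabla h\|^2\le\tfrac{\lambda^2L^2}{4}e^{\lambda f}$) gives the differential inequality $\lambda\psi'(\lambda)-\psi(\lambda)\log\psi(\lambda)\le\tfrac{\lambda^2L^2}{2}\psi(\lambda)$. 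Dividing by $\lambda^2\psi(\lambda)$ rewrites this as $\tfrac{d}{d\lambda}\big(\tfrac1\lambda\log\psi(\lambda)\big)\le L^2/2$; since $\tfrac1\lambda\log\psi(\lambda)\to\mathbb{E}f(g)=0$ as $\lambda\to0^+$, integrating yields $\mathbb{E}e^{\lambda f(g)}\le e^{\lambda^2L^2/2}$ for all $\lambda>0$. Finally a Chernoff bound, $\mathbb{P}(f(g)\ge Lt)\le e^{-\lambda Lt}\mathbb{E}e^{\lambda f(g)}\le e^{-\lambda Lt+\lambda^2L^2/2}$ optimized at $\lambda=t/L$, gives $\mathbb{P}(f(g)\ge Lt)\le e^{-t^2/2}$; undoing the centering recovers the stated bound.

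The main obstacle is the log-Sobolev inequality itself (equivalently, the Gaussian isoperimetric inequality) with its sharp constant — this is the real content, and it is exactly what produces the exact exponent $t^2/2$ rather than a constant worse by a factor of order $\pi^2$, which is all an elementary smoothing/interpolation argument along a path $g\sin\theta+g'\cos\theta$ would give. The remaining pieces — vectorization, mollification, the Herbst integration, and the Chernoff optimization — are routine, the only minor technical care being the a priori finiteness of $\psi(\lambda)$ and the limiting argument in the approximation step.
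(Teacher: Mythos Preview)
The paper does not prove this lemma at all: it is quoted verbatim from \cite[Proposition 10.3]{halko2011finding} and used as a black box in the proof of Theorem~\ref{thm4.8}. So there is no ``paper's own proof'' to compare against.

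Your argument is correct and is one of the standard routes to the sharp Gaussian concentration bound: reduce to an $L$-Lipschitz function on $\mathbb{R}^N$ by vectorization, use the log-Sobolev inequality with constant $2$ for $\gamma_N$, run Herbst's argument to get the sub-Gaussian moment bound $\mathbb{E}e^{\lambda(f-\mathbb{E}f)}\le e^{\lambda^2L^2/2}$, and optimize the Chernoff bound. The technical caveats you flag (mollification, finiteness of the MGF, passage to the limit) are the right ones and are routine. The only remark worth adding is that the log-Sobolev inequality is indeed the nontrivial input here; an alternative route with the same sharp constant goes through the Gaussian isoperimetric inequality (Borell, Sudakov--Tsirelson) or the Pisier--Maurey interpolation you allude to, but the latter in its simplest form gives $e^{-2t^2/\pi^2}$ rather than $e^{-t^2/2}$, exactly as you note.
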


To gain the expected and probabilistic error bounds, we shall render a few studies in RMT to control the norm of the pseudo-inverse of a standard Gaussian matrix more precisely in the asymptotic setting, instead of utilizing bounds supplied by \cite[Proposition 10.2, Proposition 10.4]{halko2011finding}. The ensuing definition provides the spectral distribution of a matrix, a fundamental instrument for illustrating spectral traits in RMT.

\begin{definition}\label{def4.4}
    The empirical spectral distribution (ESD) of an $m\times m$ matrix $\mathbf{A}$ with real eigenvalues is described as follows: 
    $$F^\mathbf{A}(x)=\frac{1}{m}\sum_{j=1}^{m}\mathbb{I}(\lambda_j(\mathbf{A})\leq x),$$
    where $\lambda_j(\mathbf{A})$ is the $j$th largest eigenvalue of $\mathbf{A}$ and $\mathbb{I}(\cdot)$ is the indicator function.
\end{definition}

Some analyse exists for ascertaining the asymptotic behavior of the ESD for a specified series of random matrices. The following lemma elucidates the limiting spectral behavior of $\mathbf{G}\mathbf{G}^\top$ \cite[Theorem 3.6]{bai2010spectral}, where $\mathbf{G}$ is a standard Gaussian matrix.

\begin{lemma}[M-P law for the sample covariance matrices]\label{lemma4.5}
    Suppose that $\mathbf{X}=(x_{ij})\in \mathbb{R}^{m\times n}$ be $i.i.d.$ real random variables with mean $0$ and variance $\sigma^2$, and define $\mathbf{S}=\mathbf{X}\mathbf{X}^\top/n$. Furthermore, assume that $m/n\to \gamma\in(0,+\infty)$ as $n,p\to\infty$, then with probability $1$, the ESD of $\mathbf{S}$ converges weakly to the M-P law $F_\gamma$, where $F_\gamma$ is a cumulative distribution function with its probability density function as follows:
    \begin{equation}\label{eq4.6}
        \begin{aligned}
            f_\gamma(x)=\left\{\begin{array}{ll}\displaystyle\frac{1}{2\pi x\gamma\sigma^2}\sqrt{(b-x)(x-a)},&\text{if}\quad a\leq x\leq b,\\0,&\text{otherwise,}\end{array}\right.
        \end{aligned}
    \end{equation}
    where $a=\sigma^2(1-\sqrt{\gamma})^2,b=\sigma^2(1+\sqrt{\gamma})^2$, $F_\gamma$ has a point mass $1-1/{\gamma}$ at $0$ if $\gamma>1$.
\end{lemma}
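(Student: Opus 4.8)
The plan is to prove the Marchenko--Pastur law by the Stieltjes transform method, the classical route taken in \cite{bai2010spectral}. For $z$ in the open upper half plane $\mathbb{C}^{+}=\{z:\mathrm{Im}\,z>0\}$ I would introduce the Stieltjes transform of the ESD of $\mathbf{S}$,
\[
s_n(z)=\int\frac{1}{x-z}\,dF^{\mathbf{S}}(x)=\frac{1}{m}\,\mathrm{tr}\big((\mathbf{S}-z\mathbf{I}_m)^{-1}\big),
\]
and let $s(z)=\int(x-z)^{-1}\,dF_\gamma(x)$ be the Stieltjes transform of the candidate limit. By the Stieltjes continuity theorem, the weak (almost sure) convergence $F^{\mathbf{S}}\to F_\gamma$ is equivalent to $s_n(z)\to s(z)$ almost surely for each fixed $z\in\mathbb{C}^{+}$; the density \eqref{eq4.6} and the atom $1-1/\gamma$ at $0$ for $\gamma>1$ are then read off from $s$ via the inversion formula $f_\gamma(x)=\tfrac1\pi\lim_{y\downarrow0}\mathrm{Im}\,s(x+\mathrm{i}y)$. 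So the problem splits into: (i) a truncation reduction, (ii) deriving and solving the self-consistent equation for $s$, and (iii) upgrading convergence in probability to almost sure.

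First I would perform the standard truncation and centralization, replacing each $x_{ij}$ by a suitably recentred and rescaled version of $x_{ij}\mathbb{I}(|x_{ij}|\le\eta_n\sqrt n)$ with $\eta_n\downarrow0$ slowly. Using the perturbation inequalities for empirical spectral distributions --- the sup-norm of $F^{\mathbf{A}}-F^{\mathbf{B}}$ is bounded by the normalized rank of $\mathbf{A}-\mathbf{B}$, and a Hoffman--Wielandt-type bound controls the ESD under an $L^2$ change of the entries --- one checks the limiting ESD is unaffected. This reduces everything to uniformly bounded entries with mean $0$ and variance $\sigma^2$, so all moments appearing below are finite.

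The core step is the self-consistent equation. Let $\mathbf{x}_k^{\top}$ be the $k$-th row of $\mathbf{X}$, and let $\mathbf{X}^{(k)}$, $\mathbf{S}^{(k)}=\mathbf{X}^{(k)}\mathbf{X}^{(k)\top}/n$ denote the objects obtained by deleting that row. The Schur complement formula, together with the identity $\mathbf{B}^{\top}(\mathbf{B}\mathbf{B}^{\top}-z)^{-1}\mathbf{B}=\mathbf{I}+z(\mathbf{B}^{\top}\mathbf{B}-z)^{-1}$, gives
\[
\big((\mathbf{S}-z\mathbf{I}_m)^{-1}\big)_{kk}=\frac{1}{-z-\tfrac{z}{n}\,\mathbf{x}_k^{\top}\big(\tfrac1n\mathbf{X}^{(k)\top}\mathbf{X}^{(k)}-z\mathbf{I}_n\big)^{-1}\mathbf{x}_k}.
\]
Concentration of quadratic forms --- for an $n$-vector $\mathbf{q}$ with i.i.d.\ entries of variance $\sigma^2$ independent of $\mathbf{M}$, $\tfrac1n\mathbf{q}^{\top}\mathbf{M}\mathbf{q}=\tfrac{\sigma^2}{n}\mathrm{tr}\,\mathbf{M}+o(1)$ --- combined with the rank-one interlacing bound $\big|\mathrm{tr}(\mathbf{S}-z)^{-1}-\mathrm{tr}(\mathbf{S}^{(k)}-z)^{-1}\big|\le|\mathrm{Im}\,z|^{-1}$ shows the quadratic form is $\sigma^2\underline s_n(z)+o(1)$, where $\underline s_n(z)=\tfrac1n\mathrm{tr}(\tfrac1n\mathbf{X}^{\top}\mathbf{X}-z)^{-1}$ is the companion transform and $\underline s_n(z)=\gamma\, s_n(z)+\tfrac{\gamma-1}{z}$ since the two Gram matrices share nonzero spectra. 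Averaging $\big((\mathbf{S}-z\mathbf{I}_m)^{-1}\big)_{kk}$ over $k$ then forces any subsequential limit $s$ of $s_n(z)$ to satisfy
\[
s=\frac{1}{\sigma^2(1-\gamma)-z-\gamma\sigma^2 z\,s},\qquad\text{equivalently}\qquad \gamma\sigma^2 z\,s^2+\big(z-\sigma^2(1-\gamma)\big)s+1=0 .
\]
This quadratic has exactly one root with positive imaginary part for $z\in\mathbb{C}^{+}$, so $s_n(z)\to s(z)$ in probability; solving it and applying the inversion formula recovers the density \eqref{eq4.6}, with the edges $\sigma^2(1\pm\sqrt\gamma)^2$ appearing as the zeros of the discriminant, while the simple pole of $s$ at $z=0$ present when $\gamma>1$ accounts for the mass $1-1/\gamma$ (consistent with $\mathbf{S}$ having at least $m-n$ zero eigenvalues).

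Finally I would upgrade to almost sure convergence by martingale concentration: expose the columns of $\mathbf{X}$ one at a time, write $s_n(z)-\mathbb{E}\,s_n(z)$ as a sum of martingale differences, bound each difference by $O(1/m)$ via the rank-one perturbation inequality, and apply Azuma--Hoeffding to get $\mathbb{P}(|s_n(z)-\mathbb{E}\,s_n(z)|>\varepsilon)\le 2e^{-c m\varepsilon^2}$; Borel--Cantelli then gives almost sure convergence for all $z$ in a countable dense subset of $\mathbb{C}^{+}$, hence everywhere by a normal-family argument. The main obstacle is the bookkeeping in steps (ii)--(iii): proving that every error --- the quadratic-form fluctuation, the gap between $s_n$ and its leave-one-row-out version, and the replacement of $\mathbf{x}_k^{\top}(\cdots)\mathbf{x}_k$ by a trace --- is genuinely $o(1)$ uniformly in $k$, and, for $\gamma>1$, handling the singularity of $\mathbf{S}$ near $z=0$ so the atom is produced correctly. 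Since the statement is precisely \cite[Theorem 3.6]{bai2010spectral}, one may instead just cite it; alternatively the moment method (matching $\tfrac1m\mathbb{E}\,\mathrm{tr}\,\mathbf{S}^k$ to the $k$-th moment of $F_\gamma$ by counting non-crossing pair partitions, together with a trace-concentration bound) yields the same conclusion.
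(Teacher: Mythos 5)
The paper does not prove this lemma at all: it is quoted verbatim as \cite[Theorem 3.6]{bai2010spectral} (the Marchenko--Pastur law) and used as a black box, so the intended ``proof'' is exactly the citation you offer in your final sentence. Your Stieltjes-transform outline is the standard argument from that reference and is correct as a sketch --- the self-consistent equation $\gamma\sigma^2 z\,s^2+\bigl(z-\sigma^2(1-\gamma)\bigr)s+1=0$, the inversion step recovering \eqref{eq4.6}, and the atom $1-1/\gamma$ at $0$ for $\gamma>1$ are all right --- so it matches the approach the paper relies on.
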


By Lemma \ref{lemma4.5}, more exact norm bounds for the pseudo-inverse of the Gaussian matrix are delineated below in the asymptotic setting. 

\begin{lemma}\label{lemma4.6}
    For an $m\times n$ $(m\leq n)$ standard Gaussian matrix $\mathbf{G}$, let $m/n\to\gamma\in(0,1)$ as $n\to\infty$, then with probability $1$,
    \begin{align}
        &\|\mathbf{G}^\dagger\|_\mathrm{F}\to\sqrt{\frac{\gamma}{1-\gamma}} \text{ and }
       \lim\limits_{n\to\infty}\|\mathbf{G}^\dagger\|\leq\sqrt{\left(n(1-\sqrt{\gamma})^2\right)^{-1}}.\label{eq4.7}
    \end{align}
\end{lemma}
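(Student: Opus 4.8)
The plan is to translate the two norm bounds into statements about the extreme eigenvalues of the sample covariance matrix $\mathbf{S}=\mathbf{G}\mathbf{G}^\top/n$ and then invoke Lemma \ref{lemma4.5}. Since $\mathbf{G}\in\mathbb{R}^{m\times n}$ with $m\leq n$, $\mathbf{G}$ has full row rank almost surely, so $\mathbf{G}^\dagger=\mathbf{G}^\top(\mathbf{G}\mathbf{G}^\top)^{-1}$ and the nonzero singular values of $\mathbf{G}^\dagger$ are the reciprocals of the singular values of $\mathbf{G}$, i.e. $\sigma_j(\mathbf{G}^\dagger)^2 = 1/\sigma_j(\mathbf{G})^2 = 1/(n\,\lambda_j(\mathbf{S}))$ for $j=1,\dots,m$. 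Hence $\|\mathbf{G}^\dagger\|_\mathrm{F}^2 = \tfrac1n\sum_{j=1}^m \lambda_j(\mathbf{S})^{-1} = \tfrac{m}{n}\int x^{-1}\,dF^{\mathbf{S}}(x)$ and $\|\mathbf{G}^\dagger\|^2 = \bigl(n\,\lambda_{\min}(\mathbf{S})\bigr)^{-1}$.

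For the Frobenius bound I would apply Lemma \ref{lemma4.5} with $\sigma^2=1$ and the aspect ratio $\gamma\in(0,1)$ (note: in the lemma the roles of $m,n$ give the ratio $m/n\to\gamma$, which is exactly our setting since $m\le n$), so $F^{\mathbf{S}}$ converges weakly a.s. to the M-P law $F_\gamma$ supported on $[a,b]$ with $a=(1-\sqrt\gamma)^2>0$, $b=(1+\sqrt\gamma)^2$. Because $\gamma<1$ the support is bounded away from $0$, so the function $x\mapsto x^{-1}$ is bounded and continuous on a neighborhood of $\mathrm{supp}(F_\gamma)$; combined with the a.s. convergence of $\lambda_{\min}(\mathbf{S})$ to $a$ (so that no mass escapes toward $0$), weak convergence upgrades to convergence of the integral: $\int x^{-1}\,dF^{\mathbf{S}}(x)\to \int_a^b x^{-1} f_\gamma(x)\,dx$. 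A direct evaluation of this integral against the M-P density gives $\int_a^b x^{-1}f_\gamma(x)\,dx = 1/(1-\gamma)$; multiplying by $m/n\to\gamma$ yields $\|\mathbf{G}^\dagger\|_\mathrm{F}^2\to \gamma/(1-\gamma)$ a.s., which is \eqref{eq4.7}. The moment integral $\int x^{-1} f_\gamma(x)\,dx$ is a standard M-P computation (it equals the negative-first moment of the M-P law), so I would just cite or sketch it rather than grind the contour/substitution out.

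For the spectral-norm bound I would use that, under the hypotheses of Lemma \ref{lemma4.5} (more precisely the companion extreme-eigenvalue result for Gaussian, equivalently Bai–Yin), $\lambda_{\min}(\mathbf{S})\to a=(1-\sqrt\gamma)^2$ almost surely, hence $n\,\lambda_{\min}(\mathbf{G}\mathbf{G}^\top/n)=\lambda_{\min}(\mathbf{G}\mathbf{G}^\top)$ behaves like $n(1-\sqrt\gamma)^2$, and therefore $\|\mathbf{G}^\dagger\|^2=\lambda_{\min}(\mathbf{G}\mathbf{G}^\top)^{-1}$ satisfies $\lim_{n\to\infty}\|\mathbf{G}^\dagger\|\le \bigl(n(1-\sqrt\gamma)^2\bigr)^{-1/2}$ a.s.; the inequality (rather than equality) in \eqref{eq4.7} accommodates the fact that $n$ and $\gamma$ are linked only asymptotically, so $n(1-\sqrt\gamma)^2$ is to be read as the leading-order value of $\lambda_{\min}(\mathbf{G}\mathbf{G}^\top)$.

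The main obstacle is the upgrade from weak convergence of $F^{\mathbf{S}}$ to convergence of the unbounded-looking functional $\int x^{-1}\,dF^{\mathbf{S}}$: weak convergence alone does not control integrals of unbounded test functions, so one must separately rule out eigenvalues of $\mathbf{S}$ approaching $0$. This is exactly what the a.s. convergence $\lambda_{\min}(\mathbf{S})\to(1-\sqrt\gamma)^2>0$ provides (the same fact needed for the spectral bound), after which $x^{-1}$ may be replaced by a bounded continuous function agreeing with it on $[a-\epsilon,b+\epsilon]$ and the argument is routine. I would flag that this no-outliers-near-zero fact is the crux, cite the relevant Bai–Yin / Geman-type result for the smallest eigenvalue of a Wishart matrix, and keep the M-P moment integral itself at the level of a one-line statement.
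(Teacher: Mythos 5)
Your proposal is correct and follows essentially the same route as the paper's proof: reduce both norms to the eigenvalues of $\mathbf{S}=\mathbf{G}\mathbf{G}^\top/n$, invoke Lemma \ref{lemma4.5}, handle the unbounded integrand $x^{-1}$ by using that the smallest eigenvalue stays away from $0$ (the paper truncates to $(1-\sqrt{\gamma})^{-2}\wedge(1/x)$ and applies the Portmanteau theorem, exactly the step you flag as the crux), and use $\int x^{-1}\,\mathrm{d}F_\gamma(x)=1/(1-\gamma)$. The only cosmetic differences are that the paper evaluates this inverse moment explicitly by a contour integral while you cite it as standard, and you invoke the Bai--Yin extreme-eigenvalue result explicitly where the paper states the smallest-eigenvalue control more casually — if anything your treatment of that point is the more careful one.
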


\begin{proof}[Proof of Lemma \ref{lemma4.6}]
    By Lemma \ref{lemma4.5}, the ESD of $\mathbf{G}\mathbf{G}^\top/n$ converges weakly to $F_\gamma$ with $\sigma^2=1$, where $F_\gamma$ has a density function $f_\gamma$ as delineated in \eqref{eq4.6}. The support of $f_\gamma$ is $[(1-\sqrt{\gamma})^2,(1+\sqrt{\gamma})^2]$ with probability $1$. 
    Then $$\|\mathbf{G}^\dagger\|^2=\lambda_1\left((\mathbf{G}\mathbf{G}^\top)^{-1}\right)=\left(\lambda_m(\mathbf{G}\mathbf{G}^\top)\right)^{-1}=\left(n\lambda_m(\mathbf{G}\mathbf{G}^\top/n)\right)^{-1}\leq\left(n(1-\sqrt{\gamma})^2\right)^{-1},$$ where $\lambda_j(\cdot)$ is the $j$th largest eigenvalue of the specified matrix. For the Frobenius norm, we have $\|\mathbf{G}^\dagger\|_\mathrm{F}^2=\mbox{tr}\left((\mathbf{G}\mathbf{G}^\top)^{-1}\right)=m/n\cdot\mbox{tr}\left((\mathbf{G}\mathbf{G}^\top/n)^{-1}\right)/m.$
    Further we are going to prove that $$\frac{1}{m}\mbox{tr}\left((\mathbf{G}\mathbf{G}^\top/n)^{-1}\right)=\int \frac{1}{x}\mathrm{d}F^{\mathbf{G}\mathbf{G}^\top/n}(x)\to \int \frac{1}{x}\mathrm{d}F_\gamma(x),$$ where $F^{\mathbf{G}\mathbf{G}^\top/n}(x)$ is the ESD of $\mathbf{G}\mathbf{G}^\top/n$.
    In fact, the convergence of the integral above holds since the weak convergence of the cumulative distribution function $F^{\mathbf{G}\mathbf{G}^\top/n}$ by Lemma \ref{lemma4.5}. Specifically, given the weak convergence of the cumulative distribution function, one can infer the weak convergence of the Lebesgue-Stieltjes measure defined by the corresponding cumulative distribution function. In the light of Lemma \ref{lemma4.5}, the smallest eigenvalue of $\mathbf{G}\mathbf{G}^\top/n$ is larger than $(1-\sqrt{\gamma})^2$ almost surely, then $(1-\sqrt{\gamma})^{-2}\wedge (1/x)$ is continuous and bounded, where $\wedge$ denotes the minimum value of two numbers. This extrapolates the convergence of the integral above in line with the Portmanteau Theorem.

 For $\gamma<1$, letting $x=1+\gamma+2\sqrt{\gamma}\cos w$, where $w\in[0,\pi]$, and $\zeta=e^{iw}$, then
    \begin{small}
        \begin{align*}
            &\int \frac{1}{x}\mathrm{d}F_\gamma(x)=\int_{0}^{\pi}\frac{2}{\pi}\frac{\sin^{2}w}{(1+\gamma+2\sqrt{\gamma}\cos w)^{2}}\mathrm{d}w=\frac{1}{\pi}\int_{0}^{2\pi}\frac{((e^{iw}-e^{-iw})/2i)^{2}}{(1+\gamma+\sqrt{\gamma}(e^{iw}+e^{-iw}))^{2}}\mathrm{d}w\\
   &=-\frac{1}{4\pi i}\int_{|\zeta|=1}\frac{(\zeta^2-1)^{2}}{\zeta\gamma\left((\zeta+\sqrt{\gamma})(\zeta+1/\sqrt{\gamma})\right)^{2}}\mathrm{d}\zeta=\frac{1}{1-\gamma}.\notag
\end{align*}
\end{small}
The integrand above is calculated by Cauchy integration with 
 three poles at $\zeta_0=0,\zeta_1=-\sqrt{\gamma},\zeta_2=-1/\sqrt{\gamma}$ when $\gamma<1$, where $\zeta_0$ is a simple pole and the order of $\zeta_2,\zeta_3$ is $2$. This extrapolates that $\|\mathbf{G}^\dagger\|_\mathrm{F}^2\to{\gamma}/{(1-\gamma)}$.
 \end{proof}


The succeeding lemma expounds the expected and probabilistic error bounds of \texttt{randQB} \cite{halko2011finding}, which are exactly the same as those of \texttt{farPCA} without shifted technique, owing to their mathematical equivalence.

\begin{lemma}[Expected and probabilistic error bounds for \texttt{randQB} {\cite{halko2011finding}}]\label{lemma4.7}
    Generate an $n\times l$ standard Gaussian matrix $\mathbf{G}$, where $l=k+h\leq n$, $k\geq 2$, and $h\geq 2$. Construct $\mathbf{Y}=\mathbf{A}\mathbf{G}$. Consequently, if $\mathbf{Q}$ is an orthonormal basis for the range of $\mathbf{Y}$ and $\mathbf{Q}_P$ serves as an orthonormal basis for the range of $(\mathbf{A}\mathbf{A}^\top)^P\mathbf{Y}$ with $P$ being a natural number, then
    \begin{footnotesize}
    \begin{align}
        &\mathbb{E}\|(\mathbf{I}-\mathbf{Q}\mathbf{Q}^\top)\mathbf{A}\|_\mathrm{F}\leq\left(1+\frac{k}{h-1}\right)^{\frac{1}{2}}\left(\sum_{j=k+1}^{n}\sigma_j^2\right)^{\frac{1}{2}},\label{eq4.9}\\
        &\mathbb{E}\|(\mathbf{I}-\mathbf{Q}_P\mathbf{Q}_P^\top)\mathbf{A}\|\leq\left(\left(1+\sqrt{\frac{k}{h-1}}\right)\sigma_{k+1}^{2P+1}+\frac{e\sqrt{k+h}}{h}\left(\sum_{j=k+1}^{n}\sigma_j^{2(2P+1)}\right)^{\frac{1}{2}}\right)^{\frac{1}{2P+1}}.\label{eq4.10}
    \end{align}
    \end{footnotesize}
    Further assume that $h\geq 4$, then for all $u,t\geq 1$,
    \begin{small}
    \begin{align}
        \|(\mathbf{I}-\mathbf{Q}\mathbf{Q}^\top)\mathbf{A}\|_\mathrm{F}\leq\left(1+t\sqrt{\frac{3k}{h+1}}\right)\left(\sum_{j=k+1}^{n}\sigma_j^2\right)^{\frac{1}{2}}+ut\frac{e\sqrt{k+h}}{h+1}\sigma_{k+1}\label{eq4.11}
    \end{align}
    \end{small}
    with failure probability not exceeding $2t^{-h}+e^{-u^2/2}$, and 
    \begin{footnotesize}
    \begin{multline}\label{eq4.12}
    \|(\mathbf{I}-\mathbf{Q}_P\mathbf{Q}_P^\top)\mathbf{A}\|\leq\\
    \left(\left(1+t\sqrt{\frac{3k}{h+1}}\right)\sigma_{k+1}^{2P+1}+t\frac{e\sqrt{k+h}}{h+1}\left(\sum_{j=k+1}^{n}\sigma_j^{2(2P+1)}\right)^{\frac{1}{2}}+ut\frac{e\sqrt{k+h}}{h+1}\sigma_{k+1}^{2P+1}\right)^{\frac{1}{2P+1}}
    \end{multline}
    \end{footnotesize}
     with failure probability not exceeding $2t^{-h}+e^{-u^2/2}$.
\end{lemma}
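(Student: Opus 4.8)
The plan is to derive everything from the deterministic bound of Lemma~\ref{lemma4.1} together with the rotational invariance of the standard Gaussian matrix, exactly along the lines of \cite{halko2011finding}. Write the SVD of $\mathbf{A}$ as in \eqref{eq4.1} with $\mathbf{V}=[\mathbf{V}_1\ \mathbf{V}_2]$ orthogonal and take $\mathbf{\Phi}=\mathbf{G}$. Since $\mathbf{V}^\top\mathbf{G}$ is again standard Gaussian, the blocks $\mathbf{\Phi}_1=\mathbf{V}_1^\top\mathbf{G}\in\mathbb{R}^{k\times l}$ and $\mathbf{\Phi}_2=\mathbf{V}_2^\top\mathbf{G}\in\mathbb{R}^{(n-k)\times l}$ from \eqref{eq4.2} are independent standard Gaussian matrices, and $\mathbf{\Phi}_1$ has full row rank almost surely, so Lemma~\ref{lemma4.1} gives
\[
 \interleave(\mathbf{I}-\mathbf{Q}\mathbf{Q}^\top)\mathbf{A}\interleave^2\le\interleave\mathbf{\Sigma}_2\interleave^2+\interleave\mathbf{\Sigma}_2\mathbf{\Phi}_2\mathbf{\Phi}_1^\dagger\interleave^2 .
\]
The strategy throughout is: condition on $\mathbf{\Phi}_1$, handle the Gaussian matrix $\mathbf{\Phi}_2$ with Lemmas~\ref{lemma4.2}–\ref{lemma4.3}, and then take expectations (or union bounds) over $\mathbf{\Phi}_1$ using norm estimates for its pseudo-inverse.

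For the expected Frobenius bound \eqref{eq4.9} I would condition on $\mathbf{\Phi}_1$ and apply \eqref{eq4.4} with $\mathbf{S}=\mathbf{\Sigma}_2$, $\mathbf{T}=\mathbf{\Phi}_1^\dagger$, giving $\mathbb{E}_{\mathbf{\Phi}_2}\interleave\mathbf{\Sigma}_2\mathbf{\Phi}_2\mathbf{\Phi}_1^\dagger\interleave_\mathrm{F}^2=\|\mathbf{\Sigma}_2\|_\mathrm{F}^2\|\mathbf{\Phi}_1^\dagger\|_\mathrm{F}^2$; using Jensen on $t\mapsto\sqrt t$, then the identity $\mathbb{E}\|\mathbf{\Phi}_1^\dagger\|_\mathrm{F}^2=k/(h-1)$ for a $k\times(k+h)$ Gaussian matrix, and $\|\mathbf{\Sigma}_2\|_\mathrm{F}^2=\sum_{j>k}\sigma_j^2$, yields \eqref{eq4.9}. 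For the spectral bound \eqref{eq4.10} at $P=0$, I would instead invoke \eqref{eq4.5}, $\mathbb{E}_{\mathbf{\Phi}_2}\interleave\mathbf{\Sigma}_2\mathbf{\Phi}_2\mathbf{\Phi}_1^\dagger\interleave\le\|\mathbf{\Sigma}_2\|\,\mathbb{E}\|\mathbf{\Phi}_1^\dagger\|_\mathrm{F}+\|\mathbf{\Sigma}_2\|_\mathrm{F}\,\mathbb{E}\|\mathbf{\Phi}_1^\dagger\|$, together with the Gaussian pseudo-inverse estimates $\mathbb{E}\|\mathbf{\Phi}_1^\dagger\|_\mathrm{F}\le\sqrt{k/(h-1)}$ and $\mathbb{E}\|\mathbf{\Phi}_1^\dagger\|\le e\sqrt{k+h}/h$, and then bound $\mathbb{E}\sqrt{\|\mathbf{\Sigma}_2\|^2+\interleave\mathbf{\Sigma}_2\mathbf{\Phi}_2\mathbf{\Phi}_1^\dagger\interleave^2}\le\|\mathbf{\Sigma}_2\|+\mathbb{E}\interleave\mathbf{\Sigma}_2\mathbf{\Phi}_2\mathbf{\Phi}_1^\dagger\interleave$ using $\sqrt{a^2+b^2}\le a+b$, with $\|\mathbf{\Sigma}_2\|=\sigma_{k+1}$.

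For the probabilistic statements \eqref{eq4.11}–\eqref{eq4.12} I would again condition on $\mathbf{\Phi}_1$. The map $\mathbf{\Phi}_2\mapsto\interleave\mathbf{\Sigma}_2\mathbf{\Phi}_2\mathbf{\Phi}_1^\dagger\interleave$ is Lipschitz in the Frobenius norm with constant $\|\mathbf{\Sigma}_2\|\,\|\mathbf{\Phi}_1^\dagger\|$, so Lemma~\ref{lemma4.3} produces a subgaussian tail above the conditional mean contributing the $e^{-u^2/2}$ term. Separately, a lower tail bound for the smallest singular value of the $k\times(k+h)$ Gaussian $\mathbf{\Phi}_1$ controls $\|\mathbf{\Phi}_1^\dagger\|$ (and $\|\mathbf{\Phi}_1^\dagger\|_\mathrm{F}$) off an event of probability $O(t^{-h})$; a union bound over the two events gives the stated failure probability $2t^{-h}+e^{-u^2/2}$, and substituting the resulting bounds into the deterministic inequality gives \eqref{eq4.11}. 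The power-iteration versions \eqref{eq4.10} and \eqref{eq4.12} then follow by applying the $P=0$ analysis to $\mathbf{B}=(\mathbf{A}\mathbf{A}^\top)^P\mathbf{A}=\mathbf{U}_A\mathbf{\Sigma}_A^{2P+1}\mathbf{V}_A^\top$, whose singular values are $\sigma_j^{2P+1}$, whose right singular blocks coincide with those of $\mathbf{A}$ (so $\mathbf{\Phi}_1,\mathbf{\Phi}_2$ are unchanged), and whose sketch $\mathbf{B}\mathbf{G}$ has range spanned by $\mathbf{Q}_P$, followed by the norm--power inequality $\interleave(\mathbf{I}-\mathbf{Q}_P\mathbf{Q}_P^\top)\mathbf{A}\interleave\le\interleave(\mathbf{I}-\mathbf{Q}_P\mathbf{Q}_P^\top)(\mathbf{A}\mathbf{A}^\top)^P\mathbf{A}\interleave^{1/(2P+1)}$ valid for any orthogonal projector, plus Jensen's inequality for the concave map $t\mapsto t^{1/(2P+1)}$.

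The main obstacle is supplying the non-asymptotic Gaussian pseudo-inverse estimates — the exact value $\mathbb{E}\|\mathbf{\Phi}_1^\dagger\|_\mathrm{F}^2=k/(h-1)$, the bound $\mathbb{E}\|\mathbf{\Phi}_1^\dagger\|\le e\sqrt{k+h}/h$, and the lower tail for the smallest singular value of $\mathbf{\Phi}_1$ — since these (Propositions 10.2 and 10.4 of \cite{halko2011finding}) carry the genuinely probabilistic content, while the rest is conditioning, Lemmas~\ref{lemma4.2}–\ref{lemma4.3}, and elementary algebra. Since the whole argument is that of \cite{halko2011finding}, in the write-up I would either reproduce it in this conditioned form or simply cite the corresponding theorems there.
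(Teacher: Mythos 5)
Your proposal is correct and follows essentially the same route as the paper, which does not prove Lemma \ref{lemma4.7} itself but quotes it from \cite{halko2011finding}: the argument you outline (deterministic bound of Lemma \ref{lemma4.1}, rotational invariance, conditioning on $\mathbf{\Phi}_1$, Lemmas \ref{lemma4.2}--\ref{lemma4.3}, the Gaussian pseudo-inverse estimates of Propositions 10.2 and 10.4 of \cite{halko2011finding}, and the projector--power inequality with Jensen for the $P>0$ cases) is precisely the proof in that reference. Citing those propositions for the pseudo-inverse estimates, as you suggest, is exactly what is expected here.
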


To enable effective error control for large matrices, we leverage innovations from RMT to establish tighter asymptotic bounds. Instead of employing the norm bounds of the pseudo-inverse of the Gaussian matrix presented by \cite[Proposition 10.2, Proposition 10.4]{halko2011finding}, Lemma \ref{lemma4.6} is used to give a more exact norm bound when $k$ and $h$ tend to infinity. The results are as follows. 

\begin{theorem}[Asymptotic expected and probabilistic error bounds for \texttt{randQB} (\texttt{farPCA} without shifted power iteration)]\label{thm4.8}
    Generate an $n\times l$ standard Gaussian matrix $\mathbf{G}$, where $l=k+h\leq n$, $k\geq 2$, and $h\geq 2$. Let $\mathbf{Y}=\mathbf{A}\mathbf{G}$, then if $\mathbf{Q}$ is an orthonormal basis for the range of $\mathbf{Y}$ and $\mathbf{Q}_P$ serves as an orthonormal basis for the range of $(\mathbf{A}\mathbf{A}^\top)^P\mathbf{Y}$ with $P$ being a natural number, as $k\to\infty,k/(k+h)\to\gamma\in(0,1)$, the subsequent equations hold with probability $1$:
    \begin{footnotesize}
    \begin{align}
        &\mathbb{E}\|(\mathbf{I}-\mathbf{Q}\mathbf{Q}^\top)\mathbf{A}\|_\mathrm{F}\leq\left(1+\frac{k}{h}\right)^{\frac{1}{2}}\left(\sum_{j=k+1}^{n}\sigma_j^2\right)^{\frac{1}{2}},\label{eq4.13}\\
        &\mathbb{E}\|(\mathbf{I}-\mathbf{Q}_P\mathbf{Q}_P^\top)\mathbf{A}\|\leq\left(\left(1+\sqrt{\frac{k}{h}}\right)\sigma_{k+1}^{2P+1}+\frac{\sqrt{k+h}+\sqrt{k}}{h}\left(\sum_{j=k+1}^{n}\sigma_j^{2(2P+1)}\right)^{\frac{1}{2}}\right)^{\frac{1}{2P+1}}.\label{eq4.14}
    \end{align}
    \end{footnotesize}
    Furthermore, for all $u>0$,
    \begin{small}
    \begin{align}
        \|(\mathbf{I}-\mathbf{Q}\mathbf{Q}^\top)\mathbf{A}\|_\mathrm{F}\leq\left(1+\sqrt{\frac{k}{h}}\right)\left(\sum_{j=k+1}^{n}\sigma_j^2\right)^{\frac{1}{2}}+u\frac{\sqrt{k+h}+\sqrt{k}}{h}\sigma_{k+1},\label{eq4.15}
    \end{align}
    \end{small}
    with failure probability not exceeding $e^{-u^2/2}$, and 
    \begin{footnotesize}
    \begin{multline}\label{eq4.16}
    \|(\mathbf{I}-\mathbf{Q}_P\mathbf{Q}_P^\top)\mathbf{A}\|\leq\\
    \left(\left(1+\sqrt{\frac{k}{h}}\right)\sigma_{k+1}^{2P+1}+\frac{\sqrt{k+h}+\sqrt{k}}{h}\left(\sum_{j=k+1}^{n}\sigma_j^{2(2P+1)}\right)^{\frac{1}{2}}+u\frac{\sqrt{k+h}+\sqrt{k}}{h}\sigma_{k+1}^{2P+1}\right)^{\frac{1}{2P+1}}
    \end{multline}
    \end{footnotesize}
     with failure probability not exceeding $e^{-u^2/2}$.
\end{theorem}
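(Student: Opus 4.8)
The plan is to re-run the argument underlying Lemma~\ref{lemma4.7}, but to replace, at the two places where it invokes the crude moment and tail estimates for $\|\mathbf{\Phi}_1^\dagger\|_\mathrm{F}$ and $\|\mathbf{\Phi}_1^\dagger\|$ (\cite[Propositions 10.2 and 10.4]{halko2011finding}), the sharp almost-sure asymptotics supplied by Lemma~\ref{lemma4.6}. First I would fix the block decomposition \eqref{eq4.1}--\eqref{eq4.2} with $\mathbf{\Phi}=\mathbf{G}$ and note that, since $[\mathbf{V}_1\ \mathbf{V}_2]$ is orthogonal and the standard Gaussian law is rotationally invariant, $\mathbf{\Phi}_1=\mathbf{V}_1^\top\mathbf{G}\in\mathbb{R}^{k\times l}$ and $\mathbf{\Phi}_2=\mathbf{V}_2^\top\mathbf{G}\in\mathbb{R}^{(n-k)\times l}$ are independent standard Gaussian matrices, and $\mathbf{\Phi}_1$ has full row rank with probability one. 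Applying Lemma~\ref{lemma4.6} to $\mathbf{\Phi}_1$ with $\gamma=k/l=k/(k+h)\in(0,1)$ --- so that $\gamma/(1-\gamma)=k/h$ and $\sqrt{(l(1-\sqrt{\gamma})^2)^{-1}}=(\sqrt{k+h}-\sqrt{k})^{-1}=(\sqrt{k+h}+\sqrt{k})/h$ --- gives, with probability $1$ as $k\to\infty$,
\[
\|\mathbf{\Phi}_1^\dagger\|_\mathrm{F}\to\sqrt{k/h},\qquad \|\mathbf{\Phi}_1^\dagger\|\le\frac{\sqrt{k+h}+\sqrt{k}}{h}.
\]
Every subsequent estimate is performed on this probability-one event, with expectations and probabilities taken over $\mathbf{\Phi}_2$ only.

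For the expected bounds I would start from Lemma~\ref{lemma4.1}. For \eqref{eq4.13}, take $\interleave\cdot\interleave=\|\cdot\|_\mathrm{F}$ and apply \eqref{eq4.4} conditionally on $\mathbf{\Phi}_1$: $\mathbb{E}_{\mathbf{\Phi}_2}\|\mathbf{\Sigma}_2\mathbf{\Phi}_2\mathbf{\Phi}_1^\dagger\|_\mathrm{F}^2=\|\mathbf{\Sigma}_2\|_\mathrm{F}^2\,\|\mathbf{\Phi}_1^\dagger\|_\mathrm{F}^2\to(k/h)\sum_{j=k+1}^{n}\sigma_j^2$, and adding $\|\mathbf{\Sigma}_2\|_\mathrm{F}^2=\sum_{j=k+1}^{n}\sigma_j^2$ yields \eqref{eq4.13}. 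For \eqref{eq4.14}, replace $\mathbf{A}$ by $\mathbf{B}=(\mathbf{A}\mathbf{A}^\top)^P\mathbf{A}$, which has singular values $\sigma_j^{2P+1}$ and the same right singular vectors as $\mathbf{A}$, so Lemma~\ref{lemma4.1} applies verbatim with $\mathbf{Q}_P$ an orthonormal basis for the range of $\mathbf{B}\mathbf{G}=(\mathbf{A}\mathbf{A}^\top)^P\mathbf{Y}$ and the same $\mathbf{\Phi}_1,\mathbf{\Phi}_2$. Bounding $\mathbb{E}_{\mathbf{\Phi}_2}\|(\mathbf{I}-\mathbf{Q}_P\mathbf{Q}_P^\top)\mathbf{B}\|\le\|\mathbf{\Sigma}_2^{2P+1}\|+\mathbb{E}_{\mathbf{\Phi}_2}\|\mathbf{\Sigma}_2^{2P+1}\mathbf{\Phi}_2\mathbf{\Phi}_1^\dagger\|$ and using \eqref{eq4.5} together with the two limits above produces the expression inside the exponent in \eqref{eq4.14}; finally the power-trick inequality $\|(\mathbf{I}-\mathbf{Q}_P\mathbf{Q}_P^\top)\mathbf{A}\|\le\|(\mathbf{I}-\mathbf{Q}_P\mathbf{Q}_P^\top)\mathbf{B}\|^{1/(2P+1)}$ combined with Jensen's inequality (concavity of $x\mapsto x^{1/(2P+1)}$) delivers \eqref{eq4.14}.

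For the probabilistic bounds I would, still conditioning on $\mathbf{\Phi}_1$, apply Lemma~\ref{lemma4.3} to the map $\mathbf{\Phi}_2\mapsto\|\mathbf{\Sigma}_2^{2P+1}\mathbf{\Phi}_2\mathbf{\Phi}_1^\dagger\|$ (with $P=0$ for \eqref{eq4.15}), in both the Frobenius and spectral versions, which is Lipschitz with constant $L=\|\mathbf{\Sigma}_2^{2P+1}\|\,\|\mathbf{\Phi}_1^\dagger\|\le\sigma_{k+1}^{2P+1}(\sqrt{k+h}+\sqrt{k})/h$ in the limit, since $\|\mathbf{\Sigma}_2^{2P+1}\mathbf{X}\mathbf{\Phi}_1^\dagger\|\le\|\mathbf{\Sigma}_2^{2P+1}\|\,\|\mathbf{X}\|_\mathrm{F}\,\|\mathbf{\Phi}_1^\dagger\|$. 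Then $\mathbb{P}(f\ge\mathbb{E}f+Lu)\le e^{-u^2/2}$, the mean estimates from the previous paragraph, Lemma~\ref{lemma4.1}, and $1/(2P+1)$-th roots together give \eqref{eq4.15}--\eqref{eq4.16}. The extra tail term $2t^{-h}$ appearing in \eqref{eq4.11}--\eqref{eq4.12} disappears here precisely because $\|\mathbf{\Phi}_1^\dagger\|$ is pinned down by an almost-sure limit rather than a high-probability bound, which also removes the free parameter $t$.

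The main obstacle I anticipate is not any single inequality but the clean bookkeeping of the two limiting regimes: Lemma~\ref{lemma4.6} is an asymptotic ($k\to\infty$) almost-sure statement about $\mathbf{\Phi}_1$, whereas the expectations and the $e^{-u^2/2}$ concentration live over $\mathbf{\Phi}_2$ at each fixed $(k,h)$, so the displayed bounds must be read as limits --- one should argue that, along the probability-one event, the conditional quantities $\|\mathbf{\Sigma}_2\|_\mathrm{F}^2\|\mathbf{\Phi}_1^\dagger\|_\mathrm{F}^2$, $\|\mathbf{\Sigma}_2^{2P+1}\|_\mathrm{F}\|\mathbf{\Phi}_1^\dagger\|$, and the Lipschitz constants converge to the stated right-hand sides, so that the inequalities pass to the limit. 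A minor but necessary check is that the regime $k\to\infty$, $k/(k+h)\to\gamma\in(0,1)$ forces $h\to\infty$, which is exactly what makes Lemma~\ref{lemma4.6}'s hypothesis $m/n\to\gamma\in(0,1)$ applicable to $\mathbf{\Phi}_1\in\mathbb{R}^{k\times l}$.
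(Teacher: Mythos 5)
Your proposal is correct and follows essentially the same route as the paper's proof: apply Lemma \ref{lemma4.1} with $\mathbf{\Phi}_1=\mathbf{V}_1^\top\mathbf{G}$, $\mathbf{\Phi}_2=\mathbf{V}_2^\top\mathbf{G}$, replace the Halko--Martinsson--Tropp moment/tail bounds for $\|\mathbf{\Phi}_1^\dagger\|_\mathrm{F}$ and $\|\mathbf{\Phi}_1^\dagger\|$ by the asymptotic almost-sure bounds of Lemma \ref{lemma4.6}, use Lemma \ref{lemma4.2} for the conditional means and Lemma \ref{lemma4.3} for the Gaussian concentration, and lift to the power-iteration case via the $\mathbf{B}=(\mathbf{A}\mathbf{A}^\top)^P\mathbf{A}$ power-scheme argument (which the paper delegates to \cite[Proposition 8.6]{halko2011finding} while you spell it out). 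Your explicit remarks on the independence of $\mathbf{\Phi}_1,\mathbf{\Phi}_2$ and on reading the bounds as limits along the probability-one event are consistent with, and slightly more careful than, the paper's presentation.
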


\begin{proof}[Proof of Theorem \ref{thm4.8}]
    We restrict our proof to the non-power iteration case, with the bounds for the power iteration version being straightforwardly inferred from \cite[Proposition 8.6]{halko2011finding}. According to Lemma \ref{lemma4.1}, when the random matrix $\mathbf{\Phi}$ is chosen as a standard Gaussian matrix $\mathbf{G}$, denote $\mathbf{\Phi}_1=\mathbf{V}_1^\top\mathbf{G},\mathbf{\Phi}_2=\mathbf{V}_2^\top\mathbf{G}$, then
    $$\interleave(\mathbf{I}-\mathbf{Q}\mathbf{Q}^\top)\mathbf{A}\interleave^2\leq\interleave\mathbf{\Sigma}_2\interleave^2+\interleave\mathbf{\Sigma}_2\mathbf{\Phi}_2\mathbf{\Phi}_1^\dagger\interleave^2.$$
    Therefore, according to the H{\"o}lder's inequality, $$\mathbb{E}\|(\mathbf{I}-\mathbf{Q}\mathbf{Q}^\top)\mathbf{A}\|_\mathrm{F}\leq\left(\mathbb{E}\|(\mathbf{I}-\mathbf{Q}\mathbf{Q}^\top)\mathbf{A}\|_\mathrm{F}^2\right)^{\frac{1}{2}}\leq\left(\|\mathbf{\Sigma}_2\|_\mathrm{F}^2+\mathbb{E}\|\mathbf{\Sigma}_2\mathbf{\Phi}_2\mathbf{\Phi}_1^\dagger\|_\mathrm{F}^2\right)^{\frac{1}{2}},$$
    then by applying \eqref{eq4.4}, \eqref{eq4.7}, and the law of total expectation, we have $$\mathbb{E}\|\mathbf{\Sigma}_2\mathbf{\Phi}_2\mathbf{\Phi}_1^\dagger\|_\mathrm{F}^2=\mathbb{E}\left(\mathbb{E}\left(\left\|\mathbf{\Sigma}_2\mathbf{\Phi}_2\mathbf{\Phi}_1^\dagger\right\|_\mathrm{F}^2\mid\mathbf{\Phi}_1\right)\right)=\mathbb{E}\left(\left\|\mathbf{\Sigma}_2\right\|_\mathrm{F}^2\left\|\mathbf{\Phi}_1^\dagger\right\|_\mathrm{F}^2\right)=\frac{k}{h}\|\mathbf{\Sigma}_2\|_\mathrm{F}^2.$$
    The last equality above holds since $\|\mathbf{G}^\dagger\|_\mathrm{F}\to\sqrt{{k}/{h}}$, implying $\mathbb{E}\|\mathbf{G}^\dagger\|_\mathrm{F}^2\to k/h$. Consequently, $\mathbb{E}\|(\mathbf{I}-\mathbf{Q}\mathbf{Q}^\top)\mathbf{A}\|_\mathrm{F}\leq\left(1+k/h\right)^{1/2}\left(\sum_{j=k+1}^{n}\sigma_j^2\right)^{1/2}$, thereby \eqref{eq4.13} holds.
    
    For the expected spectral error bound, by H{\"o}lder's inequality and \eqref{eq4.5}, then
    \begin{equation}\begin{aligned}
\mathbb{E}\|\mathbf{\Sigma}_{2}\mathbf{\Phi}_2\mathbf{\Phi}_1^{\dagger}\|\leq\mathbb{E}(\|\mathbf{\Sigma}_{2}\|\|\mathbf{\Phi}_1^{\dagger}\|_{\mathrm{F}}+\|\mathbf{\Sigma}_{2}\|_{\mathrm{F}}\|\mathbf{\Phi}_1^{\dagger}\|).\notag
\end{aligned}\end{equation}
According to Lemma \ref{lemma4.6}, $\lim\limits_{k\to\infty}\|\mathbf{G}^\dagger\|\leq(\sqrt{k+h}-\sqrt{k})^{-1}$ . Therefore,
\begin{small}
\begin{align*}
        \mathbb{E}\|(\mathbf{I}-\mathbf{Q}\mathbf{Q}^\top)\mathbf{A}\|&\leq\left(\mathbb{E}\|(\mathbf{I}-\mathbf{Q}\mathbf{Q}^\top)\mathbf{A}\|^2\right)^{\frac{1}{2}}\leq\left(\|\mathbf{\Sigma}_2\|^2+\mathbb{E}\|\mathbf{\Sigma}_2\mathbf{\Phi}_2\mathbf{\Phi}_1^\dagger\|^2\right)^{\frac{1}{2}}\\
        &\leq\|\mathbf{\Sigma}_2\|+\mathbb{E}\|\mathbf{\Sigma}_2\mathbf{\Phi}_2\mathbf{\Phi}_1^\dagger\|\leq\|\mathbf{\Sigma}_2\|+\left\|\mathbf{\Sigma}_2\right\|\mathbb{E}\|\mathbf{\Phi}_1^\dagger\|_\mathrm{F}+\left\|\mathbf{\Sigma}_2\right\|_\mathrm{F}\cdot\mathbb{E}\|\mathbf{\Phi}_1^\dagger\|\\
        &\leq\left(1+\sqrt{\frac{k}{h}}\right)\sigma_{k+1}+\frac{\sqrt{k+h}+\sqrt{k}}{h}\left(\sum_{j=k+1}^{n}\sigma_j^2\right)^{\frac{1}{2}}.
\end{align*}
\end{small}

For the probabilistic error bounds, consider the function $f_1(\mathbf{X})=\|\mathbf{\Sigma}_2\mathbf{X}\mathbf{\Phi}_1^\dagger\|,\\f_2(\mathbf{X})=\|\mathbf{\Sigma}_2\mathbf{X}\mathbf{\Phi}_1^\dagger\|_\mathrm{F}$, then $|f_1(\mathbf{X})-f_1(\mathbf{Y})|\leq\|\mathbf{\Sigma}_2\|\|\mathbf{X}-\mathbf{Y}\|\|\mathbf{\Phi}_1^\dagger\|\leq\|\mathbf{\Sigma}_2\|\|\mathbf{X}-\mathbf{Y}\|_\mathrm{F}\|\mathbf{\Phi}_1^\dagger\|$, which implies that $f_1$ is $L$-Lipschitz, where $L\leq\|\mathbf{\Sigma}_2\|\|\mathbf{\Phi}_1^\dagger\|.$ Moreover, \eqref{eq4.5} ascertains that $\mathbb{E}\left(f_1(\mathbf{\Phi}_2)|\mathbf{\Phi}_1\right)\leq\|\mathbf{\Sigma}_2\|\|\mathbf{\Phi}_1^\dagger\|_\mathrm{F}+\|\mathbf{\Sigma}_2\|_\mathrm{F}\|\mathbf{\Phi}_1^\dagger\|$. Relying on Lemma \ref{lemma4.3} intimates that $$\mathbb{P}\left(\|\mathbf{\Sigma}_2\mathbf{\Phi}_2\mathbf{\Phi}_1^\dagger\|\geq\|\mathbf{\Sigma}_2\|\|\mathbf{\Phi}_1^\dagger\|_{\mathrm{F}}+\|\mathbf{\Sigma}_2\|_{\mathrm{F}}\|\mathbf{\Phi}_1^\dagger\|+\|\mathbf{\Sigma}_2\|\|\mathbf{\Phi}_1^\dagger\|\cdot u\right)\leq\mathrm{e}^{-u^2/2}.$$

According to Lemma \ref{lemma4.6}, we have
{\small$$\mathbb{P}\left(\|\mathbf{\Sigma}_2\mathbf{\Phi}_2\mathbf{\Phi}_1^\dagger\|\geq\|\mathbf{\Sigma}_2\|\sqrt{\frac{k}{h}}+\|\mathbf{\Sigma}_2\|_{\mathrm{F}}\frac{\sqrt{k+h}+\sqrt{k}}{h}+\|\mathbf{\Sigma}_2\|\frac{\sqrt{k+h}+\sqrt{k}}{h}\cdot u\right)\leq\mathrm{e}^{-u^2/2}.$$}
Then \eqref{eq4.16} holds since {\small$\|(\mathbf{I}-\mathbf{Q}\mathbf{Q}^\top)\mathbf{A}\|\leq\left(\|\mathbf{\Sigma}_2\|^2+\|\mathbf{\Sigma}_2\mathbf{\Phi}_2\mathbf{\Phi}_1^\dagger\|^2\right)^{1/2}\leq\|\mathbf{\Sigma}_2\|+\|\mathbf{\Sigma}_2\mathbf{\Phi}_2\mathbf{\Phi}_1^\dagger\|$.}

For the Frobenius case, note that $|f_2(\mathbf{X})-f_2(\mathbf{Y})|\leq\|\mathbf{\Sigma}_2\|\|\mathbf{X}-\mathbf{Y}\|_\mathrm{F}\|\mathbf{\Phi}_1^\dagger\|$, which intimates that $f_2$ is also $L$-Lipschitz, where $L\leq\|\mathbf{\Sigma}_2\|\|\mathbf{\Phi}_1^\dagger\|.$ Then by H{\"o}lder's inequality and \eqref{eq4.4}, $\mathbb{E}\left(f_2(\mathbf{\Phi}_2)|\mathbf{\Phi}_1\right)\leq\left(\mathbb{E}\Big(\|\mathbf{\Sigma}_2\mathbf{\Phi}_2\mathbf{\Phi}_1^\dagger\|_\mathrm{F}^2|\mathbf{\Phi}_1\Big)\right)^{\frac{1}{2}}=\|\mathbf{\Sigma}_2\|_\mathrm{F}\|\mathbf{\Phi}_1^\dagger\|_\mathrm{F}.$ Therefore, Lemma \ref{lemma4.3} extrapolates that $$\mathbb{P}(\|\mathbf{\Sigma}_2\mathbf{\Phi}_2\mathbf{\Phi}_1^\dagger\|_\mathrm{F}\geq\|\mathbf{\Sigma}_2\|_\mathrm{F}\|\mathbf{\Phi}_1^\dagger\|_\mathrm{F}+\|\mathbf{\Sigma}_2\|\|\mathbf{\Phi}_1^\dagger\| u)\leq\mathrm{e}^{-u^2/2}.$$
By Lemma \ref{lemma4.6}, \eqref{eq4.15} holds since $\|(\mathbf{I}-\mathbf{Q}\mathbf{Q}^\top)\mathbf{A}\|_\mathrm{F}\leq\left(\|\mathbf{\Sigma}_2\|_\mathrm{F}^2+\|\mathbf{\Sigma}_2\mathbf{\Phi}_2\mathbf{\Phi}_1^\dagger\|_\mathrm{F}^2\right)^{1/2}\leq\|\mathbf{\Sigma}_2\|_\mathrm{F}+\|\mathbf{\Sigma}_2\mathbf{\Phi}_2\mathbf{\Phi}_1^\dagger\|_\mathrm{F}$.
\end{proof}

Moreover, relying on Theorem \ref{thm3.2} and Propositions \ref{proposition3.5} and \ref{proposition3.6}, we further expand these error bounds for \texttt{farPCASB}, \texttt{farPCASS} and \texttt{farPCASG} by exploiting Lemma \ref{lemma4.7} and Theorem \ref{thm4.8}. We emphasize the error analysis of algorithms that do not utilize shifted power iteration, as the power parameters $P=0,1,2$ generally suffice for most applications. The shifted power iteration only works when $P>2$ and yields nearly optimal approximation results for larger values of $P$ (e.g. $P=5$ \cite{feng2023fast}).

\begin{theorem}[Error bounds for \texttt{farPCASB}, \texttt{farPCASS} and \texttt{farPCASG} without shifted power iteration]\label{thm4.9}
    Assume that $\mathbf{V}_A^\top=(u_{ij})$ satisfy condition \eqref{eq3.1}. Generate a standardized Bernoulli matrix or a sparse sign matrix or a sparse Gaussian matrix $\mathbf{\Upsilon}\in\mathbb{R}^{n\times l}$ with parameter $p\in(0,1)$, where $l=k+h\leq n$, $k\geq 2$ and $h\geq 2$. Consequently, if $\mathbf{Q}$ is an orthonormal basis for the range of $\mathbf{Y}=\mathbf{A}\mathbf{\Upsilon}$ and $\mathbf{Q}_P$ serves as an orthonormal basis for the range of $(\mathbf{A}\mathbf{A}^\top)^P\mathbf{Y}$ with $P$ being a natural number, then with probability $1$, \eqref{eq4.9} and \eqref{eq4.10} hold. Further suppose that $h\geq 4$, for all $u,t\geq 1$, \eqref{eq4.11} and \eqref{eq4.12} hold with failure probability not exceeding $2t^{-h}+e^{-u^2/2}$.

    Additionally, as $k\to\infty,k/(k+h)\to\gamma\in(0,1)$, \eqref{eq4.13} and \eqref{eq4.14} hold with probability $1$, and \eqref{eq4.15} and \eqref{eq4.16} hold with failure probability not exceeding $e^{-u^2/2}$ for all $u>0$.
\end{theorem}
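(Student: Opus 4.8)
The whole point is that neither Lemma~\ref{lemma4.7} nor Theorem~\ref{thm4.8} uses anything about the test matrix beyond the joint law of the two row–blocks $\mathbf{\Phi}_1=\mathbf{V}_1^\top\mathbf{\Upsilon}$ and $\mathbf{\Phi}_2=\mathbf{V}_2^\top\mathbf{\Upsilon}$ of $\mathbf{V}_A^\top\mathbf{\Upsilon}$: by Lemma~\ref{lemma4.1} the error is dominated by the fixed functional $f(\mathbf{\Phi}_1,\mathbf{\Phi}_2)=\big(\interleave\mathbf{\Sigma}_2\interleave^2+\interleave\mathbf{\Sigma}_2\mathbf{\Phi}_2\mathbf{\Phi}_1^\dagger\interleave^2\big)^{1/2}$, which is continuous on the (generic) set where $\mathbf{\Phi}_1$ has full row rank. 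By Theorem~\ref{thm3.2} for the standardized Bernoulli case, and by Propositions~\ref{proposition3.5}--\ref{proposition3.6} for the sparse sign and sparse Gaussian cases, condition \eqref{eq3.1} forces $\mathbf{V}_A^\top\mathbf{\Upsilon}\xrightarrow{d}\mathbf{G}'$ as $n\to\infty$, with $\mathbf{G}'$ an $n\times l$ standard Gaussian matrix; hence $\mathbf{A}\mathbf{\Upsilon}=\mathbf{U}_A\mathbf{\Sigma}_A(\mathbf{V}_A^\top\mathbf{\Upsilon})\xrightarrow{d}\mathbf{U}_A\mathbf{\Sigma}_A\mathbf{G}'$, which by rotational invariance has the same law as $\mathbf{A}\mathbf{G}''$ for a fresh standard Gaussian $\mathbf{G}''$. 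So the plan is to transfer every bound from the Gaussian setting to $\mathbf{\Upsilon}$ by passing to the limit.

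For the first part (fixed $k,h$, $n\to\infty$): $(\mathbf{\Phi}_1,\mathbf{\Phi}_2)\xrightarrow{d}(\mathbf{G}_1,\mathbf{G}_2)$ with $\mathbf{G}_1\in\mathbb{R}^{k\times l}$, $\mathbf{G}_2\in\mathbb{R}^{(n-k)\times l}$ independent standard Gaussians, and $\mathbb{P}(\mathrm{rank}\,\mathbf{\Phi}_1=k)\to1$. By the continuous mapping theorem, $f(\mathbf{\Phi}_1,\mathbf{\Phi}_2)\xrightarrow{d}f(\mathbf{G}_1,\mathbf{G}_2)$. Since the right–hand sides of \eqref{eq4.11}--\eqref{eq4.12} are deterministic thresholds, the Portmanteau theorem for closed sets gives $\limsup_{n}\mathbb{P}(\text{error}\ge\text{RHS})\le\mathbb{P}(f(\mathbf{G}_1,\mathbf{G}_2)\ge\text{RHS})\le2t^{-h}+e^{-u^2/2}$ by Lemma~\ref{lemma4.7}, and the power–iteration variants follow verbatim via \cite[Proposition 8.6]{halko2011finding}. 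For the expectation bounds \eqref{eq4.9}--\eqref{eq4.10} one upgrades convergence in distribution to convergence of means by uniform integrability of $\{f(\mathbf{\Phi}_1,\mathbf{\Phi}_2)\}_n$, which is dominated by $\|\mathbf{\Sigma}_2\|_\mathrm{F}\big(1+\|\mathbf{\Phi}_1^\dagger\|_\mathrm{F}\big)$; controlling the least singular value of $\mathbf{\Phi}_1$ then yields $\lim_n\mathbb{E}[\text{error}]\le\mathbb{E}[f(\mathbf{G}_1,\mathbf{G}_2)]\le$ RHS of \eqref{eq4.9}/\eqref{eq4.10}.

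For the second part (additionally $k\to\infty$, $k/(k+h)\to\gamma\in(0,1)$): I would re–run the proof of Theorem~\ref{thm4.8}. Its only non–Gaussian–specific ingredients are Lemmas~\ref{lemma4.2}--\ref{lemma4.3} applied conditionally on $\mathbf{\Phi}_1$ (which is asymptotically Gaussian, so \eqref{eq4.4}--\eqref{eq4.5} and the $L$-Lipschitz constant $L\le\|\mathbf{\Sigma}_2\|\,\|\mathbf{\Phi}_1^\dagger\|$ carry over) and the spectral/Frobenius controls on $\mathbf{\Phi}_1^\dagger$ of Lemma~\ref{lemma4.6}, which rest solely on the M–P law. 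Crucially, Lemma~\ref{lemma4.5} is already stated for arbitrary i.i.d.\ entries of mean $0$ and variance $1$, and the computation in Theorem~\ref{thm3.2} shows the columns of $\mathbf{\Phi}_1=\mathbf{V}_1^\top\mathbf{\Upsilon}$ are asymptotically i.i.d.\ standard Gaussian vectors, so $F^{\mathbf{\Phi}_1\mathbf{\Phi}_1^\top/l}$ converges weakly to the same $F_\gamma$; the Cauchy–integral computation of Lemma~\ref{lemma4.6} then gives $\|\mathbf{\Phi}_1^\dagger\|_\mathrm{F}\to\sqrt{k/h}$ and $\lim\|\mathbf{\Phi}_1^\dagger\|\le(\sqrt{k+h}-\sqrt{k})^{-1}$. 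Substituting these into the inequality chain of Theorem~\ref{thm4.8} reproduces \eqref{eq4.13}--\eqref{eq4.16}. I would package the two regimes by a diagonal argument: first let $n\to\infty$ for fixed $(k,h)$, then let $(k,h)\to\infty$ along $k/(k+h)\to\gamma$.

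The main obstacle is the behaviour of $\mathbf{\Phi}_1$ near singularity. For the discrete test matrices (standardized Bernoulli and sparse sign) $\mathbf{\Phi}_1$ is rank–deficient with positive, though vanishing, probability at finite $n$, so the uniform–integrability step and the hard–edge claim "the smallest eigenvalue of $\mathbf{\Phi}_1\mathbf{\Phi}_1^\top/l$ tends to $(1-\sqrt{\gamma})^2$" — needed both for the spectral bound on $\mathbf{\Phi}_1^\dagger$ and for the Portmanteau argument with the truncated integrand $(1-\sqrt{\gamma})^{-2}\wedge(1/x)$ used in Lemma~\ref{lemma4.6} — require justification beyond the Gaussian case; this is exactly where the quantitative Berry--Esseen rate \eqref{eq3.2} in Theorem~\ref{thm3.2} (via a Bai--Yin–type least–singular–value estimate) does the work. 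Everything else is a routine transcription of the proofs of Lemma~\ref{lemma4.7} and Theorem~\ref{thm4.8}.
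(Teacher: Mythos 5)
Your proposal follows essentially the same route as the paper: the paper's own proof is a two-sentence appeal to Theorem \ref{thm3.2} and Propositions \ref{proposition3.5}--\ref{proposition3.6} to declare $\mathbf{V}_A^\top\mathbf{\Upsilon}$ asymptotically equivalent to $\mathbf{V}_A^\top\mathbf{G}$, after which Lemma \ref{lemma4.7} and Theorem \ref{thm4.8} are invoked verbatim. Your write-up is that same transfer argument, merely made explicit via continuous mapping, Portmanteau, uniform integrability, and the least-singular-value caveat for the discrete test matrices -- details the paper glosses over rather than points on which you diverge from it.
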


\begin{proof}
    For \eqref{eq4.9}-\eqref{eq4.12}, the proof straightforwardly derives from Lemma \ref{lemma4.7} since $\mathbf{V}_A^\top\mathbf{\Upsilon}$ is asymptotically equivalent to $\mathbf{V}_A^\top\mathbf{G}$ by exploiting Theorem \ref{thm3.2} and Propositions \ref{proposition3.5} and \ref{proposition3.6}. For the asymptotic error bounds \eqref{eq4.13}-\eqref{eq4.16}, the proof follows directly from Theorem \ref{thm4.8}, with the same argument that $\mathbf{V}_A^\top\mathbf{\Upsilon}$ is asymptotically equivalent to $\mathbf{V}_A^\top\mathbf{G}$.
\end{proof}

\section{Numerical experiments}\label{sec5}
In this section, we present some numerical experiments to elucidate the effectiveness of \texttt{farPCASB}, \texttt{farPCASS}, and \texttt{farPCASG} relative to \texttt{farPCA} and \texttt{farPCA} based on the Bernoulli matrix (\texttt{farPCAB}). For the real data, MATLAB's \texttt{svds} function is exploited to produce precise results, after which \texttt{farPCASB}, \texttt{farPCASS}, and \texttt{farPCASG} are compared with \texttt{farPCA}, \texttt{farPCAB}, \texttt{randUBV}, and \texttt{svds} to assess its computational performance. Results for addressing the fixed-rank approximation and fixed-precision approximation problems are delineated in subsections \ref{sec5.1} and \ref{sec5.2}. Furthermore, the analysis of how variations in the parameter 
$p$ affect the performance of the four acceleration algorithms is presented in subsection \ref{sec5.3}. We also examine these algorithms on real data in subsection \ref{sec5.4} and finally discuss their stability in subsection \ref{sec5.5}.

For simulations, we generate two kinds of $n\times n$ matrices:
\begin{itemize}
    \item Matrix $1$ (slow decay): Let $\mathbf{A}$ be the form $\mathbf{A}=\mathbf{U}_A\mathbf{\Sigma}_A\mathbf{V}_A^\top$, where $\mathbf{U}_A,\mathbf{V}_A\in\mathbb{R}^{n\times n}$ are attained by orthonormalizing the standard Gaussian matrices. The diagonal values of the diagonal matrix $\mathbf{\Sigma}_A$ are: $\sigma_j=1/j^2$ for $1\leq j\leq n$. This matrix is Matrix $1$ in \cite[Section 5.1]{hallman2022block}.
    \item Matrix $2$ (fast decay): $\mathbf{A}$ is built in the same manner as Matrix $2$, except that the singular values become: $\sigma_j=e^{-j/20}$ for $1\leq j\leq n$. This matrix is Matrix 3 in \cite[Section 5.1]{hallman2022block}.
\end{itemize}

Execution time, rank, and error metrics for the algorithms in subsections \ref{sec5.1} to \ref{sec5.3} are averaged over multiple trials with identical input matrices and $i.i.d.$ random test matrices. Denoting the approximation of $\mathbf{A}$ by $\hat{\mathbf{A}}$, the relative Frobenius-norm error is $\|\mathbf{A}-\hat{\mathbf{A}}\|_\mathrm{F}/\|\mathbf{A}\|_\mathrm{F}.$ All experiments were conducted in MATLAB 2024b on an Intel Core i9-12900KF CPU with 64 GB of DDR5 RAM.

\subsection{Efficiency validation for solving the fixed-rank problems}\label{sec5.1}

In practical applications of fixed-rank problems, rank $l$ is selected directly as the number of columns of $\mathbf{U}$ output in these algorithms. Notably, an exceedingly small $p$ could result in rank deficiency, requiring more iterations. Specifically, when $p$ is slightly larger, the algorithm stabilizes and becomes less sensitive to the choice of 
$p$, as will be elaborated in subsection \ref{sec5.3}. Nevertheless, $p$ must be minimal enough to capitalize on the sparsity (e.g., $p\leq0.1$). Practically, for an $m\times n$ target matrix with $m\geq n$, we recommend setting $p=10^{-3}$ when $n$ is large, which can leverage Theorem \ref{thm3.2} to attain the asymptotic behavior of $\mathbf{V}_A^\top\mathbf{B}$. Conversely, for smaller $n$, as the asymptotic results do not apply, a relatively larger value of $p$ is recommended to refrain from the risk of rank deficiency. Therefore, we empirically set $p_1=\max\{10^{-3},\ln(n)/n\}$ for \texttt{farPCASB} and $p_2=\max\{10^{-3},10/n\}$ for \texttt{farPCASS}, \texttt{farPCASG}, and \texttt{farPCAB}. To evaluate the performance, two types of input matrices of size $n\times n$ were generated, with $n$ set to $5000$, $10000$, and $30000$. The block size $b$ fixed at $20$, while $l$ ranges from $b$ to $10b$. The average results of $20$ experiments for power parameter $P=1$ and $P=0$ are summarized in Figure \ref{fig1} and Figure \ref{fig2}, respectively.

\begin{figure}[htbp]
    \begin{subfigure}{\textwidth}
	\centering
	\begin{minipage}{0.4\textwidth}
			\centering
			\includegraphics[width=\textwidth]{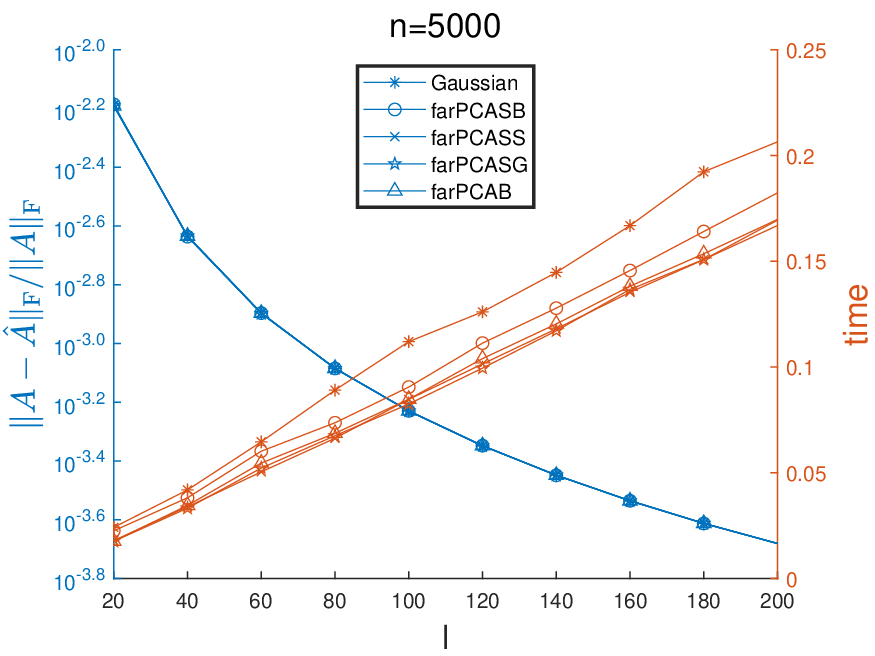}
		\end{minipage}
		\begin{minipage}{0.4\textwidth}
			\centering
			\includegraphics[width=\textwidth]{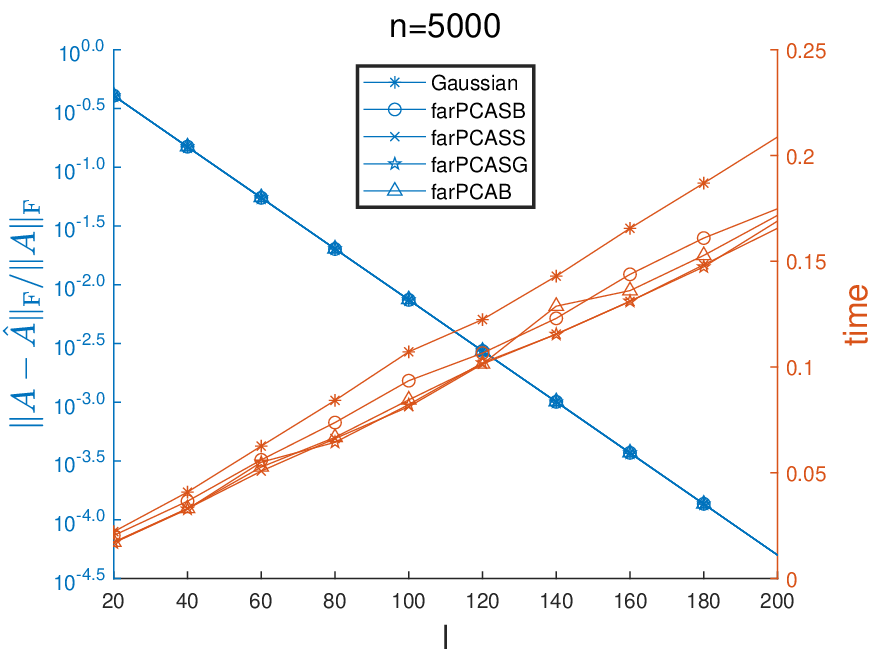}
	\end{minipage}
    \end{subfigure}
 
     \begin{subfigure}{\textwidth}
	\centering
	\begin{minipage}{0.4\textwidth}
			\centering
			\includegraphics[width=\textwidth]{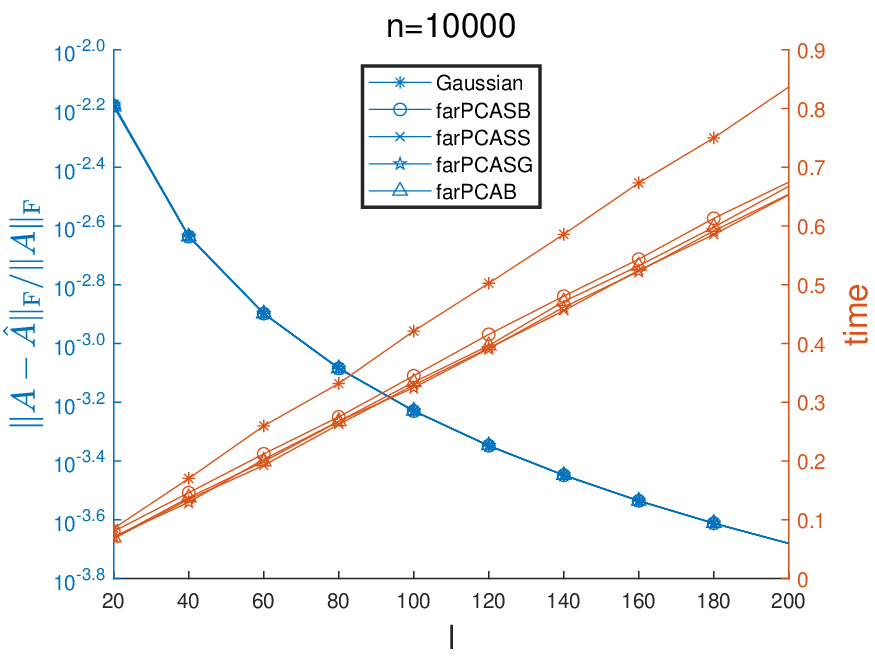}
		\end{minipage}
		\begin{minipage}{0.4\textwidth}
			\centering
			\includegraphics[width=\textwidth]{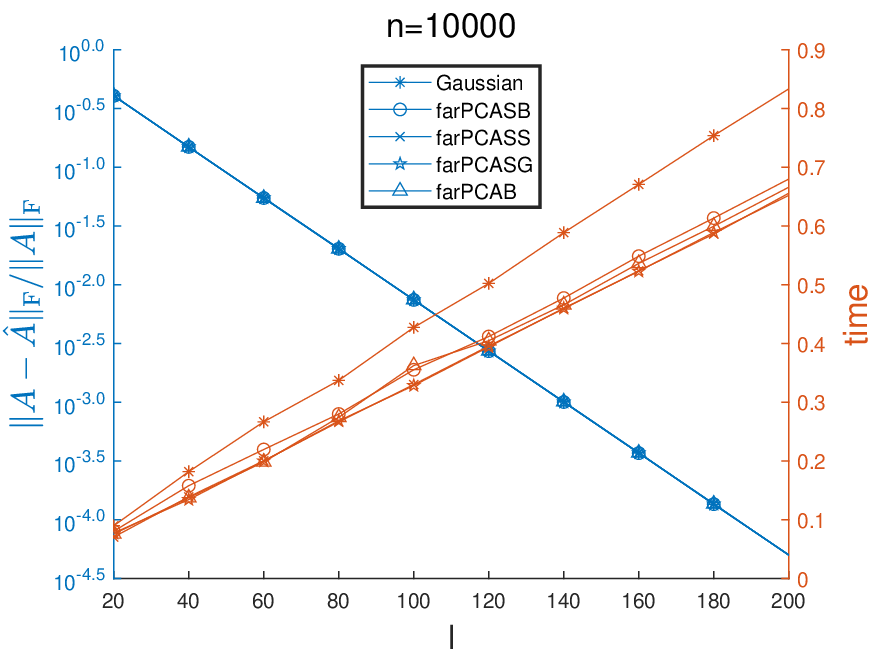}
	\end{minipage}
    \end{subfigure}

    \begin{subfigure}{\textwidth}
	\centering
	\begin{minipage}{0.4\textwidth}
			\centering
			\includegraphics[width=\textwidth]{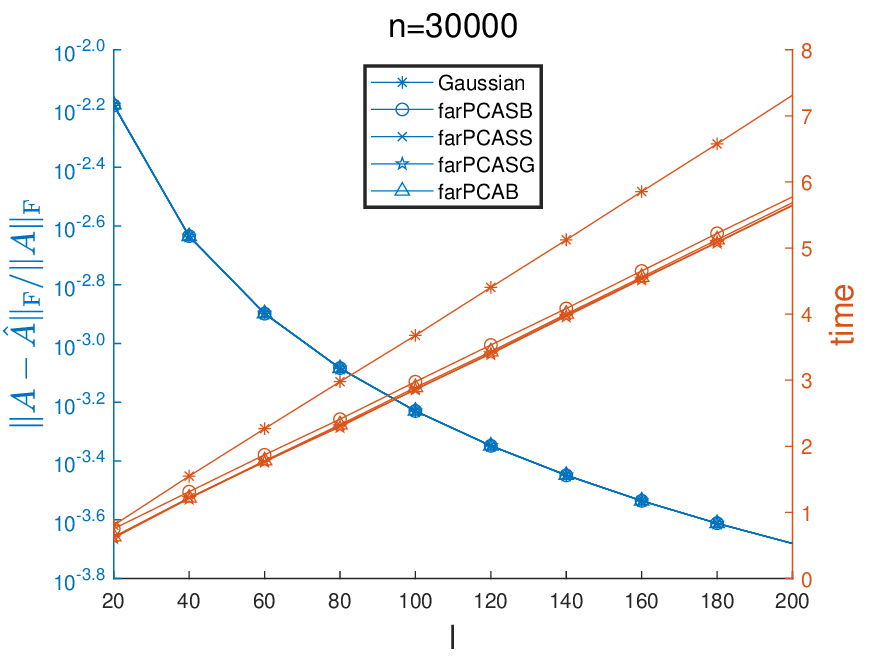}
		\end{minipage}
		\begin{minipage}{0.4\textwidth}
			\centering
			\includegraphics[width=\textwidth]{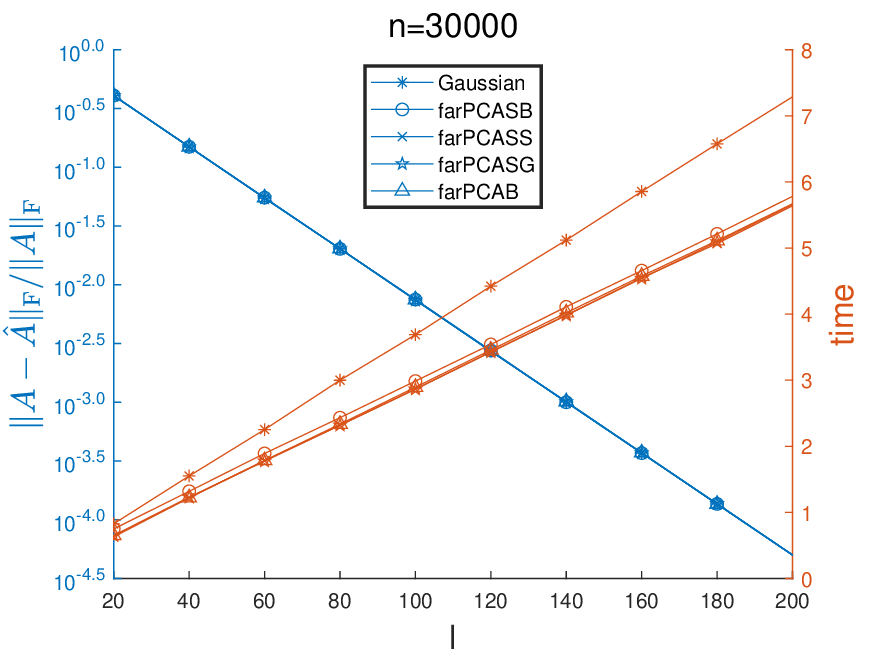}
	\end{minipage}
    \end{subfigure}
   	\caption{Comparison of relative Frobenius-norm errors and execution times between Gaussian (\texttt{farPCA}), \texttt{farPCASB}, \texttt{farPCASS}, \texttt{farPCASG} and \texttt{farPCAB} with varying value of $l$ under power parameter $P=1$ (averaged over $20$ trials). The first two columns correspond to Matrix $1$ and Matrix $2$, and rows denote matrices of varying dimensions.}
   \label{fig1} 
\end{figure}

\begin{figure}[htbp]
    \begin{subfigure}{\textwidth}
	\centering
	\begin{minipage}{0.4\textwidth}
			\centering
			\includegraphics[width=\textwidth]{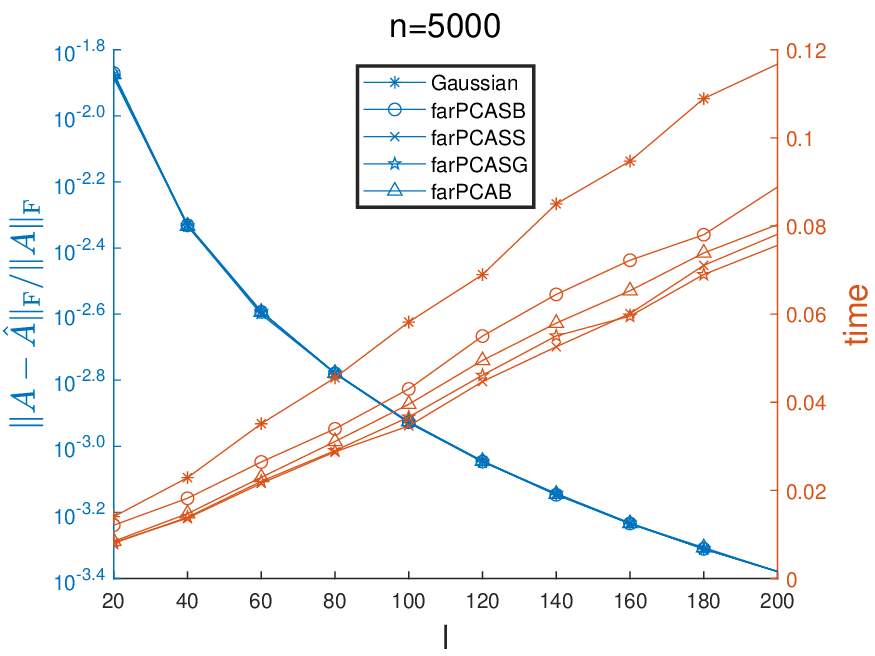}
		\end{minipage}
		\begin{minipage}{0.4\textwidth}
			\centering
			\includegraphics[width=\textwidth]{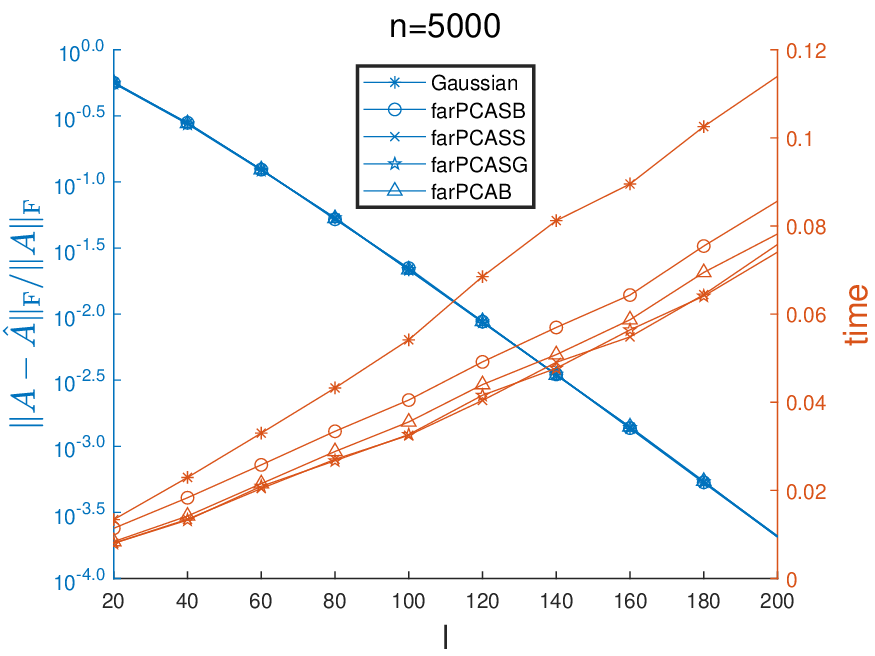}
	\end{minipage}
    \end{subfigure}
 
     \begin{subfigure}{\textwidth}
	\centering
	\begin{minipage}{0.4\textwidth}
			\centering
			\includegraphics[width=\textwidth]{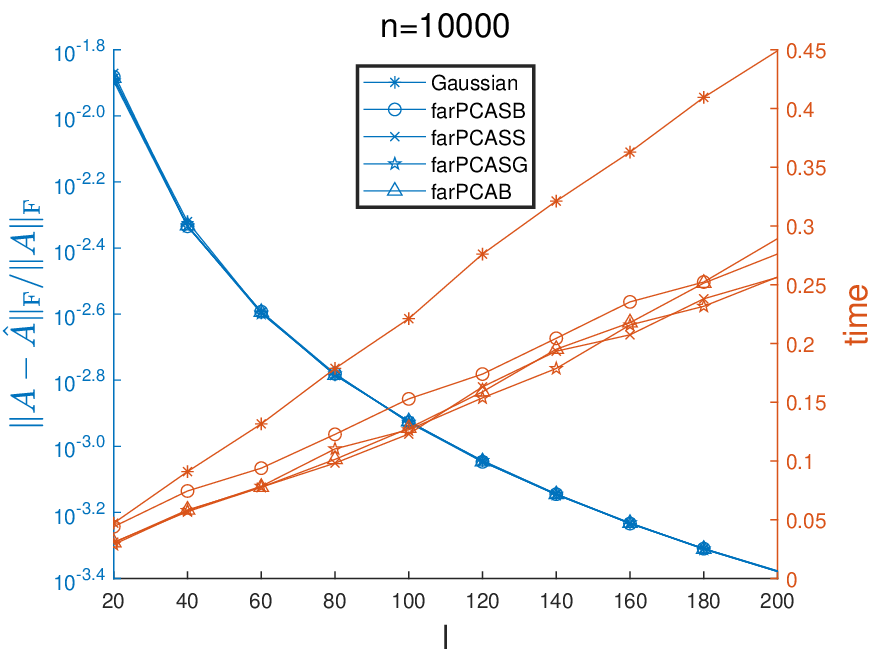}
		\end{minipage}
		\begin{minipage}{0.4\textwidth}
			\centering
			\includegraphics[width=\textwidth]{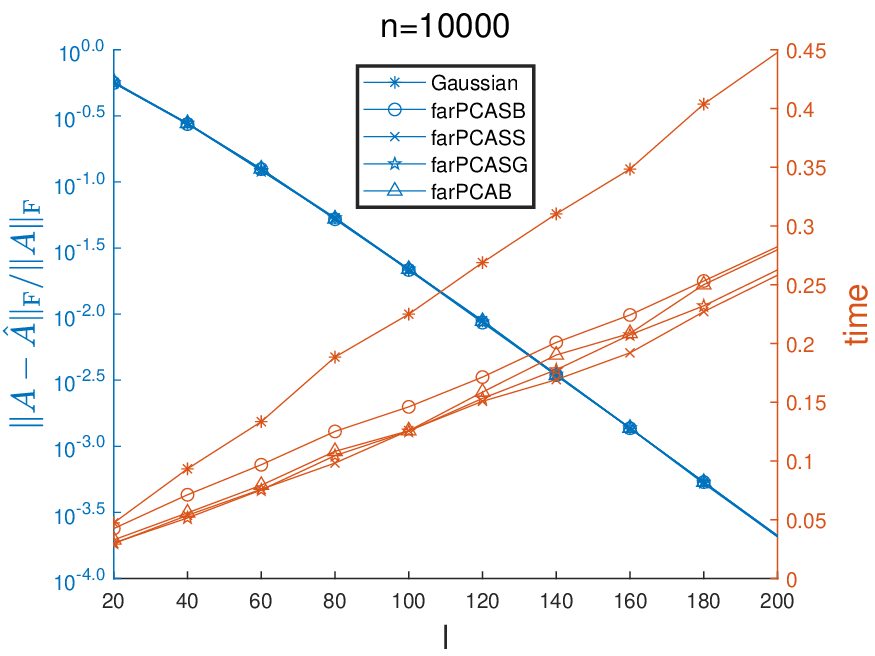}
	\end{minipage}
    \end{subfigure}

    \begin{subfigure}{\textwidth}
	\centering
	\begin{minipage}{0.4\textwidth}
			\centering
			\includegraphics[width=\textwidth]{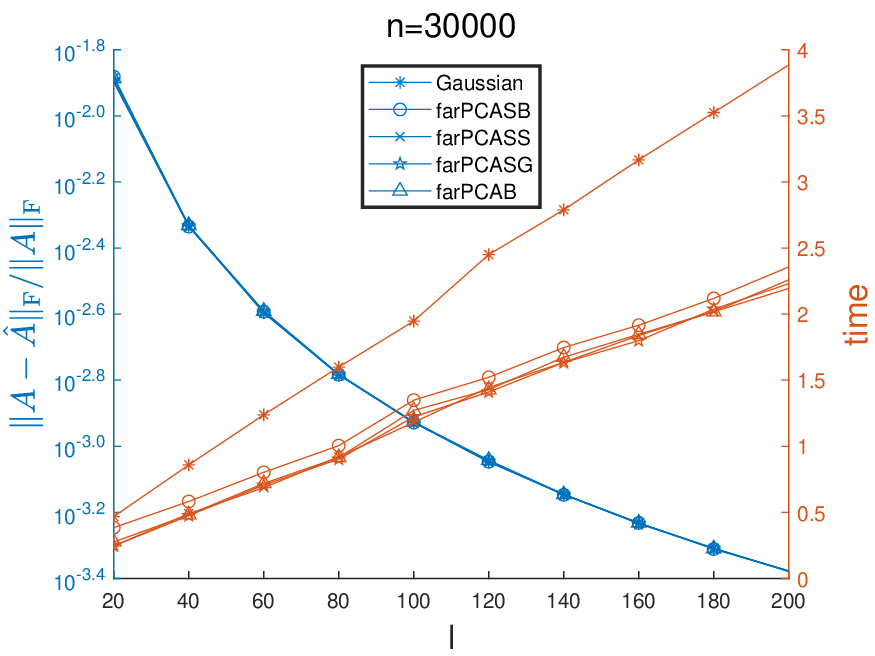}
		\end{minipage}
		\begin{minipage}{0.4\textwidth}
			\centering
			\includegraphics[width=\textwidth]{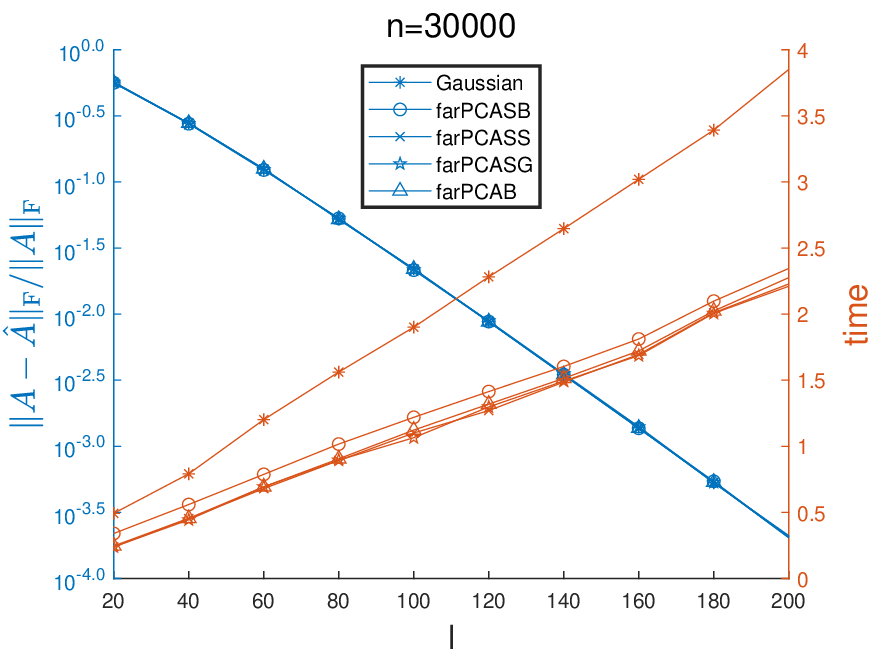}
	\end{minipage}
    \end{subfigure}

    \caption{Comparison of relative Frobenius-norm errors and execution times between Gaussian (\texttt{farPCA}), \texttt{farPCASB}, \texttt{farPCASS}, \texttt{farPCASG} and \texttt{farPCAB} with varying value of $l$ under $P=0$ (averaged over $20$ trials). The first two columns correspond to Matrix $1$ and Matrix $2$, and rows denote matrices of varying dimensions.}
   \label{fig2}
\end{figure}

As illustrated in Figure \ref{fig1}, the computational costs of \texttt{farPCASB}, \texttt{farPCASS}, \texttt{farPCASG}, and \texttt{farPCAB} consistently remain below those of \texttt{farPCA}, while achieving comparable error levels. The time advantages of the four algorithms are improved as $n$ increases, aligning with theoretical expectations in Remark \ref{remark3.4}. For instance, when $l$ is small and $P=1$, these four algorithms typically outperform \texttt{farPCA}, with \texttt{farPCASS}, \texttt{farPCASG}, and \texttt{farPCAB} achieving time savings of approximately $25\%$. Similarly, Figure \ref{fig2} indicates that with $P=0$ and $n=30000$, the time savings approach near $50\%$, further corroborating Remark \ref{remark3.4}. Simultaneously, the performances of \texttt{farPCASS}, \texttt{farPCASG}, and \texttt{farPCAB} are similar, with \texttt{farPCASS} and \texttt{farPCASG} preferred for their theoretical guarantees. Furthermore, as highlighted in subsections \ref{sec5.3} and \ref{sec5.5}, \texttt{farPCASG} exhibits slightly greater stability compared to \texttt{farPCASS} and \texttt{farPCAB}.

\subsection{Efficiency validation for solving the fixed-precision problems}\label{sec5.2}
In this subsection, we consider the termination condition with a relative tolerance $\|\mathbf{A}-\mathbf{Q}\mathbf{C}\|_\mathrm{F}\leq\varepsilon\|\mathbf{A}\|_\mathrm{F}$ for the fixed-precision problems. We still generate two types of input matrices $\mathbf{A}\in\mathbb{R}^{n\times n}$, where $n=5000$, $10000$, $30000$. The value of $p$ for each acceleration algorithm is set as in subsection \ref{sec5.1}. We choose block size $b=\min\{\max\{20,\lfloor n/100\rfloor\},50\}$, power parameter $P=1$, and the maximum number of iterations is set to $\lceil n/(2b)\rceil$. Here we use $\lceil x \rceil$ to denote the smallest integer greater than or equal to $x$, and $\lfloor x \rfloor$ to denote the largest integer less than or equal $x$. The average results of $20$ experiments under different input matrices, dimension size $n$, and thresholds $\varepsilon$ are delineated in Table \ref{table1}.

\begin{table*}[htbp]
\caption{Time, rank, and relative Frobenius-norm error results for fixed-precision simulations with different matrices, different dimensions $n$, and different desired thresholds $\varepsilon$. The measurements were averaged over $20$ trials. ``$l$", ``$t$", and ``$\varepsilon_{opt}$" represent the output rank, running time in seconds, and the relative Frobenius-norm error, respectively. For each setting, we highlight the fastest algorithm in bold.}
\resizebox{\linewidth}{!}{
\centering
\begin{tabular}{ccc|ccc|ccc|ccc|ccc|ccc}
\hline
\multirow{2}{*}{Matrix}&\multirow{2}{*}{$n$}&\multirow{2}{*}{$\varepsilon$}&\multicolumn{3}{c|}{Gaussian (\texttt{farPCA})} &\multicolumn{3}{c|}{\texttt{farPCASB}}&\multicolumn{3}{c|}{\texttt{farPCASS}} &\multicolumn{3}{c|}{\texttt{farPCASG}} &\multicolumn{3}{c}{\texttt{farPCAB}}\cr
 & & &$l$&$t$&$\varepsilon_{opt}$&$l$&$t$&$\varepsilon_{opt}$&$l$&$t$&$\varepsilon_{opt}$&$l$&$t$&$\varepsilon_{opt}$&$l$&$t$&$\varepsilon_{opt}$\cr
 \hline
 \multirow{6}{*}{Matrix 1}
 &\multirow{2}{*}{5000}&1e-4&350&0.25&9.02e-5&350&0.22&9.03e-5&350&\textbf{0.21}&9.02e-5&350&\textbf{0.21}&9.02e-5&350&0.22&9.02e-5\cr
 & &5e-5&550&0.43&4.58e-5&550&0.39&4.58e-5&550&\textbf{0.37}&4.58e-5&550&\textbf{0.37}&4.58e-5&550&0.38&4.58e-5\cr
 &\multirow{2}{*}{10000}&1e-4&350&0.89&9.02e-5&350&0.74&9.03e-5&350&\textbf{0.70}&9.10e-5&350&\textbf{0.70}&9.02e-5&350&0.72&9.02e-5\cr
 & &5e-5&550&1.44&4.58e-5&550&1.19&4.58e-5&550&\textbf{1.14}&4.58e-5&550&1.15&4.58e-5&550&1.18&4.58e-5\cr
 &\multirow{2}{*}{30000}&1e-4&350&7.49&9.03e-5&350&5.99&9.02e-5&350&5.87&9.03e-5&350&\textbf{5.86}&9.02e-5&350&5.92&9.02e-5\cr
 & &5e-5&550&11.76&4.58e-5&550&9.35&4.58e-5&550&\textbf{9.18}&4.58e-5&550&9.20&4.58e-5&550&9.28&4.58e-5\cr
\hline
\multirow{6}{*}{Matrix 2}
 &\multirow{2}{*}{5000}&1e-4&200&0.14&5.04e-5&200&0.12&5.03e-5&200&\textbf{0.11}&5.01e-5&200&\textbf{0.11}&4.99e-5&200&0.12&5.04e-5\cr
 & &5e-6&250&0.18&4.10e-6&250&\textbf{0.14}&4.11e-6&250&0.15&4.11e-6&250&\textbf{0.14}&4.14e-6&250&0.15&4.13e-6\cr
 &\multirow{2}{*}{10000}&1e-4&200&0.51&5.03e-5&200&0.42&5.04e-5&200&\textbf{0.40}&4.99e-5&200&\textbf{0.40}&5.03e-5&200&0.42&4.99e-5\cr
 & &5e-6&250&0.64&4.13e-6&250&0.52&4.14e-6&250&0.52&3.93e-6&250&\textbf{0.51}&4.14e-6&250&0.56&4.16e-6\cr
 &\multirow{2}{*}{30000}&1e-4&200&4.32&5.01e-5&200&3.50&5.05e-5&200&3.39&5.06e-5&200&\textbf{3.38}&5.02e-5&200&3.42&5.01e-5\cr
 & &5e-6&250&5.36&4.11e-6&250&4.31&4.14e-6&250&\textbf{4.19}&4.15e-6&250&\textbf{4.19}&4.13e-6&250&4.23&4.13e-6\cr
 \hline
\end{tabular}
}
\label{table1}
\end{table*}

According to Table \ref{table1}, the results align with those of the fixed-rank problem discussed in subsection \ref{sec5.1}, indicating that \texttt{farPCASB} is typically faster than \texttt{farPCA}. Meanwhile, \texttt{farPCASS}, \texttt{farPCASG}, and \texttt{farPCAB} demonstrate even greater acceleration and exhibit comparable performance. Furthermore, these algorithms consistently meet the required levels of accuracy for predefined values of $p$.

\subsection{Choice of the value of $p$}\label{sec5.3}

In this subsection, we still generate two types of input matrices as before and consider the termination condition with the relative tolerance $\|\mathbf{A}-\mathbf{Q}\mathbf{C}\|_\mathrm{F}\leq\varepsilon\|\mathbf{A}\|_\mathrm{F}$ for the fixed-precision problems. We examine how variations in parameter $p$ affect the performance of the four acceleration algorithms.

We present the results for Matrix $1$ only, as they are similar to those for Matrix $2$. We set $n=5000$, block size $b=50$, power parameter $P=1$, and $\varepsilon=5\times10^{-5}$. The maximum number of iterations is set to $\lceil\frac{n}{2b}\rceil$. The success rates and execution times averaged over $100$ times of \texttt{farPCASB}, \texttt{farPCASS}, \texttt{farPCASG}, and \texttt{farPCAB} for Matrix $1$ across varying values of $p$ are presented in Figure \ref{fig3}, where $p=10^{-3}$, $ln(n)/n$, $2\times10^{-3}$, $5\times10^{-3}$, $10^{-2}$, and $10^{-1}$.

Figure \ref{fig3} indicates that extremely small $p$ values could result in larger iterations as discussed in subsection \ref{sec5.1}, leading to increased computational cost. \texttt{farPCASB} presents the highest stability for sufficiently small $p$, while \texttt{farPCASS} and \texttt{farPCASG}, and \texttt{farPCAB} exhibit comparable performance but are relatively unstable under sufficiently small $p$. However, as $p$ increases, these algorithms stabilize. For $p=2\times10^{-3}$, $5\times10^{-3}$, $10^{-2}$, and $10^{-1}$, the success rates approach $100\%$, indicating these algorithms are not sensitive to larger $p$. Therefore, $p_1=\max\{10^{-3},\ln(n)/n\}$ is recommended for \texttt{farPCASB}, and $p_2=\max\{10^{-3},10/n\}$ is recommended for the others empirically. Under the pre-defined values of $p$, \texttt{farPCASG} exhibits the fastest speed.
\begin{figure}[htbp]
    \begin{subfigure}{\textwidth}
	\centering
	\begin{minipage}{0.31\textwidth}
			\centering
			\includegraphics[width=\textwidth]{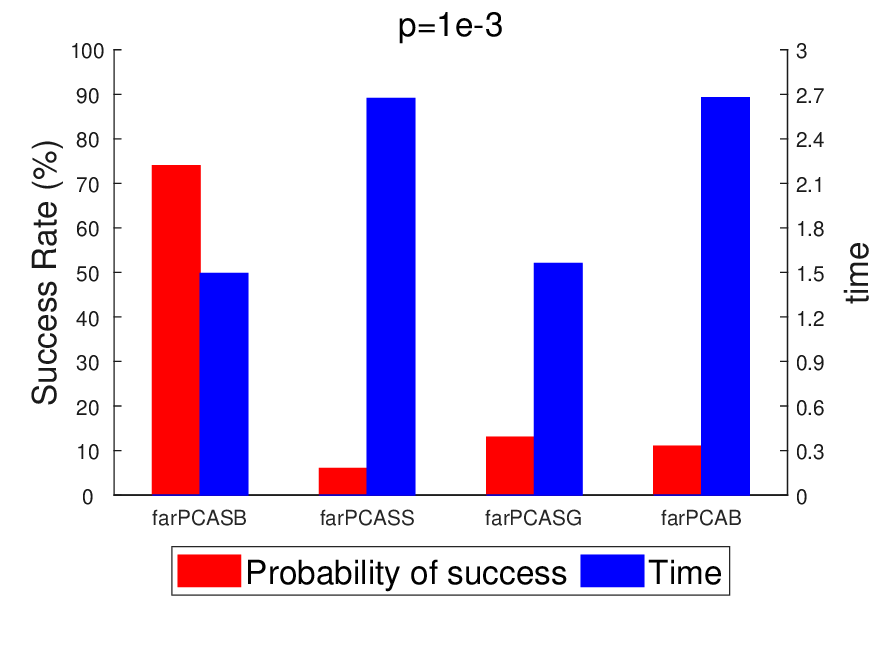}
		\end{minipage}
		\begin{minipage}{0.31\textwidth}
			\centering
			\includegraphics[width=\textwidth]{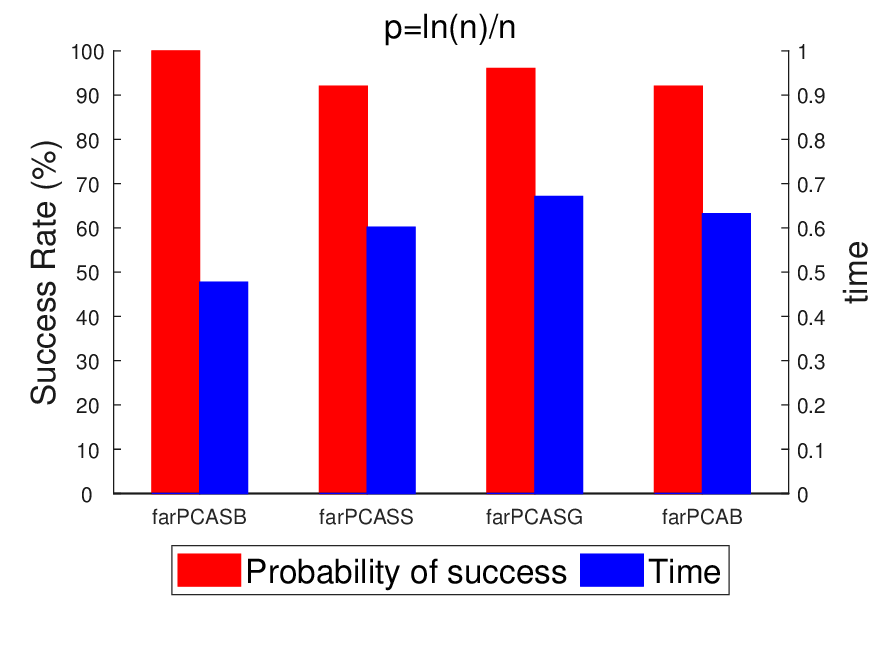}
	\end{minipage}
 \begin{minipage}{0.31\textwidth}
			\centering
			\includegraphics[width=\textwidth]{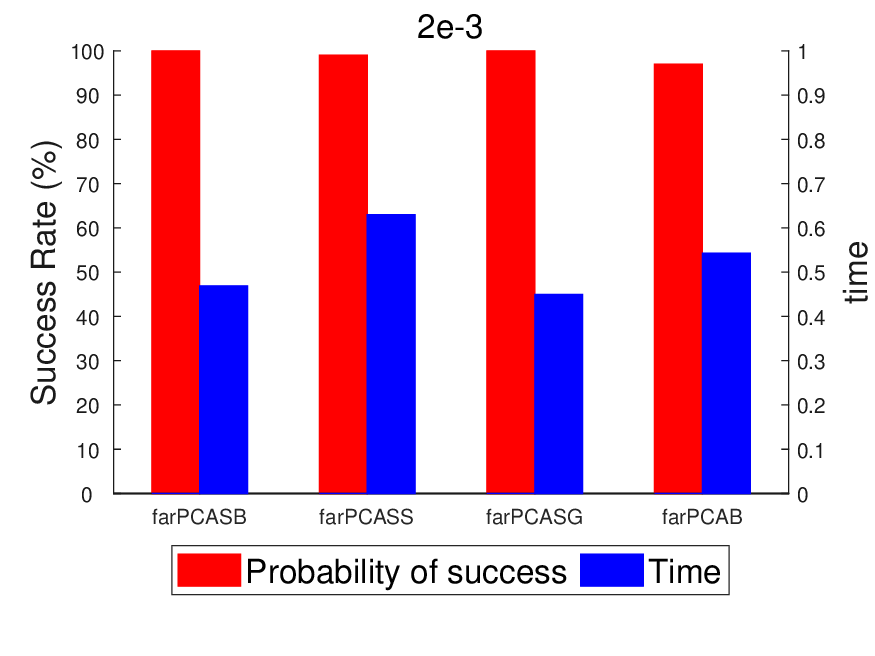}
	\end{minipage}

    \begin{minipage}{0.31\textwidth}
			\centering
			\includegraphics[width=\textwidth]{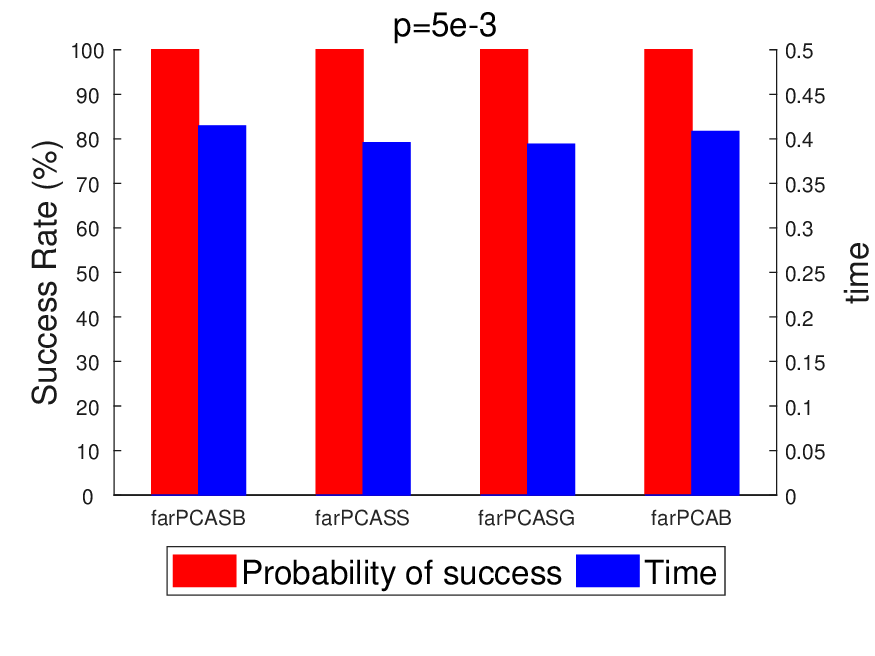}
	\end{minipage}
    \begin{minipage}{0.31\textwidth}
			\centering
			\includegraphics[width=\textwidth]{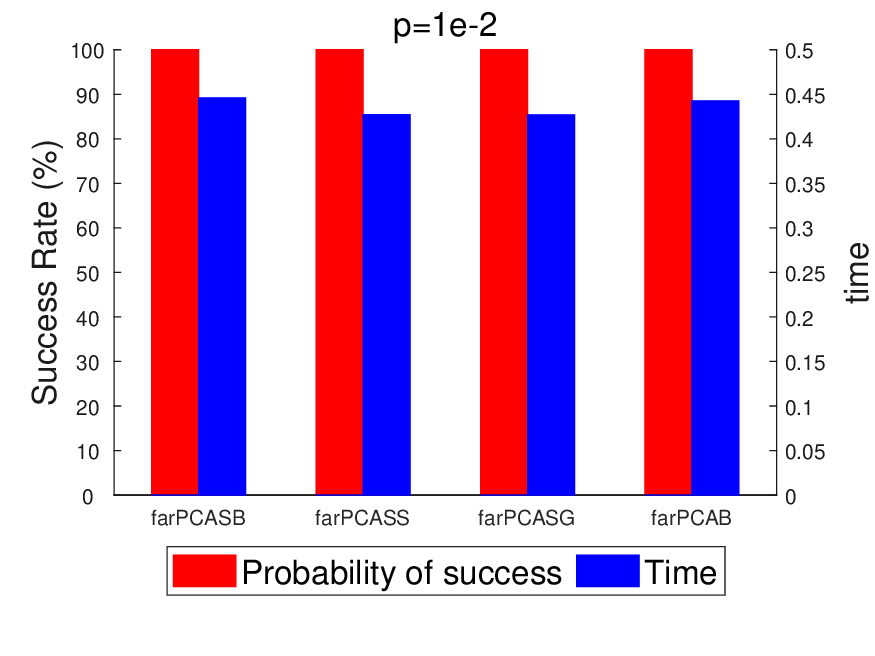}
	\end{minipage}
    \begin{minipage}{0.31\textwidth}
			\centering
			\includegraphics[width=\textwidth]{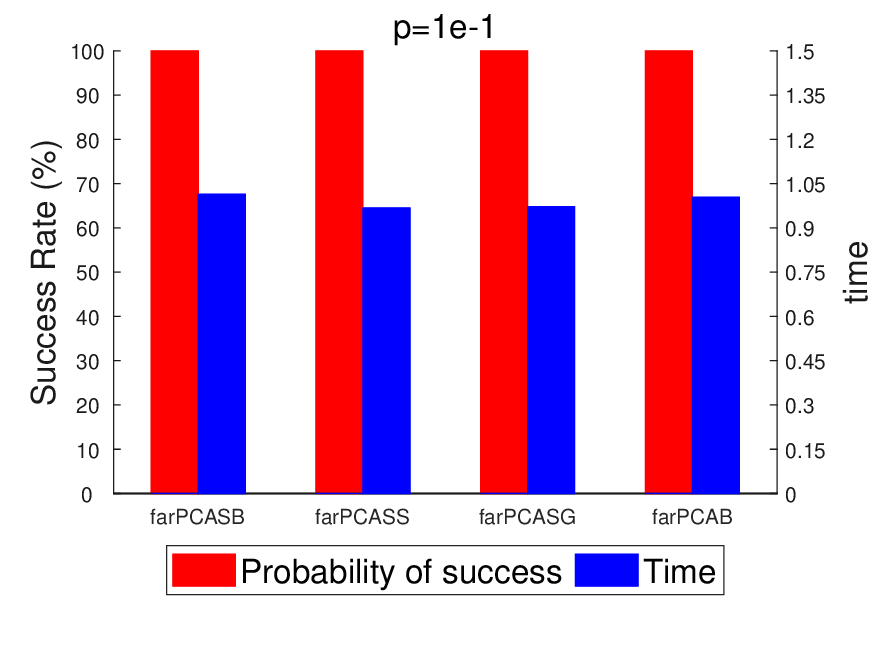}
	\end{minipage}
    \end{subfigure}
   	\caption{Success rates and execution times of \texttt{farPCASB}, \texttt{farPCASS}, \texttt{farPCASG}, and \texttt{farPCAB} for Matrix $1$ sized $5000\times 5000$ across varying values of $p$ in $100$ trials.}
   \label{fig3} 
\end{figure}

\subsection{Results for real data}\label{sec5.4}
We assess \texttt{farPCA}, \texttt{farPCASB} with $p_1=\max\{10^{-3},\ln(\max\{m,n\})/\max\{m,n\}\}$, \texttt{farPCASS}, \texttt{farPCASG}, and \texttt{farPCAB} with $p_2=\max\{10^{-3},10/\max\{m,n\}\}$, \texttt{randUBV}, and \texttt{svds} on real data, where $m,n$ are the dimensions of the target matrix. The real data is a $3168\times 4752\times 3$ dense matrix, depicting the RGB image of a spruce pine. By reorganizing this matrix's structure into a $9504\times 4752$ matrix, we compare these algorithms with the threshold $\varepsilon=0.1$. As described in \cite{hallman2022block}, for the fixed-precision problem, after applying these algorithms to attain the approximate SVD, the post-processing strategy can be adopted to truncate the SVD of $\mathbf{A}$ to the lowest rank that fulfills the specified accuracy threshold. We set block size $b=\min\{\max\{20,\lfloor\min(m, n)/100\rfloor\},50\}$ and power parameter $P=1$ or $5$. The maximum number of iterations is set to $\lceil{n}/{2b}\rceil$. The results from a single execution are shown in Table \ref{table2}, where $l$ and $r$ denote the rank produced by algorithms and the truncated rank after the post-processing, respectively. Given that \texttt{svds} is incapable of addressing the fixed-precision factorization problem, we use $r$ computed by \texttt{farPCA} with $P = 5$ as the input rank ($l$) for \texttt{svds}, record its execution time, and further determine the optimal rank $r$.

\begin{table}[htbp]
\caption{Results for image data with $\varepsilon=0.1$. The terms ``$t_{dec}$" ``$t_{post}$" ``$t_{total}$" mean the running time required for algorithms to decompose, the time for post-processing and obtaining the final approximate SVD steps, and the total time, with all time reported in seconds. ``$\varepsilon_{opt}$" represents the relative Frobenius-norm error, and $l$ and $r$ denote the rank produced by algorithms and the truncated rank after the post-processing, respectively. Outside of \texttt{svds}, the fastest algorithm and the fastest near-optimal algorithm (the smallest $r$) are highlighted in bold, with speedup factors of the accelerated algorithms relative to \texttt{farPCA} shown in parentheses.}
\scriptsize
\begin{center}
\begin{tabular}{cccccccc}
\hline
\multicolumn{2}{c}{Method} &$t_{dec}$ &$t_{post}$ &$t_{total}$ &$l$ &$r$ &$\varepsilon_{opt}$ \cr
\hline
\multicolumn{2}{c}{\texttt{randUBV}}&0.51 &0.04 &0.55 &611 &451 &9.99e-2\cr
\hline
\multirow{5}{*}{$P=1$}&Gaussian (\texttt{farPCA}) &0.61 &0.01 &0.62 &470 &467 &9.99e-2\cr
 &\texttt{farPCASB} &0.51 &0.01 &0.52(16.1\%) &470 &467 &9.997e-2\cr
 &\texttt{farPCASS} &0.50 &0.01 &\textbf{0.51}(17.7\%) &470 &467 &9.997e-2\cr
 &\texttt{farPCASG} &0.53 &0.01 &0.54(12.9\%) &470 &467 &9.998e-2\cr
 &\texttt{farPCAB} &0.51 &0.01 &0.52(16.1\%) &470 &467 &9.997e-2\cr
\hline
\multirow{5}{*}{$P=5$}&Gaussian (\texttt{farPCA}) &1.70 &0.01 &1.71 &470 &427 &9.99e-2\cr
 &\texttt{farPCASB} &1.57 &0.01 &\textbf{1.58}(7.6\%) &470 &427 &9.99e-2\cr
 &\texttt{farPCASS} &1.57 &0.01 &\textbf{1.58}(7.6\%) &470 &427 &9.99e-2\cr
 &\texttt{farPCASG} &1.57 &0.01 &\textbf{1.58}(7.6\%) &470 &427 &9.99e-2\cr
 &\texttt{farPCAB} &1.61 &0.01 &1.62(5.3\%) &470 &427 &9.99e-2\cr
\hline
\multicolumn{2}{c}{\texttt{svds}} &-- &-- &28.00 &427 &426 &9.98e-2\cr
\hline
\end{tabular}    
\end{center}
\label{table2}
\end{table}

The results again validate that the proposed algorithms not only meet the accuracy requirements automatically but also typically demonstrate significant efficiency compared to \texttt{farPCA}. In comparison to \texttt{randUBV}, these algorithms incur similar or even lower computational costs while maintaining the flexibility to achieve nearly optimal approximation results. Moreover, these proposed algorithms can produce near-optimal outcomes with significantly lower computational costs than \texttt{svds}, while possessing the capacity to automatically terminate the algorithm once the approximation error reaches an anticipated accuracy.

\subsection{Stability}\label{sec5.5}
We now explore cases where condition \eqref{eq3.1} is unmet and examine algorithm stability with respect to the probability parameter $p$. We still construct $\mathbf{A}$ as the form of Matrix $1$, except that $\mathbf{V}_A$ is a block diagonal matrix with block $\mathbf{V}_i,i=1,\cdots,d$ being $(n/d)\times(n/d)$ orthogonal matrices attained by orthonormalizing the standard Gaussian matrix. Here $d$ is a parameter that measures the sparsity of $\mathbf{V}_A$-it is diagonal when $d=n$ and dense when $d=1$. Results are delineated in Table \ref{table4}.
\begin{table}[htbp]
\resizebox{\linewidth}{!}{
\centering
\begin{tabular}{c|ccc|ccc|ccc|ccc}
\hline
\multirow{2}{*}{$p$} &\multicolumn{3}{c|}{\texttt{farPCASB}}&\multicolumn{3}{c|}{\texttt{farPCASS}} &\multicolumn{3}{c|}{\texttt{farPCASG}} &\multicolumn{3}{c}{\texttt{farPCAB}}\cr &$d_1$&$d_2$&$d_3$&$d_1$&$d_2$&$d_3$&$d_1$&$d_2$&$d_3$&$d_1$&$d_2$&$d_3$\cr
 \hline
$ln(5000)/5000$&100&63 &37 &99 &58 &39 &94 &65 &42 &95 &63 &38\cr
2e-3&100&79 &37 &98 &68 &49 &100&74 &56 &98 &71 &43\cr
1e-2&100&100&100&100&100&100&100&100&95 &100&100&92\cr
\hline
\end{tabular}
}
\caption{Success rates across different $d$ values ($d_1=1$, $d_2=500$, $d_3=5000$) and $p$ values for a $5000\times5000$ matrix in $100$ simulations for the fixed-precision problem with threshold $\varepsilon=1\times 10^{-4}$. The block size, power parameter, and the maximum number of iterations are the same as in subsection \ref{sec5.4}.}
\label{table4}
\end{table}

Table \ref{table4} further shows that these acceleration algorithms remain effective for larger values of 
$p$, with \texttt{farPCASB} demonstrating the highest stability, followed by \texttt{farPCASG}, which outperforms \texttt{farPCASS} and \texttt{farPCAB} in terms of stability. Therefore, we recommend \texttt{farPCASB} for maximum robustness and \texttt{farPCASG} for a balance between efficiency and stability.

\section{Concluding Remark}\label{sec6}
This paper introduces the standardized Bernoulli, sparse sign, and sparse Gaussian matrices as replacements for the standard Gaussian matrix in \texttt{farPCA}, aiming to improve computational efficiency in low-rank approximation. These alternative matrices require less computational cost during matrix-matrix multiplications and, under mild conditions, converge in distribution to a standard Gaussian matrix when multiplied by an orthogonal matrix. Therefore, both theoretically and empirically, the three corresponding proposed algorithms maintain performance equivalent to \texttt{farPCA} asymptotically while achieving faster computation, making them a superior alternative to \texttt{farPCA}. Among these algorithms, we recommend \texttt{farPCASB} for maximum robustness and \texttt{farPCASG} for a balance between efficiency and stability. Fortunately, the cases that hinder algorithm efficiency, which occur when condition \eqref{eq3.1} is not satisfied, are rarely encountered in practical applications. A related criterion from \cite{candes2012exact} suggests that for successful matrix completion, the singular vectors of the target matrix should be well-distributed across all dimensions.

Future research will focus on determining whether the right singular matrix of the target matrix is sufficiently "well-defined" for use with the acceleration algorithms. Additionally, random matrices with i.i.d. elements (mean 0, variance 1) may exhibit similar properties, prompting further exploration of matrices with enhanced performance. We also plan to study adaptive strategies for tuning the parameter $p$ across iterations to balance computational efficiency and stability. Finally, integrating these methods with advanced techniques and applying them to machine learning and related fields will be a key area of focus.

\section*{Funding} \noindent This work was funded by the Key technologies for coordination and interoperation of power distribution service
  	resource, Grant No. 2021YFB2401300.

\bibliographystyle{spmpsci}
\bibliography{sn-bibliography}
\end{document}